\DeclareMathOperator{\GL}{GL}
\DeclareMathOperator{\U}{U}
\DeclareMathOperator{\T}{T}
\DeclareMathOperator{\SU}{SU}
\DeclareMathOperator{\SO}{SO}
\DeclareMathOperator{\PU}{PU}
\DeclareMathOperator{\Hom}{Hom}
\DeclareMathOperator{\Res}{Res}
\DeclareMathOperator{\Ind}{Ind}
\DeclareMathOperator{\Rep}{Rep}
\DeclareMathOperator{\Tr}{Tr}
\DeclareMathOperator{\Pker}{Pker}
\newcommand{\ket}[1]{\left|#1\right\rangle}
\newtheorem{theorem}{Theorem}[section]
\newtheorem{corollary}{Corollary}[theorem]
\newtheorem{lemma}[theorem]{Lemma}
\newtheorem{proposition}[theorem]{Proposition}
\theoremstyle{remark}
\newtheorem*{remark}{Remark}
\theoremstyle{definition}
\newtheorem{definition2}[theorem]{Definition}
\theoremstyle{remark}
\newtheorem{openProblem}{Open Problem}
\title{The Power of a Single Qubit: Two-way Quantum Finite Automata and the Word Problem}
\author{Zachary Remscrim \\ Department of Mathematics \\ MIT \\ remscrim@mit.edu}
\date{}
\begin{document}

\maketitle

\begin{abstract}
	The two-way finite automaton with quantum and classical states (2QCFA), defined by Ambainis and Watrous, is a model of quantum computation whose quantum part is extremely limited; however, as they showed, 2QCFA are surprisingly powerful: a 2QCFA, with a single qubit, can recognize, with bounded error, the language $L_{eq}=\{a^m b^m :m \in \mathbb{N}\}$ in expected polynomial time and the language $L_{pal}=\{w \in \{a,b\}^*:w \text{ is a palindrome}\}$ in expected exponential time.  
	
	We further demonstrate the power of 2QCFA by showing that they can recognize the word problems of many groups. In particular 2QCFA, with a single qubit and algebraic number transition amplitudes, can recognize, with bounded error, the word problem of any finitely generated virtually abelian group in expected polynomial time, as well as the word problems of a large class of linear groups in expected exponential time. This latter class (properly) includes all groups with context-free word problem. We also exhibit results for 2QCFA with any constant number of qubits.
	
	As a corollary, we obtain a direct improvement on the original Ambainis and Watrous result by showing that $L_{eq}$ can be recognized by a 2QCFA with better parameters. As a further corollary, we show that 2QCFA can recognize certain non-context-free languages in expected polynomial time.
	
	In a companion paper, we prove matching lower bounds, thereby showing that the class of languages recognizable with bounded error by a 2QCFA in expected \textit{subexponential} time is properly contained in the class of languages recognizable with bounded error by a 2QCFA in expected \textit{exponential} time.
	
\end{abstract}

\section{Introduction}\label{sec:intro}

The theory of quantum computation has made amazing strides in the last several decades. Landmark results, like Shor's polynomial time quantum algorithm for integer factorization \cite{shor1994algorithms}, Grover's algorithm for unstructured search \cite{grover1996fast}, and the linear system solver of Harrow, Hassidim, and Lloyd \cite{harrow2009quantum}, have provided remarkable examples of natural problems for which quantum computers seem to have an advantage over their classical counterparts. These theoretical breakthroughs have provided strong motivation to construct quantum computers. However, while significant advancements have been made, the experimental quantum computers that exist today are still quite limited, and are certainly not capable of implementing, on a large scale, algorithms designed for general quantum Turing machines. This naturally motivates the study of more restricted models of quantum computation.

In this paper, our goal is to understand the computational power of a small number of qubits, especially the power of a single qubit. To that end, we study two-way finite automata with quantum and classical states (2QCFA), introduced by Ambainis and Watrous \cite{ambainis2002two}. Informally, a 2QCFA is a two-way deterministic finite automaton (2DFA) that has been augmented with a quantum register of constant size, i.e., a constant number of qubits. The quantum part of the machine is extremely limited; however, the model is surprisingly powerful. In particular, Ambainis and Watrous \cite{ambainis2002two} showed that a 2QCFA, using only one qubit, can recognize, with bounded error, the language $L_{eq}=\{a^m b^m :m \in \mathbb{N}\}$ in expected polynomial time and the language $L_{pal}=\{w \in \{a,b\}^*:w \text{ is a palindrome}\}$ in expected exponential time. This clearly demonstrated that 2QCFA are more powerful than 2DFA, which recognize precisely the regular languages \cite{rabin1959finite}. Moreover, as it is known that two-way probabilistic finite automata (2PFA) can recognize $L_{eq}$ with bounded error in exponential time \cite{freivalds1981probabilistic}, but not in subexponential time \cite{greenberg1986lower}, and cannot recognize $L_{pal}$ with bounded error in any time bound \cite{dwork1992finite}, this result also demonstrated the superiority of 2QCFA over 2PFA. 

We investigate the ability of 2QCFA to recognize the word problem of a group. Informally, the word problem for a group $G$ involves determining if the product of a finite sequence of group elements $g_1,\ldots,g_k \in G$ is equal to the identity element of $G$. Word problems for various classes of groups have a rich and well-studied history in computational complexity theory, as there are many striking relationships between certain algebraic properties of a group $G$ and the computational complexity of its word problem $W_G$. For example, $W_G \in \mathsf{REG} \Leftrightarrow G$ is finite \cite{anisimov1971group}, $W_G \in \mathsf{CFL} \Leftrightarrow W_G \in \mathsf{DCFL} \Leftrightarrow G$ is a finitely generated virtually free group \cite{muller1983groups}, and $W_G \in \mathsf{NP} \Leftrightarrow G$ is a finitely generated subgroup of a finitely presented group with polynomial Dehn function \cite{birget2002isoperimetric}.

For a quantum model, such as the 2QCFA, word problems are a particularly natural class of languages to study. There are several results \cite{brodsky2002characterizations,yakaryilmaz2010succinctness,yakaryilmaz2010languages} which show that certain (generally significantly more powerful) QFA variants can recognize the word problems of particular classes of groups (see the excellent survey \cite{ambainis2015automata} for a full discussion of the many QFA variants). Moreover, there are also results concerning the ability of QFA to recognize certain languages that are extremely closely related to word problems; in fact, the languages $L_{eq}$ and $L_{pal}$ considered by Ambainis and Watrous \cite{ambainis2002two} are each closely related to a word problem.  

Fundamentally, the laws of quantum mechanics sharply constrain the manner in which the state of the quantum register of a 2QCFA may evolve, thereby forcing the computation of a 2QCFA to have a certain algebraic structure. Similarly, the algebraic properties of a particular group $G$ impose a corresponding algebraic structure on its word problem $W_G$. For certain classes of groups, the algebraic structure of $W_G$ is extremely compatible with the algebraic structure of the computation of a 2QCFA; for other classes of groups, these two algebraic structures are in extreme opposition.

In this paper, we show that there is a broad class of groups for which these algebraic structures are quite compatible, which enables us to produce 2QCFA that recognize these word problems. As a corollary, we show that $L_{eq}$ can be recognized by a 2QCFA with better parameters than in the original Ambainis and Watrous result \cite{ambainis2002two}. 

In a separate paper \cite{remscrim2019lower}, we establish matching lower bounds on the running time of a 2QCFA (and, more generally, a \textit{quantum Turing machine} that uses sublogarithmic space) that recognizes these word problems, thereby demonstrating the optimality of these results; this allows us to prove that the class of languages recognizable with bounded error by 2QCFA in expected \textit{subexponential} time is properly contained in the class of languages recognizable with bounded error by 2QCFA in expected \textit{exponential} time.

\subsection{Statement of the Main Results}\label{sec:intro:results}

We show that, for many groups $G$, the corresponding word problem $W_G$ is recognized by a 2QCFA with ``good" parameters. In order to state these results, we must make use of some terminology and notation concerning 2QCFA, the word problem of a group, and various classes of groups whose word problems are of complexity theoretic interest. A full description of the 2QCFA model can be found in Section~\ref{sec:prelim:quant}; the definition of the word problem, as well as additional group theory background, including the definitions of the various classes of groups discussed in this section, can be found in Section~\ref{sec:prelim:wordProblem}. The following definition establishes some useful notation that will allow us to succinctly describe the parameters of a 2QCFA. We use $\mathbb{R}_{>0}$ to denote the positive real numbers.

\begin{definition2}\label{def:2QCFAparams}
	For $T:\mathbb{N} \rightarrow \mathbb{N}$, $\epsilon \in \mathbb{R}_{>0}$, $d \in \mathbb{N}$, and $\mathbb{A} \subseteq \mathbb{C}$, the complexity class $\mathsf{coR2QCFA}(T,\epsilon,d,\mathbb{A})$ consists of all languages $L \subseteq \Sigma^*$ for which there is a 2QCFA $M$, which has $d$ quantum basis states and transition amplitudes in $\mathbb{A}$, such that, $\forall w \in \Sigma^*$, the following holds: $M$ runs in expected time $O(T(
	\lvert w \rvert))$, $\Pr[M \text{ accepts } w]+\Pr[M \text{ rejects } w]=1$, $w \in L \Rightarrow \Pr[M \text{ accepts } w]=1$, and $w \not \in L \Rightarrow \Pr[M \text{ rejects } w] \geq 1-\epsilon$.	
\end{definition2}

The focus on the transition amplitudes of a 2QCFA warrants a bit of additional justification, as while it is standard to limit the transition amplitudes of a Turing machine in this way, it is common for finite automata to be defined without any such limitation.  For many finite automata models, applying such a constraint would be superfluous; for example, the class of languages recognized with bounded error and in expected time $2^{n^{o(1)}}$ by a 2PFA with no restriction at all on its transition amplitudes is precisely the regular languages \cite{dwork1990time}. However, the power of the 2QCFA model is quite sensitive to the choice of transition amplitudes. A 2QCFA with non-computable transition amplitudes can recognize undecidable languages, with bounded error and in expected polynomial time \cite{say2017magic}; whereas, 2QCFA with transition amplitudes restricted to the algebraic numbers $\overline{\mathbb{Q}}$ can only recognize languages in $\mathsf{P} \cap \mathsf{L}^2$, even if permitted unbounded error and exponential time \cite{watrous2003complexity}. In particular, the algebraic numbers are arguably the ``standard'' choice for the permitted transition amplitudes of a quantum Turing machine (QTM). It is desirable for the definition of 2QCFA to be consistent with that of QTMs as such consistency makes it more likely that techniques developed for 2QCFA could be applied to QTMs. Therefore, $\overline{\mathbb{Q}}$ is the the natural choice for the permitted transition amplitudes of a 2QCFA, though we do also consider the impact of allowing transition amplitudes in the slightly broader class $\widetilde{\mathbb{C}}=\overline{\mathbb{Q}} \cup \{e^{\pi i r}:r \in (\overline{\mathbb{Q}} \cap \mathbb{R})\}$.

We begin with a simple motivating example. For a finite alphabet $\Sigma$, a symbol $\sigma \in \Sigma$, and a word $w \in \Sigma^*$, let $\#(w,\sigma)$ denote the number of appearances of  $\sigma$ in $w$. Then the word problem for the group $\mathbb{Z}$ (the integers, where the group operation is addition) is the language $W_{\mathbb{Z}}=\{w \in \{a,b\}^*:\#(w,a)=\#(w,b)\}$. This language is closely related to the language $L_{eq}=\{a^m b^m :m \in \mathbb{N}\}$; in particular, $L_{eq}=(a^* b^*) \cap W_{\mathbb{Z}}$. More generally, the word problem for the group $\mathbb{Z}^k$ (the direct product of $k$ copies of $\mathbb{Z}$) is the language $W_{\mathbb{Z}^k}=\{w \in \{a_1,b_1,\ldots,a_k,b_k\}^*:\#(w,a_i)=\#(w,b_i), \forall i\}$.

Ambainis and Watrous \cite{ambainis2002two} showed that $L_{eq} \in \mathsf{coR2QCFA}(n^4,\epsilon,2,\widetilde{\mathbb{C}})$, $\forall \epsilon \in \mathbb{R}_{>0}$. We note that the same method would easily imply the same result for $W_{\mathbb{Z}}$, and could be further adapted to produce a similar result for $W_{\mathbb{Z}^k}$. Our first main theorem generalizes and improves upon these results in several ways. Let $\widehat{\Pi}_1$ denote the collections of all finitely generated virtually abelian groups (i.e., all groups that have a finite-index subgroup isomorphic to $\mathbb{Z}^k$, for some $k \in \mathbb{N}$, where $\mathbb{Z}^0$ is the trivial group); we will explain this choice of notation shortly. 

\begin{theorem}\label{thm:main:abelian}
	$\exists C \in \mathbb{R}_{>0},\forall G \in \widehat{\Pi}_1,\forall \epsilon \in \mathbb{R}_{>0}$, $W_G \in  (\mathsf{coR2QCFA}(n^3,\epsilon,2,\widetilde{\mathbb{C}}) \cap \mathsf{coR2QCFA}(n^C,\epsilon,2,\overline{\mathbb{Q}}))$.
\end{theorem}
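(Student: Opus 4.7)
The plan is to reduce $W_G$ to a word problem on the abelian normal subgroup $N \cong \mathbb{Z}^k$ via Reidemeister--Schreier rewriting, and then to decide the $\mathbb{Z}^k$ problem with $k$ independent single-qubit Ambainis--Watrous-style tests (one per coordinate), whose rotation angles are tuned so that a Diophantine gap estimate yields the claimed time bounds in each amplitude class.

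\textbf{Step 1 (reduction).} I would fix a normal finite-index subgroup $N \triangleleft G$ with $N \cong \mathbb{Z}^k$, set $F = G/N$, choose coset representatives $\{r_c : c \in F\}$ with $r_{1_F} = 1_G$, and let $S$ be a finite generating set of $G$. The Schreier set $T = \{r_c s r_{cs}^{-1} : c \in F,\, s \in S \cup S^{-1}\}$ generates $N$; let $v \colon T \to \mathbb{Z}^k$ be the induced coordinate map. For an input word $w = s_1 \cdots s_n$, writing $c_i$ for the coset of $s_1 \cdots s_i$ and $t_i = r_{c_{i-1}} s_i r_{c_i}^{-1} \in T$, the telescoping identity $\pi(w) = t_1 \cdots t_n \cdot r_{c_n}$ gives $\pi(w) = 1_G$ if and only if $c_n = 1_F$ and $\sum_{i=1}^n v(t_i) = 0 \in \mathbb{Z}^k$. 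Since both $c_i$ and $v(t_i)$ depend only on the pair $(c_{i-1}, s_i)$ through a finite lookup table, the classical 2DFA component of the 2QCFA tracks $c_i$ in its finite state and feeds the stream $v(t_1), \ldots, v(t_n)$ to the quantum register.

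\textbf{Step 2 (single-qubit coordinate tests).} For each $j \in \{1, \ldots, k\}$ I would run an Ambainis--Watrous-style subroutine: initialize the qubit to $\ket{0}$; on reading $s_i$, apply the planar rotation by angle $v(t_i)_j \, \theta_j$ in the $\{\ket{0}, \ket{1}\}$ plane; at end-of-tape, measure in the computational basis and reject on outcome $\ket{1}$. Paired with the Ambainis--Watrous classical random-walk ``timer'' (a biased random walk whose hitting time controls both the halt probability and the number of attempted rejections), this is a one-sided bounded-error procedure whose expected running time is polynomial in $1/\sin^2 \alpha_j$, where $\alpha_j = \theta_j \sum_i v(t_i)_j$ is the accumulated phase. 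Running the $k$ subroutines sequentially, with a qubit reset between them, and conjoining with the regular coset check $c_n = 1_F$ produces the full 2QCFA.

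\textbf{Step 3 (Diophantine tuning).} The entries of $v$ are uniformly bounded, so $\lvert \sum_i v(t_i)_j \rvert = O(\lvert w \rvert)$. For the $\widetilde{\mathbb{C}}$ case I would take $\theta_j = \pi \sqrt{p_j}$ for distinct primes $p_j$; Liouville's inequality for quadratic irrationals gives $\lvert n \sqrt{p_j} - m \rvert \ge c/\lvert n \rvert$ for all nonzero integers $n$ and integers $m$, so $\sin^2 \alpha_j = \Omega(\lvert w \rvert^{-2})$, and a careful tuning of the timer (halt-bias $\Theta(\lvert w \rvert^{-2})$, giving $\Theta(\lvert w \rvert^2)$ passes of $O(\lvert w \rvert)$ cost each) produces expected time $O(\lvert w \rvert^3)$ per coordinate, and hence overall since $k$ is constant. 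For the $\overline{\mathbb{Q}}$ case I would instead pick $\theta_j$ with $\cos \theta_j, \sin \theta_j \in \overline{\mathbb{Q}}$ and $\theta_j/\pi \notin \mathbb{Q}$ (e.g.\ $\cos \theta_j = 3/5$); the Baker--W\"ustholz theorem on linear forms in logarithms of algebraic numbers furnishes $\lvert n \theta_j - m \pi \rvert \ge c/\lvert n \rvert^D$ for some effective $D$, and the same analysis yields expected time $O(\lvert w \rvert^C)$ for a constant $C = C(D)$.

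\textbf{Main obstacle.} The hardest step will be the Diophantine estimate in the $\overline{\mathbb{Q}}$ case: forbidding $\pi$-rational rotations such as $e^{i \pi \sqrt{p}}$ forces one to invoke Baker--W\"ustholz, whose exponent is much larger than $1$, and this is precisely why only the unspecified polynomial $n^C$ can be claimed under algebraic amplitudes. A secondary difficulty is arranging the Reidemeister--Schreier rewrite so that the quantum register carries only a single scalar accumulated phase per coordinate, rather than being entangled with noncommuting matrix data coming from the $F$-action on $N$; this single-coordinate factorization is what lets the $\widetilde{\mathbb{C}}$ analysis use the optimal Diophantine exponent $1$ of a quadratic irrational and thereby improves the Ambainis--Watrous $n^4$ bound for $L_{eq}$ down to $n^3$ for the whole class $\widehat{\Pi}_1$.
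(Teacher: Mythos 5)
Your overall architecture matches the paper's: rewrite via a finite-index (virtually $\mathbb{Z}^k$) subgroup with the coset tracked classically (the paper's Lemma~\ref{thm:algorithm:overgroup}, which does not even need normality), run single-qubit rotation tests per $\mathbb{Z}$ factor (the paper's diagonal DFRs and Lemma~\ref{thm:algorithm:measureRep:diag}), and pair them with the Ambainis--Watrous random-walk timer (Lemma~\ref{thm:algorithm:polyDirectWordAlg}). Your $\overline{\mathbb{Q}}$ half is essentially the paper's Lemma~\ref{thm:dfr:init:Z:alg}: a rotation with algebraic entries such as $\cos\theta=3/5$, plus Baker's theorem, gives $\lVert q\theta/\pi\rVert\geq c\lvert q\rvert^{-D}$ and hence $W_G\in\mathsf{coR2QCFA}(n^C,\epsilon,2,\overline{\mathbb{Q}})$ for an unspecified $C$. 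That part is fine.

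The gap is in the $n^3$ claim for $\widetilde{\mathbb{C}}$. Using a single quadratic irrational $\theta_j=\pi\sqrt{p_j}$ per coordinate gives only the Liouville/Dirichlet exponent $1$, i.e.\ a per-round rejection probability of $\Omega(n^{-2})$ --- which is exactly the Ambainis--Watrous setup for $L_{eq}$ and is why their bound is $n^4$, not $n^3$. Your accounting assumes each round costs $O(n)$, but the round must also include the timer subroutine producing an accept-coin of bias $\Theta(\epsilon n^{-2})$, which (as the paper's Remark after Theorem~\ref{thm:main:abelian} emphasizes, and as is built into the $O(n^2)$ cost of $\mathcal{R}(m,y)$ in Lemma~\ref{thm:algorithm:polyDirectWordAlg}) is the actual bottleneck: $\Theta(n^2)$ expected rounds of cost $O(n^2)$ each gives $O(n^4)$. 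Your closing sentence has the logic reversed --- exponent $1$ is what produces $n^4$. Moreover, no choice of a \emph{single} angle can fix this, since by Dirichlet's theorem $\lVert q\alpha\rVert<1/q$ infinitely often for every irrational $\alpha$, so a single rotation can never guarantee rejection probability better than $O(n^{-2})$. The paper's route to $n^3$ (Lemma~\ref{thm:dfr:init:Z:nonAlg}) is to attach $k=1+\lfloor 2/\delta\rfloor\geq 3$ rotation angles $\alpha_1,\dots,\alpha_k\in\overline{\mathbb{Q}}\cap\mathbb{R}$, linearly independent with $1$ over $\mathbb{Q}$, to \emph{each} $\mathbb{Z}$ factor, and to invoke Schmidt's simultaneous approximation theorem (Proposition~\ref{thm:schmidt}): for every $q\neq 0$ some $j$ satisfies $\lVert q\alpha_j\rVert\geq C\lvert q\rvert^{-\delta/2}$ with $\delta=0.9<1$. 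This yields rejection probability $\Omega(n^{-0.9})$, so the timer only needs bias $\Theta(n^{-1})$ (cost $O(n^2)$) and the expected number of rounds drops to $O(n)$, giving $O(n^3)$. This multi-representation simultaneous-approximation step is the missing idea in your proposal.
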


By the above observation that $L_{eq}=(a^* b^*) \cap W_{\mathbb{Z}}$, the following corollary is immediate. 

\begin{corollary}
	$\exists C \in \mathbb{R}_{>0},\forall \epsilon \in \mathbb{R}_{>0}$, $L_{eq}\in  (\mathsf{coR2QCFA}(n^3,\epsilon,2,\widetilde{\mathbb{C}}) \cap \mathsf{coR2QCFA}(n^C,\epsilon,2,\overline{\mathbb{Q}}))$. 
\end{corollary}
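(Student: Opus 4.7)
The plan is to reduce the corollary directly to \Cref{thm:main:abelian} applied to the group $G = \mathbb{Z}$, using the factorization $L_{eq} = (a^*b^*) \cap W_{\mathbb{Z}}$ that the paper has already observed. Since $\mathbb{Z}$ is finitely generated and abelian, it lies in $\widehat{\Pi}_1$, and the theorem therefore produces, for every $\epsilon \in \mathbb{R}_{>0}$, a $2$-quantum-basis-state 2QCFA $M_{\mathbb{Z}}$ recognizing $W_{\mathbb{Z}}$ in expected time $O(n^3)$ with amplitudes in $\widetilde{\mathbb{C}}$, and a separate 2QCFA recognizing $W_{\mathbb{Z}}$ in expected time $O(n^C)$ with amplitudes in $\overline{\mathbb{Q}}$, for the absolute constant $C$ furnished by the theorem. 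What remains is to absorb the side condition that the input has the form $a^m b^n$ (rather than an arbitrary word in $\{a,b\}^*$) without disturbing any of these parameters.

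The key step is a standard closure lemma: for any regular language $R$ and any $L \in \mathsf{coR2QCFA}(T,\epsilon,d,\mathbb{A})$, the intersection $L \cap R$ again lies in $\mathsf{coR2QCFA}(T,\epsilon,d,\mathbb{A})$. I would prove this by a direct product construction. A 2QCFA already carries a classical finite-state control together with a quantum register; given a DFA $D_R$ for $R$, I replace the classical state set $Q$ of the 2QCFA by $Q \times Q_{D_R}$, where the $D_R$-component advances deterministically each time the head moves and ignores the quantum register entirely. Transition amplitudes, the number of quantum basis states $d$, and the (expected) running time per step are unchanged, so the expected running time bound $O(T(n))$ is preserved. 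The accepting classical states are those pairs whose $D_R$-component is accepting and whose $Q$-component was accepting in the original machine; symmetrically for rejection. Inputs in $L \cap R$ are still accepted with probability $1$, and inputs outside $L \cap R$ are rejected either because $R$ rejects them deterministically or because $M$ rejects them with probability at least $1-\epsilon$, so the error bound carries over verbatim.

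Applying the closure lemma with $R = a^*b^*$ (trivially regular) and $L = W_{\mathbb{Z}}$ then yields $L_{eq} = W_{\mathbb{Z}} \cap a^*b^* \in \mathsf{coR2QCFA}(n^3,\epsilon,2,\widetilde{\mathbb{C}}) \cap \mathsf{coR2QCFA}(n^C,\epsilon,2,\overline{\mathbb{Q}})$ for every $\epsilon$, using the \emph{same} constant $C$ as in the theorem. I do not anticipate a substantive obstacle: the only nontrivial input, namely the polynomial-time 2QCFA for $W_{\mathbb{Z}}$ itself, is supplied by \Cref{thm:main:abelian}, and the product-with-a-DFA step is the kind of construction that the 2QCFA model accommodates essentially for free precisely because its classical control is already a 2DFA.
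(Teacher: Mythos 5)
Your proposal is correct and takes essentially the same route as the paper, which derives the corollary as ``immediate'' from \Cref{thm:main:abelian} together with the observation $L_{eq}=(a^*b^*)\cap W_{\mathbb{Z}}$; you merely make explicit the closure of $\mathsf{coR2QCFA}(T,\epsilon,d,\mathbb{A})$ under intersection with a regular language. The one small care point is that the 2QCFA head is two-way, so the $D_R$-component should not blindly ``advance on every head move''; instead one should, e.g., perform a single preliminary deterministic left-to-right pass to check membership in $a^*b^*$ (rejecting immediately on failure) before running the machine for $W_{\mathbb{Z}}$, which costs only $O(n)$ time and changes none of the stated parameters.
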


The above corollary improves upon the result of Ambainis and Watrous \cite{ambainis2002two} in two distinct senses. Firstly, using the same set of permissible transition amplitudes, our result has a better expected running time. Secondly, our result shows that $L_{eq}$ can be recognized by a 2QCFA with transition amplitudes in $\overline{\mathbb{Q}}$, which still runs in expected polynomial time.

Let $\mathsf{CFL}$ denote the context-free languages (languages recognized by non-deterministic pushdown automata), $\mathsf{OCL}$ denote the one-counter languages (languages recognized by non-deterministic pushdown automata with single-symbol stack alphabet) and $\mathsf{poly{-}CFL}$ (resp. $\mathsf{poly{-}OCL}$) denote the intersection of finitely many context-free (resp. one-counter) languages. As $W_G \in \mathsf{poly{-}OCL} \Leftrightarrow G  \in \widehat{\Pi}_1$ \cite{holt2008groups}, the following corollary is also immediate. 

\begin{corollary}
	$\exists C \in \mathbb{R}_{>0},\forall W_G \in \mathsf{poly{-}OCL},\forall \epsilon \in \mathbb{R}_{>0}$, $W_G \in  (\mathsf{coR2QCFA}(n^3,\epsilon,2,\widetilde{\mathbb{C}})\cap$ \newline $\mathsf{coR2QCFA}(n^C,\epsilon,2,\overline{\mathbb{Q}}))$.
\end{corollary}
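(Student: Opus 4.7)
The plan is to reduce $W_G$ for virtually abelian $G$ to a coordinate-wise additive check in $\mathbb{Z}^k$, solve each coordinate check via the single-qubit irrational-rotation technique of Ambainis and Watrous, and bound the resulting Diophantine separation by Liouville's theorem in the $\widetilde{\mathbb{C}}$ regime and by Baker's theorem on linear forms in logarithms in the $\overline{\mathbb{Q}}$ regime. First I would pass to the normal core of the given finite-index abelian subgroup, so we may assume $H \trianglelefteq G$, $H \cong \mathbb{Z}^k$, and $Q := G/H$ is finite. Fixing Schreier coset representatives, conjugation and Schreier rewriting produce a fixed finite table which, from the pair (current coset $C_t$, current input symbol $s_t$), outputs a bounded integer vector $v_t \in \mathbb{Z}^k$ with $\|v_t\|_\infty \leq M_G$. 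A classical DFA can track $C_t$ as the 2QCFA scans $w = s_1 \cdots s_n$ left-to-right, and $w \in W_G$ iff the DFA halts in the identity coset \emph{and} $n_j := \sum_t v_t[j] = 0$ for every $j \in \{1, \ldots, k\}$.

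Next I would design the quantum tester around a single fixed angle $\theta$ depending only on the amplitude class. Each outer iteration of the 2QCFA selects a random coordinate $j \in \{1, \ldots, k\}$ via iterated Hadamard-and-measure on the qubit, re-initializes the qubit to $\ket{0}$, scans $w$ while applying at step $t$ the single-qubit rotation by $v_t[j]\theta$ (realized by Hadamard-sandwiching a controlled phase of $e^{i\theta}$, so that all required amplitudes lie in the prescribed class), and measures; on outcome $\ket{1}$ it halts and \emph{rejects}. On $\ket{0}$, an Ambainis--Watrous-style probabilistic-acceptance wrapper either halts, outputting the DFA's verdict on the coset condition, or restarts the outer loop. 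If $w \in W_G$ the qubit never produces $\ket{1}$ and the machine accepts with probability $1$. Otherwise, either the coset condition fails (in which case the DFA-verdict halt always rejects) or some $n_{j_0} \neq 0$; in the latter case each iteration targets coordinate $j_0$ with probability $\geq 1/k$ and then rejects with probability $\sin^2(n_{j_0}\theta)$. Balancing the acceptance bias against this worst-case $\sin^2$ yields one-sided error $\leq \epsilon$ with expected running time $O(n / p_{\min}(n))$, where $p_{\min}(n) := \min_{1 \leq |\ell| \leq n M_G} \sin^2(\ell\theta)$.

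It remains to bound $p_{\min}(n)$ in each regime. For $\widetilde{\mathbb{C}}$, take $\theta = 2\pi\sqrt{2}$, so that every needed amplitude $e^{iv_t[j]\theta}$ lies in $\{e^{i\pi r}: r \in \overline{\mathbb{Q}} \cap \mathbb{R}\}$; Liouville's theorem for the quadratic irrational $\sqrt{2}$ gives $|\ell\sqrt{2} - m| \geq c\ell^{-1}$, hence $p_{\min}(n) \geq \Omega(n^{-2})$ and expected running time $O(n^3)$. For $\overline{\mathbb{Q}}$, take $\theta$ with $e^{i\theta} = (3+4i)/5 \in \overline{\mathbb{Q}}$, which has infinite order by unique factorization in $\mathbb{Z}[i]$; Baker's theorem on linear forms in the two logarithms $\log e^{i\theta}$ and $\log(-1)$, in its effective Matveev-type form, then yields $|\ell\theta - m\pi| \geq c\ell^{-B}$ for a universal constant $B$ depending only on $\theta$. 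This gives $p_{\min}(n) \geq \Omega(n^{-2B})$ and expected running time $O(n^{2B+1}) =: O(n^C)$.

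The crucial design choice that makes $C$ uniform over $G \in \widehat{\Pi}_1$ is the \emph{coordinate-wise} strategy: testing the $k$ coordinates of $\vec{n}$ one at a time reduces every invocation of Baker's theorem to the two-logarithm case, whose exponent $B$ depends only on the fixed $\theta$ and not on $k = k(G)$ (a single $k$-dimensional rotation would instead invoke a $(k+1)$-logarithm bound with exponent growing in $k$). I expect the main technical obstacles to be (a) extracting an explicit effective $B$ from Baker's theorem for this specific two-logarithm form, and (b) implementing the random coordinate selection and the Ambainis--Watrous probabilistic-acceptance wrapper using only a single qubit with transition amplitudes in the prescribed class.
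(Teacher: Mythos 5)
Your overall architecture -- use the Holt et al.\ characterization $W_G \in \mathsf{poly{\text -}OCL} \Leftrightarrow G \in \widehat{\Pi}_1$ (which you invoke implicitly by starting from ``virtually abelian $G$''), track the finite quotient classically via Schreier rewriting, track the $\mathbb{Z}^k$ part with single-qubit rotations, and balance a tiny acceptance bias against a Diophantine lower bound on the rejection probability -- is exactly the paper's (Theorem~\ref{thm:main:abelian} via Lemmas~\ref{thm:algorithm:polyDirectWordAlg} and~\ref{thm:algorithm:overgroup}). Your $\overline{\mathbb{Q}}$ branch is essentially identical to Lemma~\ref{thm:dfr:init:Z:alg}: same point $(3+4i)/5$ on the circle, same two-logarithm instance of Baker's theorem, same reason the exponent $C$ is uniform over $G$.

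The gap is in the $\widetilde{\mathbb{C}}$, $O(n^3)$ branch. With a single rotation angle you can never get a separation better than $p_{\min}(n) = \Theta(n^{-2})$: by Dirichlet's theorem every irrational $\alpha$ admits infinitely many $\ell$ with $\lVert \ell\alpha\rVert < 1/\ell$, so Liouville's bound for $\sqrt{2}$ is already optimal and $\min_{|\ell|\le nM_G}\sin^2(\ell\theta) = \Theta(n^{-2})$. Your claimed running time $O(n/p_{\min}(n)) = O(n^3)$ then rests on the assumption that each outer iteration -- including the generation of an acceptance coin with bias $\Theta(\epsilon\, n^{-2})$ -- costs $O(n)$ expected time. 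The paper's accounting (following Ambainis--Watrous) charges $O(n^2)$ for that coin-generating subroutine $\mathcal{R}$, and the introductory Remark explicitly identifies this as \emph{the} bottleneck; under that accounting your construction gives $O(n^2)$ iterations at $O(n^2)$ each, i.e.\ $O(n^4)$ -- precisely the original Ambainis--Watrous bound, not the claimed $n^3$. The paper's route to $n^3$ is different in exactly this respect: Lemma~\ref{thm:dfr:init:Z:nonAlg} uses Schmidt's theorem on \emph{simultaneous} Diophantine approximation to choose $1+\lfloor 2/\delta\rfloor$ algebraic irrationals $\alpha_1,\dots,\alpha_k$ such that for every $\ell\neq 0$ at least one $\lVert \ell\alpha_j\rVert \ge C|\ell|^{-\delta/2}$; taking $\delta = 0.9$ gives separation $n^{-0.9}$, hence only $O(n)$ iterations, and $O(n^3)$ total even with the $O(n^2)$ coin cost. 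So you either need to justify an $O(n)$ amortized implementation of the biased coin (in which case the paper's own bound would improve to $n^2$, which it does not claim), or you need the multi-angle Schmidt construction; as written, the single-$\sqrt{2}$ argument does not establish membership in $\mathsf{coR2QCFA}(n^3,\epsilon,2,\widetilde{\mathbb{C}})$.
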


Moreover, as $W_G \in \mathsf{poly{-}OCL} \cap \mathsf{CFL} \Leftrightarrow G$ is a finitely generated virtually cyclic group \cite{holt2008groups}, the above corollary exhibits a wide class of non-context-free languages that are recognizable by a 2QCFA in polynomial time: the word problem of any group that is virtually $\mathbb{Z}^k$, $k \geq 2$.

\begin{remark}
	Interestingly, the limiting factor on the running time of the 2QCFA for any of the above word problems (or $L_{eq}$) is not the difficulty of distinguishing strings in the language from strings not in the language, but is instead due to the apparent difficulty of using a 2QCFA to produce a Boolean random variable with a particular (rather extreme) bias. In particular, we make use of the procedure (from \cite{ambainis2002two}) that allows a 2QCFA, on an input of size $n$, to generate a Boolean value that is $1$ with probability essentially $n^{-1}$, in time $O(n^2)$. If, for some $\delta \in (0,1)$, it were possible for a 2QCFA to produce a Boolean variable that has value $1$ with probability $n^{-\delta}$ in time $r(n)$, then our technique would immediately show that $\forall G \in \widehat{\Pi}_1,\forall \epsilon \in \mathbb{R}_{>0}$, $W_G \in \mathsf{coR2QCFA}((n+r(n))n^{\delta},\epsilon,2,\widetilde{\mathbb{C}})$.
\end{remark}

Next, let $F_k$ denote the free group of rank $k$, for $k \in \mathbb{N}$; in particular, $F_0$ is the trivial group, $F_1$ is the group $\mathbb{Z}$, and, for any $k \geq 2$, $F_k$ is non-abelian. Notice that $W_{F_2}$ is closely related to the language $L_{pal}$. Ambainis and Watrous \cite{ambainis2002two} showed that, $\forall \epsilon \in \mathbb{R}_{>0}$, $\exists D \in \mathbb{R}_{\geq 1}$, such that $L_{pal} \in \mathsf{coR2QCFA}(D^n,\epsilon,2,\overline{\mathbb{Q}})$, and the same method would show the same result for $W_{F_2}$. We show that the same result holds for any group built from free groups, using certain operations. Let $\widehat{\Pi}_2$ denote the collection of all finitely generated groups that are virtually a subgroup of a direct product of finitely many finite-rank free groups.

\begin{theorem}\label{thm:main:freeDirProd}
	$\forall G \in \widehat{\Pi}_2, \forall \epsilon \in \mathbb{R}_{>0},\exists D \in \mathbb{R}_{\geq 1}$, such that $W_G \in \mathsf{coR2QCFA}(D^n,\epsilon,2,\overline{\mathbb{Q}})$.
\end{theorem}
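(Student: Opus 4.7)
The plan is to deduce the theorem from a base case ($W_{F_k}$ for each $k\in\mathbb{N}$) together with three closure properties of the class $\mathcal{C}=\bigcup_{D\ge 1}\mathsf{coR2QCFA}(D^n,\epsilon,2,\overline{\mathbb{Q}})$: closure under passage to subgroups, under finite direct products, and under finite-index supergroups. These three operations, combined with the base case, exactly exhaust $\widehat{\Pi}_2$, which yields the conclusion.

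For the base case I would fix a faithful representation $\rho:F_k\hookrightarrow\SU(2)$ with entries in $\overline{\mathbb{Q}}$. For $k=2$ any pair of non-commuting rational rotations of $S^{2}$, lifted to $\SU(2)$, freely generates such a subgroup by a standard ping-pong argument; for $k\ge 3$ one embeds $F_k\hookrightarrow F_2$; the cases $k\le 1$ are abelian and are already covered by Theorem~\ref{thm:main:abelian}. The 2QCFA then initializes its single qubit to $\ket{0}$, applies $\rho(s)$ for each input symbol $s$ to produce the state $\rho(w)\ket{0}$, and invokes the Ambainis-Watrous amplification of \cite{ambainis2002two} to decide, with perfect completeness and bounded one-sided error, whether this state equals $\ket{0}$. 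The crucial quantitative input is a gap estimate: if $w$ has length $n$ and is non-trivial in $F_k$, then the entries of $\rho(w)$ are algebraic of bounded degree with heights at most $C^n$, so a standard Liouville-type lower bound yields $|\langle 0|\rho(w)|0\rangle|^{2}\le 1-c^{-n}$ for some constants $C,c>1$; feeding this into the AW amplification scheme produces expected running time $D^n$ for a suitable $D>1$.

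For the closure operations, subgroup closure is handled by a Reidemeister-Schreier style rewrite: each generator of $H\le G$ is a fixed word in the generators of $G$, and the classical part of the 2QCFA can perform this constant-factor-length transduction on the fly. Direct-product closure is obtained by running the factor automata sequentially on the same input—reading each symbol $s$ through the projection $\pi_i:G_1\times G_2\to G_i$ during the $i$-th phase, with phases tracked in classical states—which preserves perfect completeness, preserves bounded one-sided error by a union bound, and preserves exponential expected time since $D_{1}^{n}+D_{2}^{n}=O((\max(D_1,D_2))^n)$. Finite-index closure is handled by simulating the Schreier-graph action of $G$ on $G/H$ in the classical states: as each input symbol $g$ is read at current coset $c$, the machine transitions to coset $cg$ and feeds the bounded-length $H$-word $t_{c}\, g\, t_{cg}^{-1}$ to the 2QCFA for $W_H$, where $\{t_c\}$ is a fixed set of coset representatives; at end of input, accept iff the final coset is trivial and the $H$-machine accepts.

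The main obstacle I anticipate is the base case—specifically, proving the quantitative height-gap $|\langle 0|\rho(w)|0\rangle|^{2}\le 1-c^{-n}$ with a $c$ uniform in $w$, and verifying that the amplification from \cite{ambainis2002two} can be executed using only algebraic transition amplitudes and two quantum basis states. The direct-product closure is the other delicate point: since $\SU(2)$ admits no faithful representation of, for instance, $F_2\times F_2$, it is essential that the factor tests be performed sequentially in classical phases rather than in parallel through a single shared quantum state, and one must verify that this sequential composition does not inflate the expected running time beyond exponential.
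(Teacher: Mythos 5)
Your overall architecture --- a base case for $W_{F_k}$ plus closure of the construction under passage to subgroups, finite direct products, and finite-index overgroups --- is exactly the skeleton of the paper's proof (there the closure properties are proved at the level of ``distinguishing families of representations'' in \Cref{thm:dfr:prod} and \Cref{thm:dfr:subOver}, and the finite-index overgroup step at the automaton level in \Cref{thm:algorithm:overgroup}, which matches your Schreier-graph simulation). The problem is in the base case, and it is a genuine gap rather than a missing detail.

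Your detection scheme prepares $\ket{0}$, applies $\rho(w)$, and tests whether the state has returned to $\ket{0}$, with the claimed soundness bound $\lvert\langle 0\rvert\rho(w)\lvert 0\rangle\rvert^2\le 1-c^{-n}$ for every nontrivial $w$ of length $n$. A Liouville-type height bound gives you $\lvert\rho(w)[2,1]\rvert\ge c^{-n}$ \emph{provided} $\rho(w)[2,1]\neq 0$, but it says nothing when that entry vanishes, i.e.\ when $\rho(w)$ fixes the line $\mathbb{C}\ket{0}$. Nontrivial elements with this property exist in the standard faithful representations $F_2\hookrightarrow\SU(2,\overline{\mathbb{Q}})$: in the representation the paper itself uses (\Cref{thm:dfr:init:F2}), $\rho(a)$ is diagonal, so your test accepts every word $a^m$ with probability $1$ and the automaton fails on those inputs. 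Arranging a faithful algebraic representation whose image meets the stabilizer of $\ket{0}$ only in the identity is a nontrivial extra claim that you neither state nor prove (and a naive genericity argument is delicate, since you must conjugate by a matrix with algebraic entries while avoiding countably many proper subvarieties). The paper sidesteps this entirely by working with the character: \Cref{thm:nonComAlgDiophantine} (the generalization of Gamburd--Jakobson--Sarnak) shows $\lvert\Tr\rho(w)\rvert\le 2-C^{-l(w)}$ for \emph{every} non-central $w$, covering both the ``distinct diagonal entries'' case and the ``nonzero off-diagonal entry'' case, and then the measurement is not a single overlap test but the multi-pass procedure of \Cref{thm:algorithm:measureRep:diag} and \Cref{thm:algorithm:measureRep:nonDiag} (a Fourier-type round on the uniform superposition followed by column-probing rounds), designed precisely to convert the character bound into a rejection probability. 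Separately, make sure your ``AW amplification'' is the exponential-time variant: since the per-round rejection probability is only $c^{-n}$, the periodic accept probability must be made exponentially small (the paper's subroutine $\mathcal{R}'(p,y)$ with acceptance probability $p^n2^{-y}$ in \Cref{thm:algorithm:genDirectWordAlg}), not the polynomial $(n+1)^{-m}$ random-walk version.
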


As $W_G \in \mathsf{CFL}\Leftrightarrow G$ is a finitely generated virtually free group \cite{muller1983groups}, we obtain the following.

\begin{corollary}
	$\forall W_G \in \mathsf{CFL}, \forall \epsilon \in \mathbb{R}_{>0},\exists D \in \mathbb{R}_{\geq 1}$, such that $W_G \in \mathsf{coR2QCFA}(D^n,\epsilon,2,\overline{\mathbb{Q}})$.
\end{corollary}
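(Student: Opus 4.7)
The plan is to derive this corollary as an immediate application of Theorem~\ref{thm:main:freeDirProd}, using the Muller--Schupp characterization \cite{muller1983groups}: $W_G \in \mathsf{CFL}$ if and only if $G$ is a finitely generated virtually free group. So it suffices to show that every such $G$ lies in the class $\widehat{\Pi}_2$; applying Theorem~\ref{thm:main:freeDirProd} then yields, for any $\epsilon \in \mathbb{R}_{>0}$, a constant $D \in \mathbb{R}_{\geq 1}$ with $W_G \in \mathsf{coR2QCFA}(D^n,\epsilon,2,\overline{\mathbb{Q}})$.

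To establish the containment, fix $G$ finitely generated and virtually free, and let $H \leq G$ be a finite-index free subgroup. The step I would be careful about is checking that $H$ has \emph{finite} rank: since $[G : H] < \infty$ and $G$ is finitely generated, the Reidemeister--Schreier process (or the Schreier index formula) gives that $H$ is finitely generated as well, so $H \cong F_k$ for some $k \in \mathbb{N}$. Now $F_k$ is a (trivial) subgroup of the direct product of the single finite-rank free group $F_k$, so $G$ is virtually a subgroup of a direct product of finitely many finite-rank free groups, i.e., $G \in \widehat{\Pi}_2$.

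Combining these two observations gives the corollary. There is no real obstacle here: all of the computational content is packaged inside Theorem~\ref{thm:main:freeDirProd}, and the corollary is just the observation that finitely generated virtually free groups form a subclass of $\widehat{\Pi}_2$, together with the classical fact that these are exactly the groups with context-free word problem.
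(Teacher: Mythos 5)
Your proposal is correct and follows exactly the route the paper takes: the corollary is stated as an immediate consequence of Theorem~\ref{thm:main:freeDirProd} together with the Muller--Schupp characterization of context-free word problems, with the inclusion of finitely generated virtually free groups in $\widehat{\Pi}_2$ left implicit. Your extra care in verifying that the finite-index free subgroup has finite rank (via Reidemeister--Schreier) is a correct filling-in of a detail the paper does not spell out.
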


Consider the homomorphism $\pi:F_2 \times F_2 \rightarrow \mathbb{Z}$, where $\pi$ takes each free generator of each copy of $F_2$ to a single generator of $\mathbb{Z}$; then $K=\ker \pi$ is finitely generated, but not finitely presented \cite{stallings1963finitely}. All groups $G$ for which $W_G \in \mathsf{CFL}\cup \mathsf{poly{-}OCL}$ are finitely presented \cite{herbst1991subclass}. As $K \in \widehat{\Pi}_2$, we have the following corollary.

\begin{corollary}
	There is a finitely generated group $K$, which is not finitely presented (hence, $W_K \not \in \mathsf{CFL}\cup \mathsf{poly{-}OCL}$), where $\forall \epsilon \in \mathbb{R}_{>0},\exists D \in \mathbb{R}_{\geq 1}$, such that $W_K \in \mathsf{coR2QCFA}(D^n,\epsilon,2,\overline{\mathbb{Q}})$.
\end{corollary}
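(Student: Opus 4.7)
The plan is to take the group $K = \ker \pi$ already named in the paragraph preceding the corollary, where $\pi : F_2 \times F_2 \to \mathbb{Z}$ sends each of the four free generators to a single generator of $\mathbb{Z}$. No new group needs to be constructed; the corollary is essentially a matter of reading off the consequences of previously stated facts about this specific $K$.

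The steps I would carry out are as follows. First, observe that $K$ is by definition a subgroup of the direct product of two copies of the finite-rank free group $F_2$, so that $K \in \widehat{\Pi}_2$ (no ``virtual'' buffer is required, since $K$ itself, not merely a finite-index subgroup, is already a subgroup of $F_2 \times F_2$). Applying Theorem~\ref{thm:main:freeDirProd} to $K$ immediately yields, for every $\epsilon \in \mathbb{R}_{>0}$, a constant $D \in \mathbb{R}_{\geq 1}$ with $W_K \in \mathsf{coR2QCFA}(D^n, \epsilon, 2, \overline{\mathbb{Q}})$, which is the algorithmic half of the corollary. Second, I would invoke Stallings' theorem \cite{stallings1963finitely} to conclude that $K$ is finitely generated but not finitely presented. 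Third, I would apply Herbst's theorem \cite{herbst1991subclass}, which asserts that any group $G$ with $W_G \in \mathsf{CFL} \cup \mathsf{poly{-}OCL}$ is finitely presented; taking the contrapositive together with the preceding step forces $W_K \notin \mathsf{CFL} \cup \mathsf{poly{-}OCL}$.

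The ``main obstacle'' does not really live in this corollary at all: the genuine content is concentrated in Theorem~\ref{thm:main:freeDirProd}, whose proof is the technical heart of this portion of the paper, together with the cited nontrivial theorems of Stallings and Herbst. The only verification in writing up the corollary itself is the trivial observation that $K \leq F_2 \times F_2$ witnesses $K \in \widehat{\Pi}_2$. Viewed in this light, the corollary functions primarily as an advertisement for the scope of Theorem~\ref{thm:main:freeDirProd}: it pinpoints a concrete group whose word problem lies outside both of the classically studied classes $\mathsf{CFL}$ and $\mathsf{poly{-}OCL}$ (and indeed outside the class of word problems of finitely presented groups) and yet is still recognizable with bounded error by a single-qubit 2QCFA with algebraic transition amplitudes in expected exponential time.
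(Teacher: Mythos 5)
Your proposal is correct and follows exactly the route the paper takes: $K=\ker\pi$ is finitely generated but not finitely presented by Stallings, hence $W_K\notin\mathsf{CFL}\cup\mathsf{poly{-}OCL}$ by Herbst, and $K\leq F_2\times F_2$ places $K\in\widehat{\Pi}_2$ so that \Cref{thm:main:freeDirProd} applies. There is nothing to add.
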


\begin{remark}
	It is known that, if $G \in \widehat{\Pi}_2$, then $W_G \in \mathsf{poly{-}CFL}$ \cite{brough2014groups}. Moreover, it is conjectured that $\widehat{\Pi}_2$ is precisely the class of groups whose word problem is in $\mathsf{poly{-}CFL}$ \cite{brough2014groups} (cf. \cite{ceccherini2015multipass}).
\end{remark}

We next consider a broader class of groups. Let $Z(H)$ denote the center of a group $H$, let $\U(d,\overline{\mathbb{Q}})$ denote the group of $d \times d$ unitary matrices with entries in $\overline{\mathbb{Q}}$, let $\PU(d,\overline{\mathbb{Q}})=\U(d,\overline{\mathbb{Q}})/Z(\U(d,\overline{\mathbb{Q}}))$, and let $(\PU(d,\overline{\mathbb{Q}}))^k$ denote the direct product of $k$ copies of $\PU(d,\overline{\mathbb{Q}})$. 

\begin{theorem}\label{thm:main:projEmbed}
	If $G$ is a finitely generated group that is virtually a subgroup of $(\PU(d,\overline{\mathbb{Q}}))^k$, for some $d,k \in \mathbb{N}_{\geq 1}$, then $\forall \epsilon \in \mathbb{R}_{>0},\exists D \in \mathbb{R}_{\geq 1}$, such that $W_G \in \mathsf{coR2QCFA}(D^n,\epsilon,d,\overline{\mathbb{Q}})$.
\end{theorem}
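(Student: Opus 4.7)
The plan is to design a 2QCFA that performs an error-detecting loop: each iteration uses the $d$-dimensional quantum register to test whether the unitary accumulated from one of the $k$ projective coordinates of the embedding acts as a scalar on a chosen test vector, rejecting immediately if nonscalar behavior is detected. First, I would absorb the ``virtually'' qualifier into the classical part. Fixing a finite-index subgroup $H \leq G$ with the given embedding $\rho: H \hookrightarrow (\PU(d,\overline{\mathbb{Q}}))^k$ and a finite generating set $S$ of $G$, a DFA with $[G:H]$ classical states tracks the current coset as the 2QCFA scans $w$. By the Reidemeister--Schreier construction, whenever the head is in coset $i$ and reads $s \in S$, the scheme effectively emits a fixed element $h_{s,i}$ from a finite subset of $H$, so the quantum register processes a word in $H$-generators. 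A one-shot classical check rejects if the final coset is not $H$ (which forces $[w] \neq e_G$).

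For the quantum test, lift each coordinate $\rho_i: H \to \PU(d,\overline{\mathbb{Q}})$ to an arbitrary unitary representative $\tilde{\rho}_i(h) \in \U(d,\overline{\mathbb{Q}})$ on each of the finitely many generators $h_{s,i}$. If $w$ represents an element of $H$, then after suitably ordered matrix multiplication the qubit evolves to $U_i \ket{\psi}$, where $U_i$ is a lift of $\rho_i([w])$, so $[w] = e_G$ iff for every $i$ the matrix $U_i$ is a scalar. To detect nonscalar $U_i$ by projective measurement, I would fix once and for all $d+1$ unit vectors $\ket{\psi^{(1)}}, \ldots, \ket{\psi^{(d+1)}} \in \overline{\mathbb{Q}}^d$ in general linear position (e.g., normalized Vandermonde vectors), so that any $d$ of them are linearly independent. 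For any nonscalar $U \in \U(d,\mathbb{C})$, its eigenspaces have dimensions summing to $d$ with each of dimension less than $d$; since no $l$-dimensional subspace contains more than $l$ of the $\ket{\psi^{(j)}}$, at least one $\ket{\psi^{(j)}}$ fails to be an eigenvector of $U$.

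The main loop of the 2QCFA picks $(i,j)$ uniformly from $\{1,\ldots,k\} \times \{1,\ldots,d+1\}$, initializes the qubit to $\ket{\psi^{(j)}}$, scans $w$ while reapplying coset tracking and the gates $\tilde{\rho}_i(h_{s,i})$ to produce $U_i \ket{\psi^{(j)}}$, then measures in an orthonormal basis containing $\ket{\psi^{(j)}}$. If the outcome differs from $\ket{\psi^{(j)}}$, reject; otherwise flip a biased exit coin with probability $\mu = c\,\epsilon\,D^{-n}$, implementable in $O(n)$ time by $O(n)$ fair single-qubit measurements as in Ambainis--Watrous~\cite{ambainis2002two}, accepting if the coin exits and looping otherwise. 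For $w \in W_G$, every $U_i$ is scalar, so the measurement always returns $\ket{\psi^{(j)}}$ with certainty and the geometric loop eventually accepts, giving perfect completeness. For $w \notin W_G$, some $U_i$ is nonscalar, and with probability at least $1/(k(d+1))$ the iteration picks a ``good'' $(i,j)$ for which $\ket{\psi^{(j)}}$ is not an eigenvector of $U_i$.

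The main obstacle is the quantitative algebraic lower bound needed to drive the amplification. Concretely, I need: for any fixed finite $\Gamma \subset \U(d,\overline{\mathbb{Q}})$ and each $\ket{\psi^{(j)}}$, there exist $C, D > 0$ such that whenever $U$ is a nonscalar product of at most $n$ elements of $\Gamma$ with $\ket{\psi^{(j)}}$ not an eigenvector of $U$, one has $1 - |\bra{\psi^{(j)}} U \ket{\psi^{(j)}}|^2 \geq C\,D^{-n}$. The left-hand side is a nonzero algebraic number whose degree over $\mathbb{Q}$ is bounded in terms of $\Gamma$ and the $\ket{\psi^{(j)}}$, and whose logarithmic Weil height is $O(n)$ because the entries of $U$ are sums of at most $d^n$ products of $n$ matrix entries from $\Gamma$. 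A standard Liouville-type inequality then forces $|\alpha| \geq \exp(-O(n))$. Choosing $\mu$ sufficiently smaller than $C\,D^{-n}/(k(d+1))$ yields soundness $\geq 1-\epsilon$; since the geometric exit mechanism yields expected loop length $O(\mu^{-1})$ and each iteration takes time $O(n)$, the total expected running time is $O(n D^n) = O((D')^n)$ for a constant $D'$ depending only on $G$, the embedding, and $\epsilon$, proving the theorem.
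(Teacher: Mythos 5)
Your proposal is correct, and its overall architecture coincides with the paper's: a faithful family of projective unitary representations over $\overline{\mathbb{Q}}$ (cf.\ \Cref{thm:pdfr:generic}), an exponentially small Diophantine separation of length-$n$ nonscalar words from the scalars, the Ambainis--Watrous accept-with-tiny-probability loop (\Cref{thm:algorithm:genDirectWordAlg}), and classical coset tracking to absorb the finite-index passage (\Cref{thm:algorithm:overgroup}). The one genuine divergence is in the quantum test and the form of the key Diophantine lemma. The paper's \Cref{thm:nonComAlgDiophantine} bounds the trace, $\lvert \Tr(h)\rvert \le d - C^{-l(h)}$ for nonscalar $h$, and the algorithm then needs the $[\le(d+1)]$-pass measurement round of \Cref{thm:algorithm:measureRep:nonDiag} --- a DFT-based trace test on the uniform superposition plus $d$ basis-vector tests, with a case split on the off-diagonal mass $\delta$ --- to convert that trace bound into a rejection probability. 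You instead fix $d+1$ algebraic unit vectors in general position, observe that a nonscalar unitary fails to have at least one of them as an eigenvector (its eigenspaces can absorb at most $d$ of them), and lower-bound $1-\lvert\bra{\psi^{(j)}}U\ket{\psi^{(j)}}\rvert^2$ directly by a degree-and-height (Liouville) argument. This streamlines the measurement analysis --- no $\delta$ case split, and all $d+1$ tests are treated uniformly --- at the cost of carrying the general-position vectors through the height estimate; the underlying number theory (a nonzero element of a fixed number field with logarithmic height $O(n)$ has absolute value at least $e^{-O(n)}$, proved by taking the product over conjugate embeddings after clearing denominators with $N^n$) is identical to what the paper's proof of \Cref{thm:nonComAlgDiophantine} establishes, so your deferral to ``a standard Liouville-type inequality'' is legitimate rather than a gap.
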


In order to state our final main result, as well as to provide appropriate context for the results listed above, we define the classes of groups $\Sigma_j$ and $\Pi_j$, for $j \in \mathbb{N}$, inductively. First $\Sigma_0=\Pi_0=\{\mathbb{Z},\{1\}\}$ (i.e., both classes consist of the two groups $\mathbb{Z}$ and the trivial group $\{1\}$). We use $\times$ to denote the direct product and $\ast$ to denote the free product. For $j >1$, we define $\Pi_j=\{ H_1 \times \cdots \times H_t:t \in \mathbb{N}_{\geq 1}, H_1,\ldots,H_t \in \Sigma_{j-1}\}$ and $\Sigma_j=\{H_1 \ast \cdots \ast H_t:t \in \mathbb{N}_{\geq 1}, H_1,\ldots,H_t \in \Pi_{j-1}\}$. These groups comprise an important subclass of a particularly important class of groups: the right-angled Artin groups. Note that every $G \in \bigcup_j (\Pi_j \cup \Sigma_j)$ is finitely generated. Also note that the $\Pi_j$ and $\Sigma_j$ form a hierarchy in the obvious way. We further define $\widehat{\Pi}_j$ (resp. $\widehat{\Sigma}_j$) as the set of all finitely generated groups that are virtually a subgroup of some group in $\Pi_j$ (resp. $\Sigma_j$), which also form a hierarchy in the obvious way. 

In particular, $\widehat{\Pi}_1$ (resp. $\widehat{\Pi}_2$) is precisely the class of groups for which \Cref{thm:main:abelian} (resp. \Cref{thm:main:freeDirProd}) demonstrates the existence of a 2QCFA that recognizes the corresponding word problem with bounded error in expected polynomial (resp. exponential) time. We next consider the class $\widehat{\Pi}_3$. While the relationship of this class to the class of groups to which \Cref{thm:main:projEmbed} applies is unclear to us, we can show that the word problem of any group in this class can be recognized by a 2QCFA with negative one-sided \textit{unbounded} error. Let $\mathsf{coN2QCFA}(T,d,\mathbb{A})$ be defined as in \Cref{def:2QCFAparams}, except we now only require that $\Pr[N \text{ rejects } w]>0, \forall w \not \in L$.

\begin{theorem}\label{thm:main:pi3}
	If $G \in \widehat{\Pi}_3$, then $W_G \in \mathsf{coN2QCFA}(n,2,\widetilde{\mathbb{C}})$.
\end{theorem}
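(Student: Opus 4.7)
My plan is to build, for each $G \in \widehat{\Pi}_3$, a one-qubit 2QCFA that reads the input in a single left-to-right pass, applies a projective unitary representation of $G$ to its qubit, and performs a single final measurement. I would unfold the definition of $\widehat{\Pi}_3$: first reduce to a subgroup of $\Pi_3$, then to a single free-product factor in $\Sigma_2$, and finally construct an injective projective representation $\rho : G \to \PU(2,\widetilde{\mathbb{C}})$.

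The first two reductions are standard. A finite-index subgroup $H \leq G$ is handled by augmenting the classical part of the 2QCFA with a deterministic finite automaton that tracks, as the input is scanned, the coset of the current prefix in $G/H$; this reduces the problem to recognizing $W_{G'}$ for some $G' \in \Pi_3$. When $G' = G_1 \times \cdots \times G_t$ with each $G_i \in \Sigma_2$, the 2QCFA begins by performing $O(\log t)$ Hadamard-and-measure operations on its qubit to produce a uniformly random index $i \in \{1,\dots,t\}$, and then simulates the $\Sigma_2$-subroutine for $W_{G_i}$ on the projection of the input to the generators of $G_i$, ignoring all other letters. If $w$ evaluates to the identity of $G'$, every projection is trivial in the corresponding $G_i$ and every branch accepts with probability one; otherwise some projection is nontrivial, and with probability at least $1/t$ the correct index is chosen, inheriting a positive rejection probability.

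The core step is $G_i = H_1 \ast \cdots \ast H_s$ with each $H_j \cong \mathbb{Z}^{k_j}$. For each $j$ I would pick an axis $v_j \in \mathbb{CP}^1$ whose homogeneous coordinates lie in $\widetilde{\mathbb{C}}$ and send the $k_j$ generators of $H_j$ to rotations about $v_j$ by angles $2\pi r_{j,1},\dots,2\pi r_{j,k_j}$, choosing the $r_{j,\ell} \in \overline{\mathbb{Q}} \cap \mathbb{R}$ so that $\{1\} \cup \{r_{j,\ell}\}_{j,\ell}$ is linearly independent over $\mathbb{Q}$. By the universal property of the free product, these factor embeddings combine into a homomorphism $\rho : G_i \to \PU(2,\widetilde{\mathbb{C}})$. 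A ping-pong style argument in the parameter space of axes and angles then shows that for a sufficiently generic algebraic choice, $\rho$ is injective: for each reduced word $g \in G_i \setminus \{e\}$ the condition $\rho(g) = e$ cuts out a proper algebraic subvariety of the parameter space, and one selects $\widetilde{\mathbb{C}}$-valued parameters lying simultaneously outside every one of these countably many subvarieties.

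Given such a $\rho$, lift it to $\hat\rho : G_i \to \U(2,\widetilde{\mathbb{C}})$ by a consistent choice of scalar representatives of each generator, and fix an algebraic state $\ket{\psi_0} \in \mathbb{CP}^1$ that avoids the two projective eigenlines of $\hat\rho(g)$ for every $g \neq e$, which is possible since $G_i$ is countable and the exceptional lines form a countable union of points in $\mathbb{CP}^1$. The 2QCFA initializes the qubit to $\ket{\psi_0}$, scans the input once applying $\hat\rho$ of each generator, and measures in the basis $\{\ket{\psi_0},\ket{\psi_0^\perp}\}$, accepting iff the outcome is $\ket{\psi_0}$. For $w \in W_{G_i}$ the operator $\hat\rho(w)$ is a scalar multiple of the identity, so acceptance is certain; for $w \notin W_{G_i}$, $\hat\rho(w)$ is non-scalar and does not preserve the line through $\ket{\psi_0}$, so $\ket{\psi_0^\perp}$ is observed with strictly positive probability. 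The hard part is the injectivity claim of the previous paragraph: the ping-pong argument must be carried out algebraically, with parameters in $\widetilde{\mathbb{C}}$ rather than truly generic reals, and this is precisely why the broader amplitude set $\widetilde{\mathbb{C}}$, allowing rotations by non-rational multiples of $\pi$, appears to be needed instead of $\overline{\mathbb{Q}}$.
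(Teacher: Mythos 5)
Your overall architecture (reduce to a finite-index subgroup of a group in $\Pi_3$, split the direct product into $\Sigma_2$ factors, and build a faithful projective representation $G_i \to \PU(2,\widetilde{\mathbb{C}})$ of each free product) matches the paper's route via \Cref{thm:unboundedDFR:pi3}, \Cref{thm:algorithm:unboundedWordAlg}, and \Cref{thm:algorithm:overgroup}. But the step you yourself flag as ``the hard part'' is a genuine gap, and the argument you sketch for it does not work. You want to choose axes and angles in $\widetilde{\mathbb{C}}$ lying outside the countably many proper subvarieties cut out by the conditions $\rho(g)=e$, $g \neq e$. Genericity arguments of this kind only show that the \emph{complement} of the exceptional set is nonempty (indeed co-meager) in the full parameter space; they say nothing about whether any point of a fixed \emph{countable} subset such as $\widetilde{\mathbb{C}}^N$ survives, since a countable union of proper subvarieties can perfectly well contain any prescribed countable set. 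So ``select $\widetilde{\mathbb{C}}$-valued parameters outside every subvariety'' is exactly the claim that needs proof, and no mechanism is given. (Note also that the same countability problem recurs in your choice of the initial state $\ket{\psi_0}$: the eigenlines of the countably many $\hat\rho(g)$ form a countable subset of $\mathbb{CP}^1$, which could in principle contain every $\widetilde{\mathbb{C}}$-rational point, so ``countable, hence avoidable by an algebraic point'' is not a valid inference.)

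The paper closes this gap with explicit, verifiable witnesses rather than genericity. First, every $H \in \Sigma_2$ is embedded (via the Kurosh subgroup theorem) into a two-factor free product $\mathbb{Z} * \mathbb{Z}^r$, so only that case must be handled. Then \Cref{thm:dfr:ratCircleFreeAbelian} gives a concrete P-faithful representation $\mathbb{Z}^r \to \SO(2,\mathbb{Q}) \leq \SU(2,\mathbb{Q})$ built from rational points on the circle (Tan's construction with primes $p_j \equiv 1 \bmod 4$), whose faithfulness is certified by Baker's theorem rather than by genericity. Finally, \Cref{thm:unboundedDFR:ZfreeZr} invokes Shalen's criterion: conjugating the $\mathbb{Z}^r$-factor by $\Lambda=\mathrm{diag}[\lambda,\lambda^2]$ with $\lambda=e^{\pi i \alpha}$, $\alpha$ a real algebraic irrational, yields a P-faithful representation of the free product \emph{provided} $\lambda$ is transcendental over the (algebraic) field generated by the factor representations' entries --- and the Gel'fond--Schneider theorem supplies exactly such a $\lambda$ inside $\widetilde{\mathbb{C}}$. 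That transcendence statement is the substitute for your genericity claim, and some argument of this kind (an explicit irrationality or transcendence certificate for a \emph{specific} $\widetilde{\mathbb{C}}$-point) is unavoidable; without it your construction does not produce a representation that is provably injective. The remaining ingredients of your sketch (coset tracking for the finite-index step, randomized factor selection for the direct product, and a single-measurement acceptance test) are workable for unbounded error, though the coset-tracking step also needs the rewriting of each read generator into a word over the subgroup's generators, as in \Cref{thm:algorithm:overgroup}, not merely a record of the current coset.
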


\begin{remark}
	$\mathbb{Z} * \mathbb{Z}^2 \in \Sigma_2 \subseteq \widehat{\Pi}_3$. It is conjectured \cite{brough2014groups,holt2005groups} that $W_{\mathbb{Z} * \mathbb{Z}^2} \not \in \mathsf{poly{-}CFL} \cup \mathsf{coCFL}$.
\end{remark}

Lastly, we consider 2QCFA with no restrictions on their transition amplitudes, as well as the measure-once one-way quantum finite automaton (MO-1QFA) defined by Moore and Crutchfield \cite{moore2000quantum}. Let $\mathsf{coN1QFA}$ denote the class of languages recognizable with negative one-sided unbounded error by a MO-1QFA (with any constant number of states).

\begin{theorem}\label{thm:main:totallyGeneric}
	If $G$ is a finitely generated group that is virtually a subgroup of $(\PU(d))^k$, for some $d,k \in \mathbb{N}_{\geq 1}$, then $W_G \in \mathsf{coN2QCFA}(n,d,\mathbb{C}) \cap \mathsf{coN1QFA}$.
\end{theorem}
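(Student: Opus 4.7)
The plan is to mirror \Cref{thm:main:projEmbed} but to exploit the two relaxations in the hypothesis---unrestricted amplitudes in $\mathbb{C}$ and one-sided \emph{unbounded} error (rather than bounded error)---so as to eliminate the amplification phase and reduce the running time to linear. The new ingredient is a \emph{generic} choice of initial quantum state that evades the eigenspaces of countably many non-scalar unitaries at once; this is possible precisely because the amplitudes need not be algebraic. Throughout, I would let $H \trianglelefteq G$ be a finite-index normal subgroup (the normal core of the given finite-index subgroup suffices) that embeds via $\phi = (\phi_1,\ldots,\phi_k)$ into $(\PU(d))^k$, fix set-theoretic unitary lifts $\widetilde{\phi_i} \colon H \to \U(d)$, and form the tensor $\widetilde{\phi} = \widetilde{\phi_1} \otimes \cdots \otimes \widetilde{\phi_k} \colon H \to \U(d^k)$, which is a projective representation injective modulo scalars (since $U_1 \otimes \cdots \otimes U_k$ is a scalar matrix only if each $U_i$ is scalar).

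For the 2QCFA, I would use $k+1$ left-to-right sweeps. The first sweep is purely classical: a DFA tracking the coset $wH$ under the finite action of $G$ on $G/H$ that rejects unless $wH = H$. Sweep $i \in \{1,\ldots,k\}$ initializes the $d$-state quantum register to a fixed generic $\ket{\psi_0^{(i)}} \in \mathbb{C}^d$, applies $\widetilde{\phi_i}(s^{-1})$ for each symbol $s$ read (treating generators and their inverses as input letters), and then measures in a basis containing $\ket{\psi_0^{(i)}}$, rejecting on any outcome other than $\ket{\psi_0^{(i)}}$; the machine accepts iff all $k+1$ sweeps succeed. If $w = 1_G$ then $w \in H$ and each $\widetilde{\phi_i}(w^{-1})$ is a scalar matrix, so every measurement returns $\ket{\psi_0^{(i)}}$ with probability one. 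If $w \neq 1_G$, then either $wH \neq H$ (DFA rejects deterministically) or $w \in H$ but, by injectivity of $\phi$, some $\phi_i(w) \neq [I]$; for that $i$, $\widetilde{\phi_i}(w^{-1})$ is non-scalar, and provided $\ket{\psi_0^{(i)}}$ is not an eigenvector of any non-scalar $\widetilde{\phi_i}(h)$ with $h \in H$, the $i$-th measurement rejects with positive probability. Since $H$ is countable and the set of eigenvector directions of any fixed non-scalar $d \times d$ unitary is a proper subvariety of the unit sphere in $\mathbb{C}^d$, a generic $\ket{\psi_0^{(i)}}$ avoids all such eigenspaces at once (Baire category). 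Each sweep is $O(n)$, yielding total running time $O(n)$.

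For the MO-1QFA, I would take state space $\mathbb{C}^m \otimes \mathbb{C}^{d^k}$ with $m = [G:H]$. Fix coset representatives $g_1 = 1, g_2, \ldots, g_m$ for $G/H$ and define, for each $g \in G$ and $i$, the coset index $\pi(g,i)$ and the element $h(g,i) \in H$ by $g g_i = g_{\pi(g,i)} h(g,i)$. For each input symbol $s$, use the unitary $V_s(\ket{i} \otimes \ket{\psi}) = \ket{\pi(s^{-1},i)} \otimes \widetilde{\phi}(h(s^{-1},i))\ket{\psi}$, initialize at $\ket{1} \otimes \ket{\psi_0}$ for a generic $\ket{\psi_0} \in \mathbb{C}^{d^k}$, and take the accept projector to be $\ket{1}\bra{1} \otimes \ket{\psi_0}\bra{\psi_0}$ (a suitable change of basis puts this in the standard MO-1QFA form). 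Iterating the identity $h(g_2, \pi(g_1,i)) h(g_1,i) = h(g_2 g_1, i)$ along the trajectory shows that, after reading $w = s_1 \cdots s_n$, the state is proportional to $\ket{\pi(w^{-1}, 1)} \otimes \widetilde{\phi}(h(w^{-1},1))\ket{\psi_0}$. If $w = 1_G$, this equals $\ket{1} \otimes \ket{\psi_0}$ up to phase and the acceptance probability is $1$; otherwise either the coset register is not $\ket{1}$ (giving acceptance probability $0$) or it is $\ket{1}$ but $\widetilde{\phi}(h(w^{-1},1))$ is non-scalar, so the generic $\ket{\psi_0}$ gives positive rejection probability by the same eigenspace-avoidance argument.

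The principal technical obstacle will be that $g \mapsto V_g$ is not globally a projective representation: $V_{g_1 g_2}$ and $V_{g_1} V_{g_2}$ differ by a \emph{diagonal} matrix in the coset register rather than a global scalar, so the usual phase-invariance of quantum measurement does not apply to superpositions across cosets. The observation that rescues the construction is that the initial state $\ket{1} \otimes \ket{\psi_0}$ and every $V_s$ preserve the property of being supported on a single coset basis vector, so the trajectory of the MO-1QFA never spreads over more than one coset; the offending non-scalar cocycle therefore acts at each step as a genuine global scalar on the current single-sector state, and the final measurement probability is well-defined and independent of the chosen unitary lifts.
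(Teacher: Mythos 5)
Your construction is essentially sound, but the 2QCFA half has one concrete defect as written: in sweep $i$ you apply $\widetilde{\phi_i}(s^{-1})$ for each input letter $s$, yet the input letters are generators of $G$, which in general do not lie in $H$, so $\phi_i(s^{-1})$ is simply undefined. Running a separate classical coset-checking sweep first does not repair this, because during the quantum sweep the prefixes of $w$ still wander over all cosets of $H$. The fix is exactly the Schreier bookkeeping you deploy for your MO-1QFA: track the current coset index $j$ in the classical state \emph{during} each quantum sweep and, on reading $s$, apply $\widetilde{\phi_i}(h(s^{-1},j))$ where $s^{-1}g_j=g_{\pi(s^{-1},j)}h(s^{-1},j)$; this is precisely what the paper's \Cref{thm:algorithm:overgroup} does (it feeds the $H$-machine the words $\widehat{\beta}(\sigma,m)$ rather than the raw input letters). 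With that repair, your completeness and soundness arguments go through: perfect acceptance on $W_G$ because scalar matrices are invisible to the final measurement, and positive rejection probability otherwise. Your handling of the cocycle in the MO-1QFA (the state never leaves a single coset sector, so the diagonal cocycle acts as a global phase) is correct, as is the observation that $U_1\otimes\cdots\otimes U_k$ is scalar only if every factor is.

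Beyond that, your route genuinely differs from the paper's in two places. For detecting non-scalarity, the paper fixes a battery of $d+1$ measurement rounds with \emph{algebraic} initial states ($\ket{1}$ with the DFT matrix $F$, then $\ket{q_v}$ with permutations $P_v$; see \Cref{thm:algorithm:unboundedMeasure}), which rejects with positive probability whenever $\rho_j(\phi(w))$ has either small total entry-sum or nonzero off-diagonal column mass; this is constructive and reusable in the bounded-error, $\overline{\mathbb{Q}}$-amplitude settings. You instead use a single measurement per representation against a \emph{generic} initial state chosen by Baire category to avoid the eigenspaces of the countably many non-scalar $\widetilde{\phi_i}(h)$; this is cleaner here but exploits $\mathbb{A}=\mathbb{C}$ essentially and would not port to the algebraic-amplitude theorems. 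For the MO-1QFA, the paper combines the $k$ projective representations into one on $\U(kd)$ via direct sum and invokes the Hadamard test (\Cref{thm:algorithm:oneWayUnboundedMeasure}), leaving the passage from the subgroup $H$ to $G$ implicit; your explicit induced-representation automaton on $\mathbb{C}^{m}\otimes\mathbb{C}^{d^k}$ handles the overgroup step in full detail, at the price of having to argue that the cocycle never causes interference across cosets. Both routes yield the claimed linear time and the stated state counts.
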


Let $\mathcal{D}$ denote the class of groups to which the preceding theorem applies (which includes all groups to which all earlier theorems apply). Let $\mathsf{S}$ denote the stochastic languages (the class of languages recognizable by PFA with strict cut-points). By \cite[Theorem 3.6]{brodsky2002characterizations}, $\mathsf{coN1QFA} \subseteq \mathsf{coS}$, which implies the following corollary.

\begin{corollary}
	If $G \in \mathcal{D}$, then $W_G \in \mathsf{coS}$.
\end{corollary}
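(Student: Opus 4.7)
The plan is to build, for each machine model, a device that directly exploits the embedding of a finite-index subgroup of $G$ into $(\PU(d))^k$. By replacing the virtual subgroup with its normal core, I fix a finite-index normal subgroup $N \trianglelefteq G$ together with a faithful embedding $\rho = (\rho_1,\ldots,\rho_k) : N \hookrightarrow (\PU(d))^k$; choose a Schreier transversal $T$ of $N$ in $G$, so that each generator $s \in \Sigma$ read from coset $Nt$ rewrites as multiplication (on the right) by an explicit element $h_{t,s} \in N$ followed by a transition to coset $Nts$, the set of such $h_{t,s}$ being finite. Select arbitrary set-theoretic lifts $\tilde{\rho}_i : N \to \U(d)$ of $\rho_i$; these are projective representations, and the key algebraic fact is that $\tilde{\rho}_i(n)$ is a scalar multiple of $I$ iff $\rho_i(n) = 1$ in $\PU(d)$.

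For the 2QCFA: the machine uses its classical states to track the current coset in $G/N$, and at the start uses classical coin flips to sample a coordinate $i \in \{1,\ldots,k\}$ uniformly at random; it initializes its $d$-dimensional quantum register in a state $\ket{\phi_0}$ (to be chosen below), and upon reading symbol $s$ from coset $Nt$ applies the unitary $\tilde{\rho}_i(h_{t,s})$. After the endmarker the machine rejects if the classical state is not $N$; otherwise, it measures the register in any orthonormal basis containing $\ket{\phi_0}$ and accepts iff the outcome is $\ket{\phi_0}$. If $w \in W_G$ then $w$ represents $1 \in N$, the register holds a scalar multiple of $\ket{\phi_0}$, and acceptance is certain. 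If $w$ ends in a nontrivial coset, classical rejection is immediate; otherwise $w$ represents a nontrivial element of $N$ and faithfulness of $\rho$ forces some $\tilde{\rho}_{i^*}(w)$ to be non-scalar, so conditioned on the probability-$1/k$ event $i=i^*$ the acceptance probability is $|\bra{\phi_0}\tilde{\rho}_{i^*}(w)\ket{\phi_0}|^2$, strictly less than $1$ provided $\ket{\phi_0}$ does not lie in a single eigenspace of $\tilde{\rho}_{i^*}(w)$. The machine runs in $O(n)$ expected time.

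The main obstacle is the choice of $\ket{\phi_0}$: we need one unit vector in $\mathbb{C}^d$ that avoids every proper eigenspace of every non-scalar element of $\bigcup_i \tilde{\rho}_i(N)$. Because $N$ is finitely generated each image $\tilde{\rho}_i(N)$ is countable; for a fixed non-scalar $U \in \U(d)$ the unit vectors contained in a single eigenspace form a finite union of proper linear subspaces, hence a Lebesgue-null set; a countable union of null sets is still null, so a generic $\ket{\phi_0}$ works, and the ability to use arbitrary complex amplitudes is precisely what lets us program such a state. For the MO-1QFA, I replace the random coordinate choice by a tensor product: since the tensor product of a non-scalar unitary with any unitary remains non-scalar, the assignment $n \mapsto \tilde{\rho}_1(n) \otimes \cdots \otimes \tilde{\rho}_k(n)$ on $\mathbb{C}^{d^k}$ is a scalar multiple of $I$ iff every factor is. Combining this (via a further tensor product, extending the lifts to all of $\Sigma$ and tracking $G/N$ by its regular representation on $\mathbb{C}^{[G:N]}$) produces a single unitary rule per letter on $\mathbb{C}^{[G:N] \cdot d^k}$, with common initial and accepting vector $\ket{e_N} \otimes \ket{\phi_0}$; the acceptance probability then factors as the indicator that $w \in N$ times $|\bra{\phi_0}\hat{\rho}(w)\ket{\phi_0}|^2$, equalling $1$ exactly on $W_G$ and strictly less otherwise, which yields $W_G \in \mathsf{coN1QFA}$.
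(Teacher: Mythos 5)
There is a genuine gap: your argument stops one step short of the statement. What you construct is a MO-1QFA (and a 2QCFA) that accepts every $w\in W_G$ with probability $1$ and accepts every $w\notin W_G$ with probability strictly less than $1$ --- that is, you establish $W_G\in\mathsf{coN1QFA}\cap\mathsf{coN2QCFA}(n,d,\mathbb{C})$. But this is precisely the content of Theorem~\ref{thm:main:totallyGeneric}, which the corollary already takes as given ($\mathcal{D}$ is by definition the class of groups to which that theorem applies). The corollary's actual assertion is $W_G\in\mathsf{coS}$, where $\mathsf{S}$ is a \emph{classical} class: languages recognizable by probabilistic finite automata with strict cut-points. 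Nothing in your proposal mentions PFAs or cut-points, so the target class is never reached.

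The missing ingredient is the inclusion $\mathsf{coN1QFA}\subseteq\mathsf{coS}$, i.e., a simulation of a measure-once 1QFA by a probabilistic finite automaton with a cut-point. This is nontrivial --- the acceptance probability $\lVert P U_{w_n}\cdots U_{w_1}\ket{\phi_0}\rVert^2$ is not itself computed by a stochastic matrix product; one passes to the vectorized density-matrix dynamics with transition matrices $\overline{U_\sigma}\otimes U_\sigma$, obtaining a generalized (real linear) automaton, and then invokes Turakainen-style normalization to convert it into a genuine PFA with a shifted cut-point. That is exactly the content of the result of Brodsky and Pippenger that the paper cites to derive the corollary from Theorem~\ref{thm:main:totallyGeneric}; without it (or an equivalent argument), your proof does not establish membership in $\mathsf{coS}$. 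As a secondary remark, your direct construction of the MO-1QFA is essentially sound, though the coset-tracking step is not literally a tensor product of a permutation matrix with a fixed unitary: the element of $N$ produced by a letter depends on the current coset, so the per-letter operator must be the block-permutation matrix of an induced representation, as in the paper's Lemma~\ref{thm:algorithm:overgroup}.
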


\begin{remark}
	For many $G \in \mathcal{D}$, the fact that $W_G \in \mathsf{coS}$ was already known: $W_{F_k} \in \mathsf{coS}$, $\forall k$ \cite{brodsky2002characterizations}, which implies (by standard arguments from computational group theory, see for instance \cite{muller1983groups}) that $\forall G \in \widehat{\Pi}_2$, $W_G \in \mathsf{coS}$. However, for $G \in \mathcal{D} \setminus \widehat{\Pi}_2$, this result appears to be new.
\end{remark}

\subsection{Outline of the Paper}\label{sec:intro:outline}

The landmark result of Lipton and Zalcstein \cite{lipton1977word} showed that, if $G$ is a finitely generated linear group over a field of characteristic zero, then $W_G \in \mathsf{L}$. The key idea behind their logspace algorithm was to make use of a carefully chosen \textit{representation} of the group $G$ in order to recognize $W_G$ (see \Cref{sec:prelim:repTheory} for the needed notation and terminology from representation theory). Our 2QCFA algorithm will operate in a similar manner; however, the constraints of quantum mechanics will require us to make many modifications to their approach.

A (unitary) representation of a (topological) group $G$ is a continuous homomorphism $\rho:G\rightarrow \U(\mathcal{H})$, where $\mathcal{H}$ is a Hilbert space, and $\U(\mathcal{H})$ is the group of unitary operators on $\mathcal{H}$. The Gel'fand-Raikov theorem states that the elements of any locally compact group $G$ are separated by its unitary representations; i.e., $\forall g \in G$ with $g \neq 1_G$, there is some $\mathcal{H}$ and some $\rho:G \rightarrow \U(\mathcal{H})$ such that $\rho(g) \neq \rho(1_G)$. For certain groups, stronger statements can be made; in particular, one calls a group maximally almost periodic if the previous condition still holds when $\mathcal{H}$ is restricted to be finite-dimensional. 

The core idea of our approach to recognizing the word problem $W_G$ of a particular group $G$ is to construct what we have chosen to call a \textit{distinguishing family of representations} (DFR) for $G$, which is a refinement of the above notion. Informally, a DFR is a collection of a small number of unitary representations of $G$, all of which are over a Hilbert space of small dimension, such that, for any $g \in G$ other than $1_G$, there is some representation $\rho$ in the collection for which $\rho(g)$ is ``far from'' $\rho(1_G)$, relative to the ``size'' of $g$.

In Section~\ref{sec:dfr}, we formally define DFRs, and construct DFRs for many groups. Our constructions of DFRs crucially rely on certain results concerning Diophantine approximation, both in the traditional setting of approximation of real numbers by rational numbers, as well as in a certain non-commutative generalization, originally proposed by Gamburd, Jakobson, and Sarnak \cite{gamburd1999spectra}; we study Diophantine approximation in Section~\ref{sec:dfr:diophantine}. In Section~\ref{sec:2QCFAword}, we use a DFR for a group $G$ to construct a 2QCFA that recognizes $W_G$, where the parameters of the DFR directly determine the parameters of the 2QCFA. In Section~\ref{sec:discussion:wordComplex}, we compare our results to existing results regarding both the classical and quantum computational complexity of the word problem. A key feature of the 2QCFA that we construct is that they operate by storing an amount of information that grows (quite quickly) with the size of the input using only a quantum register of constant size. In Section~\ref{sec:discussion:compress}, we discuss why this is possible, and consider further implications of this extreme compression of information. 

\section{Preliminaries}\label{sec:prelim}

\subsection{Quantum Computation and the 2QCFA}\label{sec:prelim:quant}

In this section, we briefly recall the fundamentals of quantum computation and the definition of 2QCFA. For further background on quantum computation, see, for instance, \cite{nielsen2002quantum,watrous2018theory}.

A natural way of understanding quantum computation is as a generalization of probabilistic computation. One may consider a probabilistic system defined over some finite set of states $C=\{c_1,\ldots,c_k\}$, where the state of that system, at any particular point in time, is given by a probability distribution over $C$. Such a probability distribution may be described by a vector $v=(v_{c_1},\ldots,v_{c_k})$, where $v_c \in \mathbb{R}_{\geq 0}$ denotes the probability that the system is in state $c \in C$, and $\sum_c v_c=1$, i.e., $v$ is simply an element of $\mathbb{R}_{\geq 0}^k$ with $L^1$-norm $1$.

Similarly, consider some finite set of \textit{quantum basis states} $Q=\{q_1,\ldots,q_k\}$, which correspond to an orthonormal basis $\ket{q_1},\ldots,\ket{q_k}$ of $\mathbb{C}^k$ (here and throughout the paper we use the standard bra-ket notation). The state of a quantum system over $Q$, at any particular time, is given by some \textit{superposition} $\ket{\psi}=\sum_q \alpha_q \ket{q}$ of the basis states, where each $\alpha_q \in \mathbb{C}$ and $\sum_q \lvert \alpha_q \rvert^2 =1$; i.e., a superposition $\ket{\psi}$ is simply an element of $\mathbb{C}^k$ with $L^2$-norm $1$. 

Let $\U(k)$ denote the group of $k \times k$ unitary matrices. Given a quantum system currently in the superposition $\ket{\psi}$, one may apply a transformation $t \in \U(k)$ to the system, after which the system is in the superposition $t \ket{\psi}$. One may also perform a \textit{projective measurement in the computational basis}, which is specified by some partition $B=\{B_0,\ldots,B_l\}$ of $Q$. Measuring a system that is in the superposition $\ket{\psi}=\sum_q \alpha_q \ket{q}$ with respect to $B$ gives the result $B_r \in B$ with probability $p_r:=\sum_{q \in B_r} \lvert \alpha_q \rvert^2$; additionally, if the result of the measurement is $B_r$, then the state of the system \textit{collapses} to the superposition $\frac{1}{\sqrt{p_r}}\sum_{q \in B_r} \alpha_q \ket{q}$. We emphasize that measuring a quantum system changes the state of that system. 

We now define a 2QCFA, essentially following the original definition in \cite{ambainis2002two}. Informally, a 2QCFA is a two-way deterministic finite automaton that has been augmented with a finite size quantum register. Formally, a 2QCFA $M$ is given by an $8$-tuple, $M=\{Q,C,\Sigma,\delta,q_{start},c_{start},c_{acc},c_{rej}\}$, where $Q$ (resp. $C$) is the finite set of quantum (resp. classical) states, $\Sigma$ is a finite alphabet, $\delta$ is the transition function, $q_{start} \in Q$ (resp. $c_{start} \in C$) is the quantum (resp. classical) start state, and $c_{acc},c_{rej} \in C$, where $c_{acc} \neq c_{rej}$, are the accepting and rejecting states. The \textit{quantum register} of $M$ is given by the quantum system with basis states $Q$. We define the tape alphabet $\Gamma:=\Sigma \sqcup \{\#_L,\#_R\}$ where the two distinct symbols $\#_L,\#_R \not \in \Sigma$ will be used to denote, respectively, a left and right end-marker. 

Each step of the computation of the 2QCFA $M$ involves either performing a unitary transformation or a projective measurement on its quantum register, updating the classical state, and moving the tape head. This behavior is encoded in the transition function $\delta$. For each $(c,\gamma) \in (C \setminus  \{c_{acc} , c_{rej}\}) \times \Gamma$, $\delta(c,\gamma)$ specifies the behavior of $M$ when it is in the classical state $c$ and the tape head currently points to a tape alphabet symbol $\gamma$. There are two forms that $\delta(c,\gamma)$ may take, depending on whether it encodes a unitary transformation or a projective measurement. In the first case, $\delta(c,\gamma)$ is a triple $(t,c',h)$ where $t \in \U(\lvert Q \rvert)$ is a unitary transformation to be performed on the quantum register, $c' \in C$ is the new classical state, and $h \in \{-1,0,1\}$ specifies whether the tape head is to move left, stay put, or move right, respectively. In the second case, $\delta(c,\gamma)$ is a pair $(B,f)$, where $B$ is a partition of $Q$ specifying a projective measurement, and $f:B \rightarrow C \times \{-1,0,1\}$ specifies the mapping from the result of that measurement to the evolution of the classical part of the machine, where, if the result of the measurement is $B_r$, and $f(B_r)=(c',h)$, then $c' \in C$ is the new classical state and $h \in \{-1,0,1\}$ specifies the movement of the tape head.

The computation of $M$ on an input $w \in \Sigma^*$ is then defined as follows. If $w$ has length $n$, then the tape will be of size $n+2$ and contain the string $\#_L w \#_R$. Initially, the classical state is $c_{start}$, the quantum register is in the superposition $\ket{q_{start}}$, and the tape head points to the leftmost tape cell.  At each step of the computation, if the classical state is currently $c$ and the tape head is pointing to symbol $\gamma$, the machine behaves as specified by $\delta(c,\gamma)$. If, at some point in the computation, $M$ enters the state $c_{acc}$ (resp. $c_{rej}$) then it immediately halts and accepts (resp. rejects) the input $w$. As quantum measurement is a probabilistic process, the computation of $M$ is probabilistic. For any $w \in \Sigma^*$, we write $\Pr[M \text{ accepts } w]$ (resp. $\Pr[M \text{ rejects } w]$) for the probability that $M$ will accept (resp. reject) the input $w$.  

Let $\mathcal{T}=\{t \in \U(\lvert Q \rvert):\exists (c,\gamma) \in ((C \setminus \{c_{acc} , c_{rej}\}) \times \Gamma) \text{ such that } \delta(c,\gamma)=(t,\cdot,\cdot)\}$ denote the set of all unitary transformations that $M$ may perform. The \textit{transition amplitudes} of $M$ are the set of numbers $\mathbb{A}$ that appear as entries of some $t \in \mathcal{T}$.

\subsection{Group Theory and the Word Problem of a Group}\label{sec:prelim:wordProblem}

Informally, the \textit{word problem} for a group $G$ is the following question: given a finite sequence of elements $g_1,\ldots,g_n \in G$, is $g_1\cdots g_n$, their combination using the group operation, equal to the identity element of $G$? In this section, we formalize this problem. 

We begin by formally defining the word problem of a group; for more extensive background, see, for instance, \cite{loh2017geometric}. Let $F(S)$ denote the free group on the set $S$. For sets $S$ and $R$, where $R \subseteq F(S)$, let $\langle R^{F(S)}\rangle$ denote the normal closure of $R$ in $F(S)$; we say that a group $G$ has \textit{presentation} $\langle S|R \rangle$ if $G \cong F(S)/\langle R^{F(S)}\rangle$, in which case we write $G=\langle S|R \rangle$. For a set $S$, we define the set of formal inverses $S^{-1}$, such that for each $s \in S$, there is a unique corresponding $s^{-1} \in S^{-1}$, and $S \cap S^{-1} = \emptyset$. 

\begin{definition2}
	Suppose $G=\langle S|R \rangle$, where $S$ is finite. Let $\Sigma=S \sqcup S^{-1}$, let $\Sigma^*$ denote the free monoid over $\Sigma$, let $\phi:\Sigma^* \rightarrow G$ denote the natural monoid homomorphism that takes each string in $\Sigma^*$ to the element of $G$ that it represents, and let $1_G$ denote the identity element of $G$. The word problem of $G$ with respect to the presentation $\langle S|R \rangle$ is the language $W_{G=\langle S|R \rangle}=\{w \in \Sigma^*:\phi(w)=1_G\}$ consisting of all strings that represent $1_G$. 
\end{definition2} 

If $G=\langle S|R \rangle$, then $S$ (or more precisely the image of $S$ in $G$ under $\phi$) is a generating set for $G$, and if $G$ has generating set $S$, then it has (many) presentations of the form $G=\langle S|R \rangle$. We say that $G$ is \textit{finitely generated} if it has a generating set that is finite, and we say that $G$ is \textit{finitely presented} if it has a presentation $G=\langle S|R \rangle$ with both $S$ and $R$ finite. 

Note that, while the above definition of the word problem of a group $G$ does depend on the particular presentation used, the computational complexity of the word problem of $G$ does not depend on the choice of presentation (with finite generating set). To clarify this, let $\mathcal{C}$ denote a complexity class. We say that $\mathcal{C}$ is \textit{closed under inverse homomorphism} if, for all pairs of finite alphabets $\Sigma_1,\Sigma_2$, all monoid homomorphisms $\tau:\Sigma_1^* \rightarrow \Sigma_2^*$, and every language $L \in \mathcal{C}$ over the alphabet $\Sigma_2$, we have $\tau^{-1}(L)=\{v \in \Sigma_1^*:\tau(v)\in L\} \in \mathcal{C}$. For any class of languages $\mathcal{C}$ closed under inverse homomorphism, if $\langle S|R \rangle$ and $\langle S'|R' \rangle$, with $S$ and $S'$ finite, are both presentations of the same group $G$, then $W_{G=\langle S|R \rangle} \in \mathcal{C} \Leftrightarrow W_{G=\langle S'|R' \rangle} \in \mathcal{C}$ \cite{herbst1991subclass}. As each complexity class $\mathcal{C}$ considered in this paper is closed under inverse homomorphism, we will use $W_G$ to denote the word problem of a finitely generated group $G$, and we will write $W_G \in \mathcal{C}$ if $W_{G=\langle S|R \rangle} \in \mathcal{C}$ for some (equivalently, every) presentation $\langle S|R\rangle$ of $G$ with $S$ finite.

We conclude this section with a bit of additional terminology and notation from group theory needed in later parts of the paper. For a group $G$, we write $S \subseteq G$ if the set $S$ is a subset of $G$ and $H \leq G$ if the group $H$ is a subgroup of $G$. We say that a group $F$ is \textit{free} if $F \cong F(S)$ for some set $S$, and we define the \textit{rank} of $F$ to be the cardinality of $S$. The rank of a free group is well-defined as $F(S) \cong F(T)$ if and only if $S$ and $T$ have the same cardinality. As a consequence of the same observation, there is a unique (up to isomorphism) free group of rank $k$, for any $k \in \mathbb{N}$, which allows us to speak about \textit{the} free group of rank $k$, which we denote by $F_k:=F(\{1,\ldots,k\})$. We follow the convention that $F_0=F(\emptyset)=\{1\}$, the trivial group. For a group $G$ and a subgroup $H \leq G$, we use $[G:H]$ to denote the index of $H$ in $G$; if $[G:H]$ is finite, then we say that $H$ is a \textit{finite index} subgroup of $G$. We say a group is \textit{finite} if it is finite as a set, and \textit{countable} if it is at most countably infinite as a set. Notice that any finitely generated group is necessarily countable. We say a group is \textit{cyclic} if it has a generating set consisting of a single element, \textit{abelian} if the group operation is commutative, and \textit{linear} if it is isomorphic to a subgroup of $\GL(n,k)$, where $\GL(n,k)$ denotes the group of $n \times n$ invertible matrices, over some field $k$, where the group operation is given by matrix multiplication. For any property $\mathcal{P}$ (abelian, free, etc.), we say a group is \textit{virtually} $\mathcal{P}$ if it contains a finite-index subgroup that has $\mathcal{P}$. 

For a group $G=\langle S|R \rangle$, let $\Gamma(G,S)$ denote the (right) \textit{Cayley graph} of $G$ with the respect to the generating set $\phi(S)$; it is the directed, labeled graph which has vertices $G$, and a directed edge from $g$ to $g \phi(\sigma)$ that is labeled $\sigma$, for each $g \in G$ and $\sigma \in \Sigma=S \sqcup S^{-1}$. A word $w=w_1 \cdots w_n \in \Sigma^*$, with each $w_i \in \Sigma$, specifies a path $p_w$ in $\Gamma(G,S)$ which starts at the vertex $1_G$ and, on the $i^{\text{th}}$ step, follows the edge labeled $w_i$. Notice that $\phi(w)=1_G$ if and only if the path $p_w$ terminates at the vertex $1_G$. Next, notice that, if $\langle S'|R' \rangle$ is another presentation of $G$, where $S'$ is also finite, then, $\Gamma(G,S)$ and $\Gamma(G,S')$ will not generally be isomorphic graphs; however, they will ``look the same from far away.''

To formalize this notion, recall that a \textit{metric space} is a set $X$ equipped with a map $d: X \times X \rightarrow \mathbb{R}_{\geq 0}$, where $\mathbb{R}_{\geq 0}$ denotes the non-negative real numbers, such that, $\forall x_1,x_2,x_3 \in X$, the following three properties are satisfied: $d(x_1,x_2)=0 \Leftrightarrow x_1=x_2$, $d(x_1,x_2)=d(x_2,x_1)$, and $d(x_1,x_3) \leq d(x_1,x_2)+d(x_2,x_3)$. Given two metric spaces $(X,d)$ and $(X',d')$, we say that a function $f:X \rightarrow X'$ is a \textit{bilipschitz equivalence} between them if $f$ is a bijection and $\exists C \in \mathbb{R}_{>0}$ such that, $\forall x_1,x_2 \in X$, we have $\frac{1}{C} d(x_1,x_2) \leq d'(f(x_1),f(x_2)) \leq C d(x_1,x_2)$. For a group $G=\langle S|R \rangle$, the \textit{word metric on} $G$ \textit{relative to the generating set} $S$, which we denote by $d_S$, is the usual distance metric on the Cayley graph $\Gamma(G,S)$, i.e, for any $g_1,g_2 \in G$, $d_S(g_1,g_2)$ is the smallest $m \in \mathbb{N}$ for which $\exists \sigma_1,\ldots,\sigma_m \in \Sigma$ such that $g_2=g_1 \phi(\sigma_1 \cdots \sigma_m)$. Notice that $(G,d_S)$ is a metric space. It is straightforward to see that, if $S$ and $S'$ are two finite generating sets of $G$, then the identity map on $G$ is a bilipschitz equivalence between $(G,d_S)$ and $(G,d_S')$, where the constant $C$ can be straightforwardly bounded by considering $d_S$ and $d_S'$ (see, for instance, \cite[Proposition 5.2.4]{loh2017geometric}).

When $S$ is clear from context, we will often simply write $d$ in place of $d_S$. We also define $l_S(g)$, the \textit{length of} $g \in G$ \textit{relative to the generating set} $S$, by $l_S(g):=d_S(1,g)$, i.e., $l_S(g)$ is the minimum value of $m$ for which $\exists g_1,\ldots,g_m \in S \cup S^{-1}$ such that $g=\phi(g_1\cdots g_m)$. Similarly, we write $l$ in place of $l_S$, when $S$ is clear from context.

\subsection{Representation Theory Background}\label{sec:prelim:repTheory}

In this section, we state certain basic definitions and elementary results from representation theory that will be needed in the remainder of this paper. While the material in this section can be found in essentially any textbook on the (linear) representation theory of (infinite) groups, we essentially follow \cite{kowalski2014introduction}, though we deliberately avoid stating results in their full generality, to simplify the exposition as much as possible.

A \textit{representation} of a group $G$ over a field $k$ is a pair $(\rho,V_{\rho})$, where $V_{\rho}$ is a vector space over $k$, $\GL(V_{\rho})$ denotes the group of invertible $k$-linear maps on $V_{\rho}$, and $\rho:G \rightarrow \GL(V_{\rho})$ is a group homomorphism. If, furthermore, $\rho:G \rightarrow \GL(V_{\rho})$ is injective, then we say that $(\rho,V_{\rho})$ is a \textit{faithful} representation of $G$. For $v \in V_{\rho}$ and $g \in G$, we denote the image of $v$ under the map $\rho(g)$ by $\rho(g)v$. This notation is used to emphasize that a representation $(\rho,V_{\rho})$ of a group $G$ is equivalent to a linear (left) action of $G$ on $V_{\rho}$, given by $g \cdot v=\rho(g)v$, for $g \in G$ and $v \in V_{\rho}$. By standard slight abuse of notation, we will often say that $\rho$ is a representation of $G$, when $V_{\rho}$ is clear from the context. We say that $V_{\rho}$ is the \textit{representation space} of the representation $\rho$. The \textit{dimension} of a representation $\rho$ is the (vector space) dimension of its representation space $V_{\rho}$. If $\rho$ is a finite-dimensional representation, one may identify (non-canonically) $\GL(V_{\rho})$ with $\GL(n,k)$, the group of $n \times n$ invertible matrices over the field $k$, by picking a particular basis of $V$. Such an identification allows the image of $g\in G$ under the map $\rho:G \rightarrow \GL(n,k)$, to be explicitly encoded in a matrix, which will be useful for computation. 

In this paper, we concern ourselves, almost exclusively, with \textit{finite-dimensional unitary representations of finitely generated groups}, which, for such a group $G$, are representations of the form $\rho:G \rightarrow \U(n)$, for some $n \in \mathbb{N}_{\geq 1}$, where $\U(n)$ denotes the group of $n \times n$ unitary matrices, and for which the corresponding representation space $V_{\rho} = \mathbb{C}^n$. Throughout the paper, \textit{a representation} will always mean a finite-dimensional unitary representation of a finitely generated group, unless we explicitly note otherwise. 

Generally, one defines a unitary representation of a topological group $G$ as a representation $\rho:G \rightarrow \U(\mathcal{H})$, where $\mathcal{H}$ is some complex Hilbert space and $\U(\mathcal{H})$ denotes the group of all unitary continuous linear operators on $\mathcal{H}$, such that $\rho$ is strongly continuous, i.e., for every $v \in \mathcal{H}$, the mapping $G \rightarrow \mathcal{H}$ given by $g \mapsto \rho(g)v$ is continuous. However, any finitely generated group is countable, and the natural topology for any countable group is the discrete topology, for which the continuity condition is trivially satisfied. Moreover, as previously observed, finite-dimensional representations can be concretely realized as representations into matrix groups. Therefore, this is equivalent to our simpler definition.

Consider two representations $\rho_1:G \rightarrow \U(n_1)$ and $\rho_2:G \rightarrow \U(n_2)$ of a group $G$. Let $\Hom_{\mathbb{C}}(n_1,n_2)$ denote the space of $\mathbb{C}$-linear maps (i.e., homomorphisms of $\mathbb{C}$ vector spaces) $\phi:\mathbb{C}^{n_1} \rightarrow \mathbb{C}^{n_2}$. A \textit{homomorphism of representations} is a $\phi \in \Hom_{\mathbb{C}}(n_1,n_2)$ such that, $\forall g \in G, \forall v \in V_{\rho_1}=\mathbb{C}^{n_1}$, we have $\phi(\rho_1(g)v)=\rho_2(g)\phi(v)$. We use $\Hom_G(\rho_1,\rho_2)$ to denote the subspace of $\Hom_{\mathbb{C}}(n_1,n_2)$ consisting of all such $\phi$. If there is some $\phi \in \Hom_G(\rho_1,\rho_2)$ that is bijective, we say that the representations $\rho_1$ and $\rho_2$ are \textit{isomorphic}, which we denote by writing $\rho_1 \cong \rho_2$, and we call such a $\phi$ an \textit{isomorphism of representations}. For an $n_1 \times n_1$ matrix $A$ and a $n_2 \times n_2$ matrix $B$, we write $A \oplus B$ to denote the $(n_1+n_2) \times (n_1 + n_2)$ block-diagonal matrix whose two diagonal blocks are given by $A$ and $B$. The \textit{direct sum of representations} $\rho_1$ and $\rho_2$ is the representation $\rho_1 \oplus \rho_2:G \rightarrow \U(n_1 + n_2)$, where $(\rho_1 \oplus \rho_2)(g)=\rho_1(g) \oplus \rho_2(g)$, $\forall g \in G$.

For a representation $\rho:G \rightarrow \U(n)$, we say that a vector subspace $V'$ of $V_{\rho}=\mathbb{C}^n$ is \textit{stable} if $\forall g \in G, \forall v \in V'$, $\rho(g)v \in V'$. We say that the representation $\rho':G \rightarrow \U(n')$ is a \textit{subrepresentation} of $\rho$ if there is a stable subspace $V'$ of $V_{\rho}$, of dimension $n'$, such that $\rho'(g)v=\rho(g)v$, $\forall g \in G, \forall v \in V'$. We say that $\rho$ is \textit{irreducible} if it has no non-trivial subrepresentations (i.e., the only stable subspaces of $V_{\rho}$ are $0$ and $V_{\rho}$ itself). For any representation $\rho:G \rightarrow \U(n)$, there is a decomposition $\rho \cong \rho_1 \oplus \cdots \oplus \rho_m$, where the $\rho_j$ are all irreducible subrepresentations; moreover, this decomposition is unique (up to permutation of the summands, and isomorphism of representations).

For a representation $\rho:G \rightarrow \U(n)$ of a group $G$, and a subgroup $H \leq G$, we define the \textit{restricted representation} $\Res^G_H(\rho)$ to be the representation $\pi:H \rightarrow \U(n)$ of $H$, where $\pi(h)=\rho(h)$, $\forall h \in H \leq G$, i.e., this is simply the restriction of $\rho$ to $H$. Next, we define a concept dual to the notion of restriction. Let $\pi:H \rightarrow \U(m)$ be a representation of $H$ and let $G$ be a finite-index overgroup of $H$, i.e., $H \leq G$ and $r:=[G:H]$ is finite. The \textit{induced representation} $\Ind_H^G(\pi)$ is the representation $\rho:G \rightarrow \U(mr)$, which is defined as follows. Let $T=\{g_1,\ldots,g_r\} \subseteq G$ denote a complete family of left coset representatives of $H$ in $G$. Let $S_r$ denote the symmetric group on $r$ symbols. For each $g \in G$, let $\sigma_g \in S_r$ and $h_{g,j} \in H$ denote the (unique) elements such that, for each $j \in \{1,\ldots,r\}$, we have $gg_j=g_{\sigma_g(j)}h_{g,j}$. For each $g_j \in T$, let $g_j\mathbb{C}^m$ denote an isomorphic copy of the representation space $V_{\pi}=\mathbb{C}^m$. We then define $V_{\rho}$, the representation space of $\rho$, by $V_{\rho}=\bigoplus_{j=1}^r g_j \mathbb{C}^m \cong \mathbb{C}^{mr}$. To define $\rho$, we think of an element of $V_{\rho}$ as being of the form $\sum_{j=1}^r g_j v_j$, where each $v_j \in V_{\pi}=\mathbb{C}^m$, and define $\rho:G \rightarrow \U(mr)$ such that $\forall g \in G$, $\rho(g) \sum_{j=1}^r g_j v_j=\sum_{j=1}^r g_{\sigma_g(j)}\pi(h_{g,j})v_j$. Concretely, $\rho(g)$ is a block matrix, all of whose blocks are $m \times m$, and, in block-column $j$, the only non-zero block-row is $\sigma_g(j)$, and this block is given by $\pi(h_{g,j})$. 

Induction and restriction, as defined above are dual in the following sense: If one lets $\Rep_G$ (resp. $\Rep_H$) denotes, the category of representations of $G$ (resp. $H$) over the field $k$, then $\Res_H^G:\Rep_G \rightarrow \Rep_H$ and $\Ind_H^G:\Rep_H \rightarrow \Rep_G$ are functors and $\Ind_H^G$ is the left-adjoint of $\Res_H^G$. We note that induction, as we have defined it, is more commonly called co-induction, and that one traditionally defines the induced representation such that induction is the right-adjoint of restriction. However, as we only consider the case when $H$ is a finite index subgroup of $G$, the co-induced representation that we have defined and the induced representation that one normally defines are isomorphic. It will simply be more convenient, for our purposes, to use co-induction, though we will refer to it as induction.

Consider a representation $\rho:G \rightarrow \U(n)$. The \textit{character} of $\rho$ is the function $\chi_{\rho}:G \rightarrow \mathbb{C}$ given by $\chi_{\rho}(g)=\Tr(\rho(g))$, where $\Tr(\rho(g))$ denotes the trace of (the unitary matrix) $\rho(g)$. Let $I_d \in \U(d)$ denote the $d \times d$ identity matrix (i.e., the identity element of the group $\U(d)$), $Z(\U(d))=\{e^{ir}I_d|r \in \mathbb{R}\}$ denote the center of $\U(d)$, $\PU(d)=\U(d)/Z(\U(d))$ denote the $d$-dimensional projective unitary group, and $\tau:\U(d) \rightarrow \PU(d)$ denote the canonical projection. Let $\Pker(\rho)=\{g \in G|\rho(g) \in Z(\U(d))\}$ denote the quasikernel of $\rho$; notice that $\Pker(\rho)=\ker(\tau \circ \rho)$, and $\ker(\rho) \leq \Pker(\rho) \leq G$. We say that a representation $\rho$ of $G$ is \textit{projectively faithful} or simply \textit{P-faithful} if $\Pker(\rho)$ is the trivial group (i.e., if only the identity element of $G$ belongs to $\Pker(\rho)$). Notice that a P-faithful representation is necessarily a faithful representation. Furthermore, notice that, $\forall g \in G$, $\lvert \chi_{\rho}(g) \rvert \leq d$, and $\lvert \chi_{\rho}(g) \rvert = d \Leftrightarrow g \in \Pker(\rho)$. Lastly, we define a \textit{projective unitary representation of a finitely generated group} $G$ to be a group homomorphism $\pi: G \rightarrow \PU(d)$. We will use the term \textit{projective representation} to refer to such a representation.

\section{Distinguishing Families of Representations}\label{sec:dfr}

Our primary tool for constructing a 2QCFA for the word problem for a group $G$ is a \textit{distinguishing family of representations} (DFR) for the group $G$. Informally, a DFR for a group $G$ is a ``small" family of ``small" unitary representations of $G$ such that, for each $g \in G$ where $g \neq 1_G$, the family contains at least one representation which ``strongly" separates $g$ from $1_G$. The following definition formalizes this, by introducing parameters to quantify the above fuzzy notions. In this definition, and in the remainder of the paper, let $G_{\neq 1}=G \setminus \{1_G\}$, let $M(d,\mathbb{A})$ denote the set of $d \times d$ matrices with entries in some set $\mathbb{A}$, and let $\U(d,\mathbb{A})=\U(d) \cap M(d,\mathbb{A})$.

\begin{definition2}\label{def:dfr}
	Consider a group $G=\langle S|R \rangle$, with $S$ finite. For $k \in \mathbb{N}_{\geq 1}, d \in \mathbb{N}_{\geq 2}$, $\tau:\mathbb{R}_{>0} \rightarrow \mathbb{R}_{>0}$ a monotone non-increasing function, and $\mathbb{A} \subseteq \mathbb{C}$, we define a $[k,d,\tau,\mathbb{A}]$-\textit{distinguishing family of representations} (DFR) for $G$ to be a set $\mathcal{F}=\{\rho_1,\ldots,\rho_k\}$ where the following conditions hold.
	\begin{enumerate}[(a)]
		\item \label{def:dfr:unitary} $\forall j \in \{1,\ldots,k\}, \ \rho_j:G \rightarrow \U(d)$ is a representation of $G$.
		\item \label{def:dfr:distinguish} $\forall g \in G_{\neq 1}$, $\exists j \in \{1,\ldots,k\}$ such that $\lvert \chi_{\rho_j}(g)\rvert \leq d-\tau(l(g))$.
		\item \label{def:dfr:amplitudes} $\forall \sigma \in S \cup S^{-1}, \forall j \in \{1,\ldots,k\}$, $\exists Y_1,\ldots,Y_t \in \U(d,\mathbb{A})$, such that $\rho_j(\sigma)=\prod_i Y_i$.
	\end{enumerate}
\end{definition2} 

Suppose $\mathcal{F}=\{\rho_1,\ldots,\rho_k\}$ is a $[k,d,\tau,\mathbb{A}]$-DFR for $G=\langle S|R \rangle$. We write $I_d =1_{\U(d)} \in \U(d)$ for the $d \times d$ identity matrix, $\ker(\rho_j)=\{g \in G:\rho_j(g)=I_d\}$ for the kernel of $\rho_j$, $Z(\U(d))=\{e^{ir} I_d:r \in \mathbb{R}\}$ for the center of $\U(d)$, and $\Pker(\rho_j)=\{g \in G: \rho_j(g)=Z(\U(d))\}$ for the quasikernel of $\rho_j$. Clearly, $1_G \in \Pker(\rho_j), \forall j$, but, as $\rho_j$ is not assumed to be P-faithful or even faithful, there may be $g \in G_{\neq 1}$ for which, for certain $j$, we have $g \in \Pker(\rho_j)$. However, due to the fact that $g \in \Pker(\rho_j)$ exactly when $\lvert \chi_{\rho_j}(g) \rvert = d$, the second defining property of a DFR guarantees not only that $\bigcap_j \Pker(\rho_j) = \{1_G\}$, but, much more strongly, that all $g \in G_{\neq 1}$ are ``far from'' being in $\bigcap_j \Pker(\rho_j)$. That is to say, $\forall g \in G_{\neq 1},\exists j$ such that $\lvert \chi_{\rho_j}(g) \rvert$ is at distance at least $\tau(l(g))$ from having value $d$. The following proposition is then immediate, but we explicitly state it as it is the central notion in our quantum approach to the word problem.

\begin{proposition}\label{thm:dfr:intersectPKer}
	Suppose $G=\langle S| R \rangle$ has a $[k,d,\tau,\mathbb{A}]$-DFR $\mathcal{F}=\{\rho_1,\ldots,\rho_k\}$. Then, $\forall g \in G$, 
	$g=1_G \Leftrightarrow \forall j, \ \lvert \chi_{\rho_j}(g)\rvert =d$ and $g \in G_{\neq 1} \Leftrightarrow \exists j \text{ such that } \lvert \chi_{\rho_j}(g) \rvert \leq d-\tau(l(g))$.
	
\end{proposition}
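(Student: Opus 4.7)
The plan is to prove this proposition as a straightforward consequence of the definition of a DFR together with a single elementary fact about traces of unitary matrices, since the two equivalences are really just a reformulation of condition (\ref{def:dfr:distinguish}) from Definition~\ref{def:dfr}.

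First, I would establish the basic lemma underlying everything: for any $U \in \U(d)$, we have $\lvert \Tr(U) \rvert \leq d$, with equality if and only if $U \in Z(\U(d))$. This follows because $U$ is diagonalizable with eigenvalues $\lambda_1,\ldots,\lambda_d$ on the unit circle, so $\lvert \Tr(U)\rvert = \lvert \sum_i \lambda_i \rvert \leq \sum_i \lvert \lambda_i\rvert = d$ by the triangle inequality, and equality in the triangle inequality for unit complex numbers forces $\lambda_1 = \cdots = \lambda_d$, i.e., $U = \lambda I_d$ for some $\lambda$ on the unit circle. Applied to $U = \rho_j(g)$, this gives that $\lvert \chi_{\rho_j}(g)\rvert = d$ if and only if $g \in \Pker(\rho_j)$.

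For the first equivalence, the forward direction is immediate: $g = 1_G$ implies $\rho_j(g) = I_d$ for every $j$, hence $\chi_{\rho_j}(g) = d$ for every $j$. For the converse, I would argue by contrapositive: if $g \in G_{\neq 1}$, then $l(g) \geq 1 > 0$ (so $\tau(l(g))$ is defined and strictly positive, since $\tau$ takes values in $\mathbb{R}_{>0}$), and condition (\ref{def:dfr:distinguish}) of Definition~\ref{def:dfr} supplies some $j$ with $\lvert \chi_{\rho_j}(g)\rvert \leq d - \tau(l(g)) < d$, contradicting $\lvert \chi_{\rho_j}(g)\rvert = d$ for all $j$.

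For the second equivalence, the forward direction is \emph{literally} condition (\ref{def:dfr:distinguish}) of Definition~\ref{def:dfr}. For the reverse direction, if there exists $j$ with $\lvert \chi_{\rho_j}(g)\rvert \leq d - \tau(l(g))$, then $\lvert \chi_{\rho_j}(g) \rvert < d$ (using $\tau > 0$), so by the trace lemma $g \notin \Pker(\rho_j)$; in particular $g \neq 1_G$. There is no real obstacle to this proof — the only mildly subtle point is keeping track of the fact that $\tau$ is only defined on $\mathbb{R}_{>0}$ and strictly positive there, which is exactly what converts the defining property of a DFR into the clean two-sided equivalence stated in the proposition.
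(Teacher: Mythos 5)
Your proposal is correct and follows the same route as the paper, which treats the proposition as immediate from the definition of a DFR together with the fact (recorded in Section~\ref{sec:prelim:repTheory}) that $\lvert \chi_{\rho}(g)\rvert \leq d$ with equality exactly when $g \in \Pker(\rho)$. You have simply written out in full the trace argument and the contrapositive step that the paper leaves implicit; there is no substantive difference.
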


Note that, in the preceding proposition, $\rho_1 \oplus \cdots \oplus \rho_k:G \rightarrow \U(kd)$ is simply a faithful representation of $G$, decomposed into subrepresentations in a convenient way. Next, we establish some terminology that will better allow us to describe particular types of DFR.

\begin{definition2}\label{def:dfrTypes} Suppose $\mathcal{F}=\{\rho_1,\ldots,\rho_k\}$ is a $[k,d,\tau,\mathbb{A}]$-DFR for a group $G$. 
	\begin{enumerate}[(a)]
		\item \label{def:dfrTypes:algebraic} If $\mathbb{A}=\overline{\mathbb{Q}}$ (equivalently, if $\rho_j(G) \subseteq \U(d,\overline{\mathbb{Q}}), \forall j$), we say $\mathcal{F}$ is an \textit{algebraic} DFR.
		\item \label{def:dfrTypes:diagonal} If $\rho_j(g)$ is a diagonal matrix $\forall j, \forall g$, then we say $\mathcal{F}$ is a \textit{diagonal} DFR. 
		\item \label{def:dfrTypes:virtual} If $H$ is a finite-index overgroup of $G$, we say that $H$ \textit{virtually} has a $[k,d,\tau,\mathbb{A}]$-DFR.
	\end{enumerate}
\end{definition2}

When $\mathcal{F}$ is an algebraic DFR, we will often only write $[k,d,\tau]$ to denote its parameters. Note that only abelian groups have diagonal DFRs, and any DFR of an abelian group can be converted to a diagonal DFR; we define diagonal DFRs for convenience. 

Using a $[k,d,\tau,\mathbb{A}]$-DFR for a group $G$, it will be possible to construct a 2QCFA that recognizes the word problem $W_H$ of any finite-index overgroup $H$ of $G$, where the parameters of the DFR will strongly impact the parameters of the resulting 2QCFA. In particular, in \Cref{sec:2QCFAword}, we produce a 2QCFA with $d$ quantum states and transition amplitudes in $\mathbb{A}$ that recognizes $W_H$, with expected running time approximately $O(\tau(n)^{-1})$. The goal is then to show that a wide collection of groups virtually have DFRs with good parameters.

\subsection{Diophantine Approximation}\label{sec:dfr:diophantine}

Our constructions of DFRs rely crucially on certain results concerning Diophantine approximation. Most fundamentally, the Diophantine approximation question asks how well a particular real number $\alpha$ can be approximated by rational numbers. Of course, as $\mathbb{Q}$ is dense in $\mathbb{R}$, one can choose $\frac{p}{q} \in \mathbb{Q}$ so as to make the quantity $\lvert \alpha-\frac{p}{q} \rvert$ arbitrarily small; for this reason, one considers $\frac{p}{q}$ to be a ``good'' approximation to $\alpha$ only when $\lvert \alpha-\frac{p}{q} \rvert$ is small compared to a suitable function of $q$. One then considers $\alpha$ to be poorly approximated by rationals if, for some ``small'' constant $d \in \mathbb{R}_{\geq 2}$, $\exists C \in \mathbb{R}_{>0}$ such that, $\forall (p,q) \in \mathbb{Z} \times \mathbb{Z}_{\neq 0}$, we have $\lvert \alpha - \frac{p}{q}\rvert \geq C\lvert q\rvert^{-d}$, where the smallness of $d$ determines just how poorly approximable $\alpha$ is. For $\alpha \in \mathbb{R}$, let $\lVert \alpha \rVert=\min_{m \in \mathbb{Z}} \lvert \alpha - m \rvert$ denote the distance between $\alpha$ and its nearest integer. Notice that $\left\lvert \alpha - \frac{p}{q}\right\rvert \geq C\lvert q\rvert^{-d}, \ \forall (p,q) \in \mathbb{Z} \times \mathbb{Z}_{\neq 0} \Leftrightarrow  \lVert q \alpha \rVert \geq C\lvert q\rvert^{-(d-1)}, \ \forall q \in \mathbb{Z}_{\neq 0}$. Of particular relevance to us is the following result, due to Schmidt \cite{schmidt1970simultaneous}, that real irrational algebraic numbers are poorly approximated by rationals. 

\begin{proposition}\cite{schmidt1970simultaneous}\label{thm:schmidt}
	$\forall \alpha_1,\ldots,\alpha_k \in (\mathbb{R} \cap \overline{\mathbb{Q}})$ where $1,\alpha_1,\ldots,\alpha_k$ are linearly independent over $\mathbb{Q}$, $\forall \epsilon\in\mathbb{R}_{>0}$, $\exists C \in \mathbb{R}_{>0}$ such that $\forall q \in \mathbb{Z}_{\neq 0}$, $\exists j$ such that $\lVert q \alpha_j \rVert \geq C\lvert q\rvert^{-(\frac{1}{k}+\epsilon)}$.
\end{proposition}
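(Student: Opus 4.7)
The plan is to derive the bound from Schmidt's subspace theorem, which is the standard tool for simultaneous Diophantine approximation to algebraic irrationals. I would argue by contradiction: suppose the conclusion fails, so that for $C_n = 1/n$ there exists $q_n \in \mathbb{Z}_{\neq 0}$ with $\lVert q_n \alpha_j \rVert < C_n |q_n|^{-(1/k + \epsilon)}$ simultaneously for every $j \in \{1, \ldots, k\}$. Since each $\alpha_j$ is necessarily irrational (otherwise $1, \alpha_1, \ldots, \alpha_k$ would fail to be $\mathbb{Q}$-linearly independent), the quantities $\lVert q \alpha_j \rVert$ are bounded away from $0$ on any finite set of nonzero integers, so after passing to a subsequence we may assume $|q_n| \to \infty$. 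For each such $q_n$, let $p_j^{(n)} \in \mathbb{Z}$ be the integer nearest to $q_n \alpha_j$, and form the integer vector $\mathbf{x}_n = (q_n, p_1^{(n)}, \ldots, p_k^{(n)}) \in \mathbb{Z}^{k+1}$.

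Next, I would apply Schmidt's subspace theorem to the $k+1$ linearly independent linear forms with algebraic coefficients
\[
L_0(\mathbf{x}) = x_0, \qquad L_j(\mathbf{x}) = \alpha_j x_0 - x_j \quad \text{for } j = 1, \ldots, k,
\]
whose coefficient matrix has determinant $\pm 1$. The crucial estimate is
\[
\prod_{j=0}^{k} \lvert L_j(\mathbf{x}_n)\rvert \;=\; \lvert q_n \rvert \prod_{j=1}^k \lVert q_n \alpha_j \rVert \;<\; C_n^k \, \lvert q_n \rvert^{-k\epsilon}.
\]
Since $\lvert p_j^{(n)}\rvert \leq \lvert q_n\rvert(\lvert \alpha_j\rvert + 1)$ for large $n$, the sup-norm height $H(\mathbf{x}_n) = \max_i \lvert x_{n,i} \rvert$ satisfies $H(\mathbf{x}_n) \leq M \lvert q_n \rvert$ for a constant $M$ depending only on the $\alpha_j$. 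Combining these facts, for all sufficiently large $n$ we obtain $\prod_j \lvert L_j(\mathbf{x}_n)\rvert \leq H(\mathbf{x}_n)^{-k\epsilon/2}$, which is exactly the input hypothesis of Schmidt's subspace theorem (taking $\delta = k\epsilon/2$).

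The subspace theorem then forces all such $\mathbf{x}_n$ to lie in a finite union of proper rational hyperplanes of $\mathbb{Q}^{k+1}$. By the pigeonhole principle, an infinite subsequence of the $\mathbf{x}_n$ lies in a single such hyperplane, producing integers $(c_0, c_1, \ldots, c_k) \neq \mathbf{0}$ with $c_0 q_n + \sum_{j=1}^k c_j p_j^{(n)} = 0$ along that subsequence. Dividing by $q_n$ and letting $n \to \infty$, and using that $p_j^{(n)}/q_n \to \alpha_j$ (which follows from $\lvert p_j^{(n)}/q_n - \alpha_j \rvert = \lVert q_n \alpha_j \rVert / \lvert q_n \rvert \to 0$), I conclude $c_0 + \sum_j c_j \alpha_j = 0$, a nontrivial $\mathbb{Q}$-linear relation among $1, \alpha_1, \ldots, \alpha_k$, contradicting the hypothesis.

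The substantive obstacle is of course Schmidt's subspace theorem itself, a deep result in Diophantine geometry whose proof is entirely outside the scope of this paper; all the hard analytic and arithmetic content of the proposition is encapsulated in that single input. Once it is granted, the remainder of the argument is essentially bookkeeping: choosing the correct linear forms, verifying the sub-height decay of their product, and extracting a contradictory $\mathbb{Q}$-linear dependence from a recurrent hyperplane.
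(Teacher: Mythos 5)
Your derivation is correct. Note, however, that the paper does not prove this proposition at all: it is imported verbatim as a black box, cited to Schmidt's 1970 paper on simultaneous approximation, whose original argument is a direct (and quite involved) proof of exactly this statement rather than a reduction to the subspace theorem. What you have written is the standard modern route: encode a hypothetical sequence of too-good simultaneous approximants as integer points $\mathbf{x}_n=(q_n,p_1^{(n)},\ldots,p_k^{(n)})$, check that the product of the $k+1$ linear forms $x_0$ and $\alpha_j x_0 - x_j$ decays like $H(\mathbf{x}_n)^{-k\epsilon/2}$, invoke the subspace theorem to trap infinitely many $\mathbf{x}_n$ in one rational hyperplane, and pass to the limit to manufacture a $\mathbb{Q}$-linear relation among $1,\alpha_1,\ldots,\alpha_k$. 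The bookkeeping is all in order: the exponent computation $\lvert q_n\rvert\cdot\lvert q_n\rvert^{-1-k\epsilon}=\lvert q_n\rvert^{-k\epsilon}$ is right, the reduction from boundedness of $\{q_n\}$ to irrationality of the $\alpha_j$ correctly forces $\lvert q_n\rvert\to\infty$, and the limiting relation $c_0+\sum_j c_j\alpha_j=0$ is genuinely nontrivial because a hyperplane with $c_1=\cdots=c_k=0$ would force $q_n=0$. The only trade-off worth noting is that you have replaced one deep citation with another (the 1972 subspace theorem generalizes the 1970 result), so nothing becomes more elementary; what the reduction buys is a short, checkable derivation from a single widely known statement, which is arguably more transparent than pointing at Schmidt's original proof.
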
 

We also require the following result concerning the Diophantine properties of linear forms in logarithms of algebraic numbers, due to Baker \cite{baker1990transcendental}.

\begin{proposition}\cite{baker1990transcendental}\label{thm:baker}
	Let $L=\{\beta \in \mathbb{C}_{\neq 0}:e^{\beta} \in \overline{\mathbb{Q}}\}$. $\forall \beta_1,\ldots,\beta_k \in L$ that are linearly independent over $\mathbb{Q}$, $\exists C \in \mathbb{R}_{>0}$ such that, $\forall (q_1,\ldots,q_k) \in \mathbb{Z}^k$ with $q_{max}:=\max_j \lvert q_j\rvert>0$, $\lvert q_1 \beta_1 + \cdots + q_k \beta_k \rvert \geq (e q_{max})^{-C}$.
\end{proposition}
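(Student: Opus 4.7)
The plan is to reduce to Baker's classical theorem on linear forms in logarithms and sketch its proof via Baker's auxiliary-function method (a refinement of Gelfond--Schneider). Setting $\alpha_j := e^{\beta_j}$, each $\alpha_j$ lies in $\overline{\mathbb{Q}} \setminus \{0\}$ and $\beta_j$ is a branch of $\log \alpha_j$; the hypothesis becomes that $\log \alpha_1, \ldots, \log \alpha_k$ are $\mathbb{Q}$-linearly independent. Arguing by contradiction, I would suppose $\lvert \Lambda \rvert := \lvert q_1 \beta_1 + \cdots + q_k \beta_k \rvert < (e q_{\max})^{-C}$ for $C$ to be chosen very large at the end, and derive a contradiction in two steps, calibrated by parameters $L_0, L, T, h$ whose relative sizes are fixed during the argument.

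First, construct an auxiliary function of the shape
$$\Phi(z) = \sum_{\lambda_0 = 0}^{L_0} \sum_{\lambda_1 = 0}^{L} \cdots \sum_{\lambda_{k-1} = 0}^{L} p(\lambda_0, \lambda_1, \ldots, \lambda_{k-1})\, z^{\lambda_0} \alpha_1^{\lambda_1 z} \cdots \alpha_{k-1}^{\lambda_{k-1} z},$$
where the rational-integer coefficients $p(\cdot)$ are produced by Siegel's lemma so that $\Phi$ and its first $T$ derivatives vanish at every integer $z = 1, 2, \ldots, h$. Such nontrivial coefficients of controlled height exist because the number $(L_0+1) L^{k-1}$ of unknowns can be arranged to exceed the total number of $\overline{\mathbb{Q}}$-linear conditions over the number field $K = \mathbb{Q}(\alpha_1, \ldots, \alpha_k)$, yielding explicit bounds on the heights of the $p(\cdot)$ in terms of $L_0, L, T, h$ and the heights and degrees of the $\alpha_j$.

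Second, carry out the extrapolation step. The smallness of $\Lambda$ lets me replace $\alpha_k^{\mu z}$ by $\prod_{j<k} \alpha_j^{-\mu q_j z/q_k}$ up to an exponentially small multiplicative error, which effectively allows all $k$ logarithms to be treated on equal footing inside $\Phi$. An application of Schwarz's lemma (maximum modulus on an enlarged disk) then upgrades the known high-order vanishing at $z = 1, \ldots, h$ to high-order vanishing at $z = 1, \ldots, h'$ with $h' \gg h$, at the cost of slightly lowering the order of vanishing. Iterating this extrapolation many times produces so many zeros of $\Phi$ that a Vandermonde-type nondegeneracy (using the $\mathbb{Q}$-linear independence of $\log \alpha_1, \ldots, \log \alpha_k$) forces all coefficients $p(\cdot)$ to vanish, contradicting Siegel's lemma. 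Equivalently, the extrapolated analytic upper bound for $\Phi$ at some integer point collides with a Liouville-type lower bound for a certain nonzero algebraic number constructed from the $p(\cdot)$ and the $\alpha_j$, yielding the contradiction.

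The main obstacle --- which is the technical core of Baker's original argument --- is the careful balancing of the parameters $L_0, L, T, h$ and the number of extrapolation iterations, so that the analytic upper bound produced by the extrapolation just barely undercuts the Liouville floor, and so that the only dependence on the data $(q_1, \ldots, q_k)$ that survives in the final inequality is a single factor $(e q_{\max})^{-C}$ with $C$ depending only on $k$ and on the heights and degrees of $\alpha_1, \ldots, \alpha_k$. Tracking the growth of coefficient heights under repeated differentiation and extrapolation, and making sure the number of linear conditions imposed by the added vanishing never overtakes the number of free unknowns, is what makes this accounting delicate; once this bookkeeping is in place, everything else is a mechanical application of the tools named above.
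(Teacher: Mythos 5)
The paper does not prove this proposition at all: it is imported verbatim as Baker's theorem on linear forms in logarithms, with \cite{baker1990transcendental} standing in for the entire argument. So there is no ``paper's proof'' to match your approach against; the relevant comparison is with Baker's actual proof, which you are attempting to sketch. As an outline of that proof your proposal is broadly faithful --- the auxiliary function in $k-1$ of the exponentials $\alpha_j^{\lambda_j z}$ times a polynomial factor $z^{\lambda_0}$, coefficients from Siegel's lemma, elimination of $\beta_k$ via the assumed near-relation $q_1\beta_1+\cdots+q_k\beta_k\approx 0$, extrapolation by the Schwarz lemma against a Liouville-type floor, and a final non-vanishing contradiction is indeed the architecture of Baker's argument.

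However, what you have written is not a proof; it is a table of contents for one, and the items you defer are precisely the ones that constitute the theorem. Two of these deferrals are genuine gaps rather than routine bookkeeping. First, the ``Vandermonde-type nondegeneracy'' you invoke to force all coefficients $p(\cdot)$ to vanish is not a Vandermonde argument: the final non-vanishing step (showing the extrapolated function cannot have that many zeros of that order without being identically zero, using the $\mathbb{Q}$-linear independence of the $\beta_j$) is the deepest part of Baker's proof, and in modern treatments requires a zero estimate on the relevant algebraic group; naming it does not supply it. Second, the claim that after all the balancing ``the only dependence on $(q_1,\ldots,q_k)$ that survives is a single factor $(eq_{\max})^{-C}$'' is exactly the quantitative content of the effective version of the theorem, and asserting that the bookkeeping works out is not the same as doing it --- the choice of $L_0,L,T,h$ as explicit functions of $q_{\max}$ and the heights of the $\alpha_j$, and the verification that the analytic upper bound undercuts the Liouville lower bound with that precise exponent, is where the theorem lives. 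For the purposes of this paper the correct move is the one the author makes: cite Baker. If you intend to give a self-contained proof, you would need to supply the zero estimate and the full parameter calibration, which is a substantial undertaking well beyond the scope of a proposition invoked as a black box.
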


Gamburd, Jakobson, and Sarnak \cite[Proposition 4.3]{gamburd1999spectra} established a particular result concerning the Diophantine properties of $\SU(2,\overline{\mathbb{Q}})$, the group of $2 \times 2$ unitary matrices of determinant $1$ whose entries are algebraic numbers. The following lemma generalizes their result to $\U(d,\overline{\mathbb{Q}})$. For a group $G$, and a set of elements $S_H \subseteq G$, let $H=\langle S_H \rangle$ denote the subgroup of $G$ generated by $S_H$; for any $h \in H$, let $l(h)$ denote the length of $H$ with respect to $S_H$. Recall that the center of $\U(d,\overline{\mathbb{Q}})$ is given by $Z(\U(d,\overline{\mathbb{Q}}))=\{e^{ir}I_d:r \in \mathbb{R},e^{ir} \in \overline{\mathbb{Q}}\}$.

\begin{lemma}\label{thm:nonComAlgDiophantine}
	Consider any $S_H=\{h_1,\ldots,h_k\} \subseteq \U(d,\overline{\mathbb{Q}})$, and let $H=\langle S_H \rangle$. Then $\exists C \in \mathbb{R}_{\geq 1}$, such that $\forall h \in (H\setminus Z(\U(d,\overline{\mathbb{Q}})))$, we have $\lvert \Tr(h) \rvert \leq d- C^{-l(h)}$.
\end{lemma}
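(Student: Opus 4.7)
The plan is to interpret $d-|\Tr(h)|$ as the absolute value of a nonzero algebraic number and then invoke a Liouville-type lower bound. Specifically, I would work with
\[
\alpha \;:=\; d^2-|\Tr(h)|^2 \;=\; d^2 - \Tr(h)\,\overline{\Tr(h)} \;\in\; \mathbb{R}_{\ge 0},
\]
whose vanishing set is exactly what is needed: $\alpha=0$ iff all eigenvalues of the unitary $h$ coincide, iff $h$ is a scalar matrix, iff $h \in Z(\U(d,\overline{\mathbb{Q}}))$. Hence under the hypothesis $h\notin Z$, $\alpha$ is a nonzero real algebraic number, and the goal reduces to proving $\alpha \ge \widetilde{C}^{\,-l(h)}$ for some $\widetilde{C}\ge 1$ depending only on $S_H$; the claimed inequality then follows by dividing through by $d+|\Tr(h)| \le 2d$.

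First I would fix once and for all a number field $K\subseteq\mathbb{C}$ of degree $n:=[K:\mathbb{Q}]$ that contains every entry of every $h_i^{\pm 1}$ \emph{and} is closed under complex conjugation, so that in particular $\overline{\Tr(h)} = \Tr(h^{-1})\in K$ and thus $\alpha\in K$. I would also choose a positive rational integer $D$ clearing the denominators of the entries of the $h_i^{\pm 1}$, so that $D\,h_i^{\pm 1}$ has entries in $\mathcal{O}_K$. Both $K$ and $D$ depend only on the fixed finite set $S_H$. For an arbitrary $h\in H$ written as a word of length $l:=l(h)$ in the generators, a straightforward expansion of the matrix product shows $D^l h$ has all entries in $\mathcal{O}_K$, and consequently $D^{2l}\alpha \in \mathcal{O}_K$ as well.

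The Liouville step is then the following: the absolute value of the norm of the nonzero algebraic integer $D^{2l}\alpha$ is at least $1$, giving
\[
1 \;\le\; \left|N_{K/\mathbb{Q}}(D^{2l}\alpha)\right| \;=\; D^{2ln}\prod_{\sigma:K\hookrightarrow\mathbb{C}}|\sigma(\alpha)|.
\]
The identity embedding contributes $\alpha$, so bounding the remaining $n-1$ conjugate factors from above will finish the proof. Here the key point is that for any embedding $\sigma$, the matrix $\sigma(h)$ obtained by applying $\sigma$ entrywise equals the ordinary matrix product $\sigma(h_{i_1}^{\epsilon_1})\cdots\sigma(h_{i_l}^{\epsilon_l})$: a product of $l$ fixed complex $d\times d$ matrices whose entries are bounded by some constant $M=M(\sigma,S_H)$ independent of $h$. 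A routine matrix-product estimate then yields $|\sigma(\Tr(h))|\le d(dM)^l$, and, using $\sigma(\overline{\Tr(h)}) = \sigma(\Tr(h^{-1}))$, the same sort of bound applies to the conjugate factor, so $|\sigma(\alpha)|\le B^l$ for a constant $B$ independent of $h$. Assembling the pieces gives $\alpha \ge (D^{2n} B^{\,n-1})^{-l}$, and the statement of the lemma follows with $C := 2d\cdot D^{2n} B^{\,n-1}\ge 1$.

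The main technical obstacle is pure bookkeeping: ensuring that $K$, its degree $n$, the common denominator $D$, and the uniform bound $M$ on conjugate entry sizes can all be selected \emph{independently of $h$}, depending only on the fixed generating set $S_H$. Once this uniformity is nailed down, the argument is essentially the Liouville estimate specialized to the algebraic function $\alpha$, directly generalizing the $\SU(2,\overline{\mathbb{Q}})$ bound of Gamburd-Jakobson-Sarnak to arbitrary dimension and to $\U(d,\overline{\mathbb{Q}})$ (as opposed to $\SU$); the introduction of $\alpha=d^2-|\Tr(h)|^2$ takes the place of the special structure available in the unimodular case and is what identifies the correct nonzero algebraic number to which Liouville is to be applied.
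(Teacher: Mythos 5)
Your proof is correct, but it takes a genuinely different route from the paper's. The paper argues locally on matrix entries: it splits into two cases according to whether all diagonal entries of $h$ coincide, and in each case it isolates a single nonzero entry of an integral matrix ($h[r,r]-h[1,1]$, or an off-diagonal entry $h[r,c]$), applies the norm-of-an-algebraic-integer estimate to that one number, and then converts the resulting lower bound back into a trace bound via elementary inequalities such as $\lvert h[r,r]+h[1,1]\rvert^2+\lvert h[r,r]-h[1,1]\rvert^2\le 4$. You instead apply the Liouville/norm estimate once, globally, to $\alpha=d^2-\lvert\Tr(h)\rvert^2=d^2-\Tr(h)\Tr(h^{-1})$, whose nonvanishing is exactly equivalent to $h\notin Z(\U(d,\overline{\mathbb{Q}}))$ (for a unitary matrix, $\lvert\Tr(h)\rvert=d$ forces all eigenvalues equal, hence $h$ scalar). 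This eliminates the case analysis entirely and is arguably the more conceptual argument; the price is the mild extra bookkeeping of enlarging $K$ to be conjugation-closed (or, equivalently, observing $\overline{\Tr(h)}=\Tr(h^{-1})$ with $h^{-1}$ again a word of length $l(h)$ in the generators) so that $D^{2l}\alpha$ is a genuine element of $\mathcal{O}_K$. All the uniformity you flag (the field $K$, its degree, the denominator $D$, the conjugate entry bounds) is established in the paper's proof in exactly the way you describe, so both arguments rest on the same foundation; yours just packages the final estimate into a single application of the norm inequality rather than two entrywise ones.
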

\begin{proof}
	Notice that $Z(\U(1,\overline{\mathbb{Q}}))=\U(1,\overline{\mathbb{Q}})$, and so the conclusion is vacuously true when $d=1$; we assume for the remainder of the proof that $d \geq  2$.
	
	We begin by following, essentially, the proof of \cite[Proposition 4.3]{gamburd1999spectra}. As $S_H$ is a finite subset of $M_d(\overline{\mathbb{Q}})$, there is some finite degree extension $K$ of $\mathbb{Q}$ such that $S_H \subseteq M_d(K)$. Let $\mathcal{O}_K$ denote the ring of integers of $K$ and set $N \in \mathbb{Z}_{> 0}$ sufficiently large such that $Nh_i \in M_d(\mathcal{O}_K)$, $\forall i$. Let $s$ denote the degree of $K$ over $\mathbb{Q}$, and let $\sigma_1,\ldots,\sigma_s$ denote the $s$ distinct embeddings of $K$ in $\mathbb{C}$, where $\sigma_1$ is the identity map. Each $\sigma_j:K \rightarrow \mathbb{C}$ induces a map $M_d(K) \rightarrow M_d(\mathbb{C})$ in the obvious way, which we also denote by $\sigma_j$. For a matrix $M$, let $\lVert M \rVert$ denote the Hilbert-Schmidt norm (i.e., $\lVert M \rVert^2=\sum_{i,j} \lvert M_{ij} \rvert^2$). Let $B=\max_{i,j} \lVert \sigma_j(h_i) \rVert$, and notice that $B \geq \sqrt{d}$ as $h_j \in \U(d)$ implies $\lVert \sigma_1(h_j) \rVert=\lVert h_j \rVert=\sqrt{d}$. 
	
	Fix $h \not \in Z(\U(d,\overline{\mathbb{Q}}))$. In particular, $h \neq I_d=1_H$, and so $l(h) \geq 1$. As $\lVert \cdot \rVert$ is submultiplicative, we then have $\lVert \sigma_j(h) \rVert \leq B^{l(h)}$, $\forall j$. For $r,c \in \{1,\ldots,d\}$, and $W$ a $d \times d$ matrix, we write $W[r,c]$ to denote the entry of $W$ in row $r$ and column $c$. 
	
	There are two cases. First, suppose there is some $r$ such that $h[r,r] \neq h[1,1]$. Fix such an $r$. Let $y$ denote the $d \times d$ matrix given by $y=h-h[1,1]I_d$ and notice that $y[r,r]=h[r,r]-h[1,1] \neq 0$. For every $j$, we have $$\lvert \sigma_j(y[r,r]) \rvert = \lvert \sigma_j(h[r,r])-\sigma_j(h[1,1]) \rvert \leq \lvert \sigma_j(h[r,r])\rvert + \lvert \sigma_j(h[1,1]) \rvert \leq 2\lVert \sigma_j(h) \rVert \leq 2B^{l(h)}.$$
	
	By construction, $N^{l(h)} h \in M_d(\mathcal{O}_K)$, $\forall h \in H=\langle S_H \rangle$, which immediately implies $N^{l(h)} y=N^{l(h)}(h-h[1,1]I_d) \in M_d(\mathcal{O}_K)$. Therefore, $N^{l(h)} y[r,r]$ is some non-zero element of $\mathcal{O}_K$, which implies $\prod_j \sigma_j(N^{l(h)}y[r,r]) \in \mathbb{Z}_{\neq 0}$. By the above, $\lvert \sigma_j(N^{l(h)}y[r,r]) \rvert \leq 2(BN)^{l(h)}\leq (2BN)^{l(h)}$, $\forall j$. Therefore, $$\lvert y[r,r] \rvert=\lvert \sigma_1(y[r,r]) \rvert =N^{-l(h)} \lvert \sigma_1(N^{l(h)}y[r,r]) \rvert \geq N^{-l(h)}\frac{1}{\prod_{j>1} \lvert \sigma_j( N^{l(h)}y[r,r]) \rvert} \geq ((2B)^{d-1}N^d)^{-l(h)}.$$ Notice that $$\lvert h[r,r] + h[1,1] \rvert^2+\lvert h[r,r]- h[1,1] \rvert^2=2\lvert h[r,r] \rvert^2 + 2\lvert h[1,1] \rvert^2 \leq 4.$$ Therefore, $$\lvert h[r,r] + h[1,1] \rvert 
	\leq \sqrt{4-\lvert  h[r,r] - h[1,1]\rvert^2} \leq 2-\frac{1}{4} \lvert  h[r,r] - h[1,1]\rvert^2=2-\frac{1}{4} \lvert  y[r,r] \rvert^2 \leq 2-C^{-l(h)},$$ where $C=((2BN)^{2d}) \geq 1$ (notice $l(h) \geq 1$, $B \geq \sqrt{d} \geq 1$, and $N \geq 1$). Therefore, $$\lvert \Tr(h) \rvert=\left\lvert \sum_i h[i,i] \right\rvert \leq \lvert h[r,r]+h[1,1] \rvert + \left\lvert \sum_{i \not \in \{1,r\}}  h[i,i] \right\rvert \leq 2-C^{-l(h)} + (d-2)=d-C^{-l(h)}.$$
	
	Next, suppose instead $h[r,r]=h[1,1]$, $\forall r$. As $h \not \in Z(\U(d,\overline{\mathbb{Q}}))$, there must then be some $r,c \in \{1,\ldots,d\}$, $r \neq c$, such that $h[r,c] \neq 0$ (if there were no such $r,c$, then $h=h[1,1]I_d \in Z(\U(d,\overline{\mathbb{Q}}))$). Fix such a pair $r,c$. For every $j$, we have $$\lvert \sigma_j(h[r,c]) \rvert \leq \lVert \sigma_j(h) \rVert \leq B^{l(h)}.$$ Furthermore, $N^{l(h)}h[r,c]$ is some non-zero element of $\mathcal{O}_K$, and so $$\lvert h[r,c] \rvert =N^{-l(h)} \lvert \sigma_1(N^{l(h)} h[r,c]) \rvert \geq N^{-l(h)}\frac{1}{\prod_{j>1} \lvert \sigma_j( N^{l(h)}h[r,c]) \rvert} \geq (B^{d-1}N^d)^{-l(h)}.$$ As $\lvert h[r,r] \rvert^2+\lvert h[r,c] \rvert^2 \leq 1$, we have $$\lvert h[r,r] \rvert \leq \sqrt{1-\lvert h[r,c] \vert^2} \leq 1-\frac{1}{2} \lvert h[r,c] \rvert^2\leq 1-C^{-l(h)}.$$ Therefore, \[\lvert \Tr(h) \rvert=\left\lvert \sum_i h[i,i] \right\rvert \leq \lvert h[r,r]\rvert + \left\lvert \sum_{i \neq r}  h[i,i] \right\rvert \leq 1-C^{-l(h)} + (d-1)=d-C^{-l(h)}.\qedhere \]
\end{proof}

By expressing the above condition in the language of representation theory, we have the following.

\begin{corollary}\label{thm:characterAlgDiophantine}
	Consider a group $G=\langle S|R \rangle$, with $S$ finite, and a representation $\rho:G \rightarrow \U(d,\overline{\mathbb{Q}})$. Then $\exists C \in \mathbb{R}_{\geq 1}$ such that $\forall g \in (G \setminus \Pker(\rho))$, we have $\lvert \chi_{\rho}(g) \rvert \leq d-C^{-l(g)}$. 
\end{corollary}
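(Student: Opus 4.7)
The plan is to reduce this corollary directly to \Cref{thm:nonComAlgDiophantine} by transporting the problem from $G$ over to the image $\rho(G) \subseteq \U(d,\overline{\mathbb{Q}})$, using the images of the generators of $G$ as the finite generating set of the subgroup in the lemma.

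First, I would set $\Sigma = S \sqcup S^{-1}$ (a finite set, since $S$ is finite) and define $S_H = \rho(\Sigma) \subseteq \U(d,\overline{\mathbb{Q}})$, so that $H := \langle S_H \rangle = \rho(G)$. Applying \Cref{thm:nonComAlgDiophantine} to $S_H$ yields a constant $C \in \mathbb{R}_{\geq 1}$ such that for every $h \in H \setminus Z(\U(d,\overline{\mathbb{Q}}))$ we have $|\Tr(h)| \leq d - C^{-l_{S_H}(h)}$, where $l_{S_H}$ is the word length in $H$ with respect to $S_H$.

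Next I would verify the two translations needed to turn this into the statement about $\rho$ and $G$. Quasikernel translation: by definition $g \in \Pker(\rho)$ iff $\rho(g) \in Z(\U(d))$, and since $\rho(g) \in \U(d,\overline{\mathbb{Q}})$ automatically, this is equivalent to $\rho(g) \in Z(\U(d,\overline{\mathbb{Q}}))$; hence $g \in G \setminus \Pker(\rho)$ iff $\rho(g) \in H \setminus Z(\U(d,\overline{\mathbb{Q}}))$. Length translation: if $g = \phi(\sigma_1 \cdots \sigma_m)$ is a minimal-length expression of $g$ over $\Sigma$, so $m = l_S(g)$, then $\rho(g) = \rho(\sigma_1)\cdots\rho(\sigma_m)$ is an expression of $\rho(g)$ over $S_H$, yielding $l_{S_H}(\rho(g)) \leq l_S(g)$.

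Combining these, for any $g \in G \setminus \Pker(\rho)$, since $C \geq 1$ the function $m \mapsto C^{-m}$ is non-increasing, so $C^{-l_{S_H}(\rho(g))} \geq C^{-l_S(g)}$, and therefore
\[
|\chi_\rho(g)| = |\Tr(\rho(g))| \leq d - C^{-l_{S_H}(\rho(g))} \leq d - C^{-l(g)},
\]
which is the desired inequality. Since the argument is essentially a dictionary translation from matrix subgroups to representations, I do not anticipate any real obstacle; the only points requiring care are that the generating set $S_H$ in the lemma must be finite (guaranteed by $|S|<\infty$) and that word length can only decrease under the surjective homomorphism $\phi \circ \rho^{-1}$-style identification, which is what gives the necessary monotonicity in the final bound.
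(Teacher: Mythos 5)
Your proof is correct and is exactly the translation the paper intends: the corollary is stated as an immediate consequence of \Cref{thm:nonComAlgDiophantine}, obtained by applying that lemma to $S_H=\rho(S\sqcup S^{-1})$ and noting that $g\in\Pker(\rho)$ iff $\rho(g)\in Z(\U(d,\overline{\mathbb{Q}}))$ and that $l_{S_H}(\rho(g))\le l_S(g)$. The monotonicity step using $C\ge 1$ is the right final touch, and there is no gap.
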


\subsection{Constructions of DFRs}\label{sec:dfr:constructions}

We now show that a wide collection of groups virtually have DFRs with good parameters. We accomplish this by first constructing DFRs for only a small family of special groups. We then present several constructions in which a DFR for a group, or more generally a family of DFRs for a family of groups, is used to produce a DFR for a related group. 

We begin with a straightforward lemma expressing a useful character bound. In this lemma, and throughout this section, we continue to write group operations multiplicatively, and so, for $g \in G$ and $h \in \mathbb{Z}$, if $h>0$ (resp. $h<0$) then $g^h$ denotes the element of $G$ obtained by combining $h$ copies of $g$ (resp. $g^{-1}$) with the group operation, and if $h=0$ then $g^h=1_G$. Let $S_1=\{e^{ir}:r \in \mathbb{R}\} \subseteq \mathbb{C}^{*}$ denote the circle group and let $\T(d) \subseteq \U(d)$ denote the group of all $d \times d$ diagonal matrices where each diagonal entry lies in $S_1$. For $\mathbb{A} \subseteq \mathbb{C}$, let $S_1(\mathbb{A})=S_1 \cap \mathbb{A}$ and $\T(d,\mathbb{A})=\T(d) \cap M(d,\mathbb{A})$. Let $\mathbf{1}_d:G \rightarrow \U(d)$ denote the trivial representation of dimension $d$ (i.e., $\mathbf{1}_d(g)=I_d=1_{\U(d)}$, $\forall g \in G$). For a cyclic group $G=\langle a|R_G \rangle$ and for some $r \in \mathbb{R}$, define the representation $\widehat{\gamma}_r:G \rightarrow S_1 \cong \U(1)$ such that $a \mapsto e^{2\pi i r}$; furthermore, define the representation $\gamma_r:G \rightarrow \T(2)$ by $\gamma_r=\widehat{\gamma}_r \oplus \mathbf{1}_1$.

\begin{lemma}\label{thm:dfr:diagCharBound}
	Consider the cyclic group $G=\langle a|R_G \rangle$. Fix $r \in \mathbb{R}$ and define $\gamma_r:G \rightarrow \T(2)$ as above. Suppose that $h \in \mathbb{Z}$ and $\epsilon \in \mathbb{R}_{>0}$ satisfy $\lVert hr \rVert \geq \epsilon$. Then $\chi_{\gamma_r}(a^h) \leq 2-\frac{19\pi^2}{24}\epsilon^2$.
\end{lemma}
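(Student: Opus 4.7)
The plan is to reduce the claim to a direct scalar inequality about $|1 + e^{2\pi i hr}|$ and then invoke a short Taylor estimate. First, I would compute the character explicitly. Since $\gamma_r = \widehat{\gamma}_r \oplus \mathbf{1}_1$ sends $a$ to the diagonal matrix $\operatorname{diag}(e^{2\pi i r}, 1)$, we have $\gamma_r(a^h) = \operatorname{diag}(e^{2\pi i h r}, 1)$, and therefore
\[
    \chi_{\gamma_r}(a^h) = 1 + e^{2\pi i h r}, \qquad |\chi_{\gamma_r}(a^h)|^2 = 2 + 2\cos(2\pi h r) = 4\cos^2(\pi h r).
\]
(I read the inequality in the statement as a bound on $|\chi_{\gamma_r}(a^h)|$, in line with Definition~\ref{def:dfr} and Proposition~\ref{thm:dfr:intersectPKer}.) Hence $|\chi_{\gamma_r}(a^h)| = 2|\cos(\pi h r)|$.

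Next, I would rewrite $|\cos(\pi h r)|$ in terms of the nearest-integer distance $\|hr\|$. Writing $hr = m + \delta$ with $m \in \mathbb{Z}$ and $|\delta| = \|hr\| \in [0, 1/2]$, the identity $\cos(\pi m + \pi\delta) = (-1)^m \cos(\pi\delta)$ gives $|\cos(\pi h r)| = \cos(\pi\|hr\|)$. Since $\cos$ is monotone decreasing on $[0,\pi/2]$ and $\pi\|hr\| \in [0,\pi/2]$, the hypothesis $\|hr\| \geq \epsilon$ yields $|\cos(\pi h r)| \leq \cos(\pi\epsilon)$. (If $\epsilon > 1/2$, the hypothesis is vacuous since $\|hr\| \leq 1/2$ always.)

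The remaining step is a Taylor estimate: I would show that for $x \in [0, \pi/2]$,
\[
    \cos(x) \leq 1 - \tfrac{x^2}{2} + \tfrac{x^4}{24} \leq 1 - \tfrac{19}{48}x^2.
\]
The first inequality follows from the alternating Taylor series for $\cos$ (the remainder after the $x^4/24$ term is $-x^6/720 + \cdots \leq 0$ on this interval). The second is equivalent, after rearrangement, to $x^2 \leq 5/2$, which holds on $[0,\pi/2]$ since $(\pi/2)^2 < 5/2$. Applying this with $x = \pi\epsilon \leq \pi/2$ gives $\cos(\pi\epsilon) \leq 1 - \tfrac{19\pi^2}{48}\epsilon^2$, and doubling yields the asserted bound $|\chi_{\gamma_r}(a^h)| \leq 2 - \tfrac{19\pi^2}{24}\epsilon^2$.

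There isn't really a serious obstacle here; the only care needed is verifying that the specific rational constant $19/48$ is compatible with the interval $[0,\pi/2]$ (which reduces to checking $(\pi/2)^2 < 5/2$), and tracking the factor of $2$ coming from the trivial summand $\mathbf{1}_1$ in the direct sum.
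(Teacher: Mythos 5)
Your proposal is correct and follows essentially the same route as the paper's proof: compute $\lvert \chi_{\gamma_r}(a^h)\rvert = 2\lvert\cos(\pi hr)\rvert$, reduce to $2\cos(\pi\epsilon)$ via $\lVert hr\rVert \geq \epsilon$ (noting $\epsilon \leq \tfrac12$), and finish with the truncated alternating Taylor series for cosine together with the observation that $(\pi/2)^2 < 5/2$. Your reading of the conclusion as a bound on $\lvert\chi_{\gamma_r}(a^h)\rvert$ is also the one the paper intends.
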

\begin{proof}
	We have $\chi_{\gamma_r}(a^h)=e^{2\pi ihr}+1=2e^{\pi ihr}\cos(\pi hr)$. Clearly, $\epsilon \leq \frac{1}{2}$. Therefore, \[\lvert \chi_{\gamma_r}(a^h) \rvert=2\lvert \cos(\pi hr) \rvert \leq 2 \cos(\pi \epsilon)\leq 2\left(1-\frac{(\pi \epsilon)^2}{2}  +\frac{(\pi\epsilon)^4}{24}\right)\leq 2-\frac{19\pi^2}{24}\epsilon^2. \qedhere \]
\end{proof}

We first construct DFRs for a very narrow class of special groups: (i) $\mathbb{Z}_m=\langle a| a^m \rangle$, the integers modulo $m$, where the group operation is addition, (ii) $\mathbb{Z}=\langle a| \rangle$, the integers, where the group operations is addition, and (iii) $F_2=\langle a,b |\rangle$ the (non-abelian) free group of rank $2$. 

\begin{lemma}\label{thm:dfr:init:Z/mZ} 
	$\mathbb{Z}_m=\langle a| a^m \rangle$ has a diagonal algebraic $\left[1,2,\frac{19 \pi^2}{24 m^2}\right]$-DFR, $\forall m \in \mathbb{N}_{\geq 2}$.	
\end{lemma}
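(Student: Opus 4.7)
The plan is to instantiate the general construction $\gamma_r$ defined just before Lemma~\ref{thm:dfr:diagCharBound} at the specific value $r = 1/m$, and then verify the three conditions of \Cref{def:dfr} using \Cref{thm:dfr:diagCharBound}.

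First I would set $\rho := \gamma_{1/m}: \mathbb{Z}_m \to \T(2)$, so that $\rho(a) = \mathrm{diag}(e^{2\pi i/m}, 1)$. I need to check that this is well-defined on $\mathbb{Z}_m = \langle a | a^m \rangle$: since $\rho(a^m) = \mathrm{diag}(e^{2\pi i}, 1) = I_2$, the relation is respected, so $\rho$ descends to a genuine representation of $\mathbb{Z}_m$. Set $\mathcal{F} = \{\rho\}$. Condition (\ref{def:dfr:unitary}) is immediate since $\T(2) \subseteq \U(2)$. For condition (\ref{def:dfr:amplitudes}), note that $\rho(a) = \mathrm{diag}(e^{2\pi i/m}, 1)$ and $\rho(a^{-1}) = \mathrm{diag}(e^{-2\pi i/m}, 1)$ both lie in $\U(2, \overline{\mathbb{Q}})$, since $e^{\pm 2\pi i/m}$ are roots of unity (hence algebraic). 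Thus each $\rho(\sigma)$ for $\sigma \in \{a, a^{-1}\}$ is already a single factor from $\U(2, \overline{\mathbb{Q}})$, as required. The diagonal property is built into the construction of $\gamma_r$.

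For the distinguishing condition (\ref{def:dfr:distinguish}), any $g \in (\mathbb{Z}_m)_{\neq 1}$ can be written as $g = a^h$ for some $h \in \{1, \ldots, m-1\}$ (and of course $l(g) \leq m$, though this will not matter since $\tau$ is a constant function of $m$). Since $h \not\equiv 0 \pmod m$, the fraction $h/m$ is not an integer, and moreover the distance from $h/m$ to the nearest integer is at least $1/m$; that is, $\lVert h \cdot (1/m) \rVert \geq 1/m$. Applying \Cref{thm:dfr:diagCharBound} with $r = 1/m$ and $\epsilon = 1/m$ then yields
\[
\lvert \chi_{\rho}(a^h) \rvert \leq 2 - \frac{19\pi^2}{24 m^2} = d - \tau(l(g)),
\]
with $\tau$ the constant function $\tau \equiv 19\pi^2/(24 m^2)$, which is monotone non-increasing as required.

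There is no serious obstacle: the work has been packaged into \Cref{thm:dfr:diagCharBound}, and the only nontrivial input is the elementary observation that $\lVert h/m \rVert \geq 1/m$ for $h \not\equiv 0 \pmod m$. The construction is essentially forced by the requirement that $\rho$ be a two-dimensional diagonal representation with algebraic entries whose image separates every nontrivial element from the center $Z(\U(2))$.
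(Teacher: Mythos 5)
Your proposal is correct and follows essentially the same route as the paper: instantiate $\gamma_r$ at $r=1/m$, observe $\lVert h/m\rVert \geq 1/m$ for $h \not\equiv 0 \pmod m$, and apply \Cref{thm:dfr:diagCharBound}. The extra details you supply (well-definedness on the quotient, explicit verification of conditions (\ref{def:dfr:unitary}) and (\ref{def:dfr:amplitudes})) are all consistent with what the paper leaves implicit.
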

\begin{proof}
	Fix $m \in \mathbb{N}_{\geq 2}$ and let $r=\frac{1}{m}$. Define $\gamma_r:\mathbb{Z}_m \rightarrow \T(2)$ as above, and notice that $\gamma_r(\mathbb{Z}_m) \subseteq \T(2,\overline{\mathbb{Q}})$. Consider any $q \in \mathbb{Z}_m$, where $q \not \equiv 0 \mod m$. Then $q$ can be expressed as $q=a^h$, for $h\in \mathbb{Z}$, $h \not \equiv 0 \mod m$. As $\lVert hr \rVert \geq \frac{1}{m}$, \Cref{thm:dfr:diagCharBound} implies $\lvert \chi_{\gamma_r}(q) \rvert \leq 2-\frac{19 \pi^2}{24 m^2}$. Therefore, $\{\gamma_r\}$ is a diagonal algebraic DFR for $\mathbb{Z}_m$, with the desired parameters.   
\end{proof}

\begin{lemma}\label{thm:dfr:init:Z:nonAlg} 
	$\forall \delta \in \mathbb{R}_{>0}, \exists C \in \mathbb{R}_{>0}$, $\mathbb{Z}=\langle a| \rangle$ has a diagonal $[1+\lfloor \frac{2}{\delta} \rfloor,2,Cn^{-\delta},\widetilde{\mathbb{C}}]$-DFR.	
\end{lemma}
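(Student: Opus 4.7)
The plan is to apply Schmidt's theorem (\Cref{thm:schmidt}) to produce a family of one-dimensional unitary characters of $\mathbb{Z}$ whose combined distinguishing power matches the parameter $Cn^{-\delta}$, and then to lift each character to a diagonal representation of dimension $2$ by direct summing with the trivial representation (so that we land in the form $\gamma_r$ used by \Cref{thm:dfr:diagCharBound}).

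First, given $\delta > 0$, set $k := 1 + \lfloor 2/\delta \rfloor$; the key arithmetic fact is that $k > 2/\delta$, so there exists $\epsilon > 0$ small enough that $2(1/k + \epsilon) \leq \delta$. Next, choose real algebraic numbers $\alpha_1,\ldots,\alpha_k$ such that $1,\alpha_1,\ldots,\alpha_k$ are linearly independent over $\mathbb{Q}$; the concrete choice $\alpha_j = \sqrt{p_j}$ for $p_1 < p_2 < \cdots < p_k$ distinct primes works by the standard linear independence of square roots of distinct squarefree integers. Define, for each $j$, the representation $\rho_j := \gamma_{\alpha_j} : \mathbb{Z} \to \T(2)$, so that $\rho_j(a) = \mathrm{diag}(e^{2\pi i \alpha_j}, 1)$. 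Since $2\alpha_j \in \overline{\mathbb{Q}} \cap \mathbb{R}$, we have $e^{2\pi i \alpha_j} = e^{\pi i (2\alpha_j)} \in \widetilde{\mathbb{C}}$, so $\rho_j(a), \rho_j(a^{-1}) \in \U(2,\widetilde{\mathbb{C}})$, verifying condition (\ref{def:dfr:amplitudes}) of \Cref{def:dfr}. Clearly $\mathcal{F} := \{\rho_1, \ldots, \rho_k\}$ is diagonal.

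To verify the distinguishing property, fix any $g = a^h \in \mathbb{Z}_{\neq 1}$ with $h \in \mathbb{Z} \setminus \{0\}$; note $l(g) = |h|$. By \Cref{thm:schmidt} applied with the chosen $\epsilon$, there exists $C_0 > 0$ (depending only on $\delta$ and the $\alpha_j$) and some index $j$ with
\[
\|h \alpha_j\| \geq C_0 |h|^{-(1/k + \epsilon)}.
\]
Applying \Cref{thm:dfr:diagCharBound} with this $j$ (and $\epsilon_j := C_0 |h|^{-(1/k+\epsilon)}$, which is $\leq 1/2$ for $|h|$ large enough; the finitely many small $|h|$ are handled by shrinking the constant),
\[
|\chi_{\rho_j}(a^h)| \leq 2 - \frac{19\pi^2}{24} C_0^2 \, |h|^{-2(1/k + \epsilon)} \leq 2 - C |h|^{-\delta},
\]
using $2(1/k + \epsilon) \leq \delta$ and $|h| \geq 1$. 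Since $l(g) = |h|$, this establishes condition (\ref{def:dfr:distinguish}) with $\tau(n) = C n^{-\delta}$, completing the construction.

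The only real content is the application of Schmidt's subspace-style bound; the main step to verify with care is the arithmetic inequality $k = 1 + \lfloor 2/\delta \rfloor > 2/\delta$, which gives the slack needed to absorb the $\epsilon$ in Schmidt's exponent into the target exponent $\delta$. Everything else---unitarity, diagonality, the membership of the amplitudes in $\widetilde{\mathbb{C}}$, and the passage from the Diophantine bound on $\|h\alpha_j\|$ to the character bound via \Cref{thm:dfr:diagCharBound}---is routine. I do not anticipate a genuine obstacle; the lemma is essentially the statement that Schmidt's theorem yields exactly the DFR parameters one would predict by dimension counting.
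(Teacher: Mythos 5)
Your proposal is correct and follows essentially the same route as the paper: it sets $k=1+\lfloor 2/\delta\rfloor$, picks $\alpha_1,\ldots,\alpha_k$ real algebraic with $1,\alpha_1,\ldots,\alpha_k$ linearly independent over $\mathbb{Q}$, invokes Schmidt's theorem (\Cref{thm:schmidt}) with a small $\epsilon$ chosen so the exponent is at most $\delta/2$, and converts the bound on $\lVert h\alpha_j\rVert$ into a character bound via \Cref{thm:dfr:diagCharBound}. (Your side remark about small $|h|$ is unnecessary, since $\lVert h\alpha_j\rVert\leq 1/2$ automatically forces the Schmidt lower bound to be at most $1/2$, but this does not affect correctness.)
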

\begin{proof}
	Let $k=1+\lfloor \frac{2}{\delta} \rfloor$ and $\eta=\frac{\delta}{2}-\frac{1}{k}>0$. Fix $\alpha_1,\ldots,\alpha_k \in (\overline{\mathbb{Q}} \cap \mathbb{R})$ such that $1,\alpha_1,\ldots,\alpha_k$ are linearly independent over $\mathbb{Q}$. For each $j \in \{1,\ldots,k\}$ define the representation $\gamma_{\alpha_j}:\mathbb{Z} \rightarrow \T(2)$ as above, and notice that $\gamma_{\alpha_j}(\mathbb{Z}) \subseteq \T(2,\widetilde{\mathbb{C}})$. By \Cref{thm:schmidt}, $\exists D \in \mathbb{R}_{>0}$, such that $\forall q \in \mathbb{Z}_{\neq 0}$ (i.e., $\forall q \in \mathbb{Z}$ where $q \neq 0=1_{\mathbb{Z}}$), $\exists j$ such that $\lVert q \alpha_j \rVert \geq D\lvert q\rvert^{-(\frac{1}{k}+\eta)}=D\lvert q\rvert^{-\frac{\delta}{2}}$. Therefore, for any $q \in \mathbb{Z}_{\neq 0}$, if we take $j$ as above, then by \Cref{thm:dfr:diagCharBound} (with $r=\alpha_j$, $\epsilon=D\lvert q\rvert^{-\frac{\delta}{2}}$, and $h=q$) we have $\lvert \chi_{\gamma_{\alpha_j}}(q) \rvert \leq 2-\frac{19 \pi^2}{24}D^2 \lvert q \rvert^{-\delta}$. Therefore, $\{\gamma_{\alpha_1},\ldots,\gamma_{\alpha_k}\}$ is a diagonal $[1+\lfloor \frac{2}{\delta} \rfloor,2,\frac{19 \pi^2}{24} D^2 n^{-\delta},\widetilde{\mathbb{C}}]$-DFR for $\mathbb{Z}$.
\end{proof}

\begin{lemma}\label{thm:dfr:init:Z:alg} 
	$\exists C_1,C_2 \in \mathbb{R}_{>0}$ such that $\mathbb{Z}=\langle a| \rangle$ has a diagonal algebraic $[1,2,C_2 n^{-C_1}]$-DFR.	
\end{lemma}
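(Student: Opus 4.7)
The plan is to follow the same template as \Cref{thm:dfr:init:Z:nonAlg}, using a single diagonal representation $\gamma_r$ for a carefully chosen $r \in \mathbb{R}$, but to replace the appeal to Schmidt's theorem (\Cref{thm:schmidt}) with an appeal to Baker's theorem on linear forms in logarithms (\Cref{thm:baker}). This is forced on us by the algebraic constraint: we need the single entry $e^{2\pi i r}$ of the representation to lie in $\overline{\mathbb{Q}}$, which in turn forces $r$ to be transcendental (by Gelfond--Schneider) unless it is rational.

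The first step is to fix a specific algebraic number $\alpha$ on the unit circle that is not a root of unity, for example $\alpha = (3+4i)/5$. Writing $\alpha = e^{i\theta}$ for a real $\theta$, set $r = \theta/(2\pi)$, so that $e^{2\pi i r} = \alpha \in \overline{\mathbb{Q}}$. Since $\alpha$ is not a root of unity, $r$ is irrational. Define $\gamma_r : \mathbb{Z} \to \T(2)$ as in the paragraph preceding \Cref{thm:dfr:diagCharBound}; then $\gamma_r(\mathbb{Z}) \subseteq \T(2,\overline{\mathbb{Q}})$, so the representation is diagonal and algebraic, with $k = 1$ and $d = 2$.

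The main step is to establish a polynomial lower bound on $\lVert h r \rVert$ for every $h \in \mathbb{Z}_{\neq 0}$. For such $h$, let $m \in \mathbb{Z}$ be the nearest integer to $hr$, so $\lVert h r \rVert = |hr - m| = |h\theta - 2\pi m|/(2\pi)$. Set $\beta_1 = i\theta = \log\alpha$ (principal branch) and $\beta_2 = 2\pi i$; both lie in the set $L$ of \Cref{thm:baker} since $e^{\beta_1} = \alpha \in \overline{\mathbb{Q}}$ and $e^{\beta_2} = 1 \in \overline{\mathbb{Q}}$. A short check shows $\beta_1,\beta_2$ are $\mathbb{Q}$-linearly independent: any non-trivial $\mathbb{Q}$-linear relation would force $\theta$ to be a rational multiple of $2\pi$, contradicting $\alpha$ not being a root of unity. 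Then $i(h\theta - 2\pi m) = h\beta_1 - m\beta_2$, so
\[
|h\theta - 2\pi m| = |h\beta_1 - m\beta_2| \geq (e\cdot \max(|h|,|m|))^{-C_0}
\]
for some constant $C_0 > 0$ from \Cref{thm:baker}. Since $|m| \leq |h|\,|r| + 1/2$, we have $\max(|h|,|m|) \leq C'|h|$ for a constant $C'$, yielding $\lVert hr \rVert \geq C_3 |h|^{-C_0}$ for a constant $C_3 > 0$.

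The final step is routine: since $l(a^h) = |h|$ in $\mathbb{Z}$ with generator $a$, applying \Cref{thm:dfr:diagCharBound} with $\epsilon = C_3 |h|^{-C_0}$ yields $|\chi_{\gamma_r}(a^h)| \leq 2 - \frac{19\pi^2}{24} C_3^2 |h|^{-2C_0}$, so $\{\gamma_r\}$ is a diagonal algebraic $[1,2,C_2 n^{-C_1}]$-DFR for $\mathbb{Z}$ with $C_1 = 2C_0$ and $C_2 = \frac{19\pi^2}{24} C_3^2$. The main obstacle, conceptually, is the need to choose a transcendental $r$ with $e^{2\pi i r}$ algebraic and to control it quantitatively; Baker's theorem is exactly the tool that provides the polynomial-rate irrationality measure needed to replace Schmidt's simultaneous-approximation bound from the non-algebraic case.
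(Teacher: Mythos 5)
Your proposal is correct and follows essentially the same route as the paper's proof: both fix an algebraic number on the unit circle that is not a root of unity (indeed the same one, $\tfrac{3+4i}{5}$), reduce the bound on $\lVert hr\rVert$ to a linear form in two logarithms, and invoke Baker's theorem (\Cref{thm:baker}) to obtain the polynomial irrationality measure, finishing with \Cref{thm:dfr:diagCharBound}. The only cosmetic difference is that the paper works with the pair $\pi i r,\ \pi i$ and uses $r\in(0,1)$ to bound $\max(\lvert q\rvert,\lvert m\rvert)$ by $\lvert q\rvert$, whereas you use $2\pi i r,\ 2\pi i$ and bound $\lvert m\rvert\leq \lvert h\rvert\lvert r\rvert+\tfrac12$; these are equivalent.
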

\begin{proof}
	As in \Cref{thm:baker}, let $L=\{\beta \in \mathbb{C}_{\neq 0}:e^{\beta} \in \overline{\mathbb{Q}}\}$ and notice that $\pi i \in L$. Let $R=\{r \in ((\mathbb{R} \setminus \mathbb{Q})\cap (0,1)):2 \pi i r \in L\}$ (e.g., $\hat{r}=\frac{1}{2 \pi}\cos^{-1}\left(\frac{3}{5}\right)$ is irrational and has $e^{2 \pi i \hat{r}}=\frac{3+4i}{5}$, and so $\hat{r} \in R$). Fix $r \in R$. By definition, $2 \pi i r \in L$, which immediately implies $\pi i r \in L$. Also by definition, $r \not \in \mathbb{Q}$, which implies $\pi i r$ and $\pi i$ are linearly independent over $\mathbb{Q}$. Therefore, by \Cref{thm:baker}, $\exists D \in \mathbb{R}_{>0}$ such that $\forall (q,m) \in \mathbb{Z}^2$ where $q_{max}:=\max(\lvert q\rvert,\lvert m \rvert)>0$, we have $\lvert q\pi i r -m \pi i \rvert \geq (e q_{max})^{-D}$.
	
	For fixed $q \in \mathbb{Z}_{\neq 0}$ and varying $m \in \mathbb{Z}$, $\lvert q \pi i r -m \pi i \rvert$ attains its minimum when $m=\text{round}(qr)$, the closest integer to $qr$. Notice that $\lvert \text{round}(qr) \rvert \leq \lvert q \rvert$, as $r \in (0,1)$ and $q \in \mathbb{Z}$. Therefore, for any  $q \in \mathbb{Z}_{\neq 0}$, we have \[\lVert qr \rVert=\min_{m \in \mathbb{Z}} \lvert qr-m \rvert=\frac{1}{\pi} \min_{m \in \mathbb{Z}}\lvert q \pi i r-m \pi i \rvert=\frac{1}{\pi} \lvert q \pi i r-\text{round}(qr) \pi i \rvert \geq \frac{1}{\pi}\lvert eq \rvert^{-D}.\]
	
	Define $\gamma_r:\mathbb{Z} \rightarrow \T(2)$ as above. By \Cref{thm:dfr:diagCharBound}, $\lvert \chi_{\gamma_r}(q) \rvert \leq 2-\frac{19}{24} \lvert eq \rvert^{-2D}$. Clearly, $\gamma_r(\mathbb{Z}) \subseteq \T(2,\overline{\mathbb{Q}})$. Therefore, $\{\gamma_r\}$ is a diagonal algebraic $[1,2,\frac{19}{24} e^{-2D} n^{-2D}]$-DFR for $\mathbb{Z}$. 	
\end{proof}

\begin{remark}
	We note that the above constructions of DFRs for $\mathbb{Z}$ are quite similar to the technique used by Ambainis and Watrous \cite{ambainis2002two} to produce a 2QCFA that recognizes $L_{eq}$ (cf. \cite{brodsky2002characterizations,rabin1963probabilistic}). In particular, their approach relied on the fact that the number $\sqrt{2} \in \overline{\mathbb{Q}}$ is poorly approximated by rationals; our constructions make use of more general Diophantine approximation results. This allows us to produce 2QCFA with improved parameters.
\end{remark}

\begin{lemma}\label{thm:dfr:init:F2} 
	$\exists C \in \mathbb{R}_{\geq 1}$, such that $F_2=\langle a,b |\rangle$ has an algebraic $[1,2,C^{-n}]$-DFR.	
\end{lemma}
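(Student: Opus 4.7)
The plan is to produce a single projectively faithful (P-faithful) representation $\rho : F_2 \to \U(2, \overline{\mathbb{Q}})$ and then invoke \Cref{thm:characterAlgDiophantine} directly on it. This route is the one suggested by the non-commutative Diophantine bound of \Cref{thm:nonComAlgDiophantine}: that lemma gives $\lvert \Tr(h) \rvert \leq d - C^{-l(h)}$ for every element $h$ of an algebraic matrix group that lies outside $Z(\U(d, \overline{\mathbb{Q}}))$, and requiring $\rho$ to be P-faithful is precisely what ensures that only $1_{F_2}$ maps into the center. Because the target parameters are $k = 1$ and $d = 2$, there is no room to combine multiple representations, so the P-faithfulness of the single chosen $\rho$ is essential.

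The key step is then to find matrices $A, B \in \SU(2, \overline{\mathbb{Q}})$ whose images in $\PSU(2, \overline{\mathbb{Q}}) \cong \SO(3, \overline{\mathbb{Q}})$ generate a free subgroup of rank $2$. The existence of such a pair is a classical result, going back to work of Hausdorff, Sierpi\'nski, and \'Swierczkowski in connection with the Banach--Tarski paradox: two rotations of $\mathbb{R}^3$ by angle $\arccos(3/5)$ about any pair of perpendicular axes generate a free subgroup of $\SO(3)$ of rank $2$; these rotations have matrix entries in $\mathbb{Q}$, and their lifts to $\SU(2)$ can be chosen with entries in $\mathbb{Q}(i) \subseteq \overline{\mathbb{Q}}$.

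Given such a pair $A, B$, define $\rho : F_2 \to \U(2, \overline{\mathbb{Q}})$ by $\rho(a) = A$ and $\rho(b) = B$; this is well-defined because $F_2$ is free on $\{a, b\}$. By the choice of $A, B$, the composition $\tau \circ \rho : F_2 \to \PU(2, \overline{\mathbb{Q}})$ is injective, so $\Pker(\rho) = \ker(\tau \circ \rho) = \{1_{F_2}\}$. Applying \Cref{thm:characterAlgDiophantine} to $\rho$ then yields some $C \in \mathbb{R}_{\geq 1}$ with $\lvert \chi_\rho(g) \rvert \leq 2 - C^{-l(g)}$ for every $g \in (F_2)_{\neq 1}$; condition (c) of \Cref{def:dfr} holds trivially since $\rho(a)^{\pm 1}, \rho(b)^{\pm 1}$ already lie in $\U(2, \overline{\mathbb{Q}})$ (take each product to consist of a single matrix). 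Hence $\mathcal{F} = \{\rho\}$ is the desired algebraic $[1, 2, C^{-n}]$-DFR. The sole non-trivial ingredient is the classical existence of a free pair in $\SU(2, \overline{\mathbb{Q}})$; the exponential (rather than polynomial) form of $\tau$ is inherited directly from \Cref{thm:characterAlgDiophantine} and reflects the fact that $F_2$ admits no non-trivial finite-dimensional diagonal (abelian-target) representation, so the commutative Diophantine machinery of Schmidt and Baker used for $\mathbb{Z}$ is unavailable here.
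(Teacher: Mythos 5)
Your proposal is correct and follows essentially the same route as the paper: both take the classical free pair of rotations in $\SO(3,\mathbb{Q})$ from the Banach--Tarski literature, lift it through the double cover to a P-faithful representation $\rho:F_2\rightarrow\SU(2,\overline{\mathbb{Q}})$ via the universal property of the free group, and then apply \Cref{thm:characterAlgDiophantine}. One harmless slip: the $\SU(2)$ lifts of those rotations have trace $\pm 4/\sqrt{5}$, so their entries lie in $\mathbb{Q}(i,\sqrt{5})$ rather than $\mathbb{Q}(i)$ as you claim, but all that is needed is that they lie in $\overline{\mathbb{Q}}$.
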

\begin{proof}
	First, define the representation $\pi:F_2 \rightarrow SO(3,\mathbb{Q})$ by 
	\[a \mapsto \frac{1}{5}\begin{pmatrix} 3 & -4 & 0\\4 & 3 & 0\\0 & 0 & 5\end{pmatrix}\text{ and } 
	b \mapsto \frac{1}{5}\begin{pmatrix} 5 & 0 & 0\\0 & 3 & -4\\0 & 4 & 3\end{pmatrix}.\]
	
	This is the ``standard'' faithful representation of $F_2$ into $\SO(3)$ used in many treatments of the Banach-Tarski paradox. Recall that $\SU(2)$ is the double cover of $\SO(3)$, i.e., $\SU(2)/Z(\SU(2)) \cong \SO(3)$. Then $\pi$ induces a homomorphism $\widehat{\pi}:F_2 \rightarrow \SU(2)/Z(\SU(2))$ in the obvious way, which, by the universal property of the free group, can be lifted to the representation $\rho:F_2 \rightarrow \SU(2,\overline{\mathbb{Q}})$ given by
	\[a \mapsto \frac{1}{\sqrt{5}}\begin{pmatrix} 2+i & 0\\0 & 2-i\end{pmatrix}\text{ and } 
	b \mapsto \frac{1}{\sqrt{5}}\begin{pmatrix} 2 & i\\i & 2\end{pmatrix}.\]
	
	As $\pi$ is faithful, we conclude that $\rho(g) \not\in Z(\SU(2))$, $\forall g \in (F_2 \setminus 1_{F_2})$. Therefore, by \Cref{thm:characterAlgDiophantine}, $\{\rho\}$ is an algebraic $[1,2,C^{-n}]$-DFR for $F_2$.
\end{proof}

\begin{remark}
	Note that the proof of the preceding lemma uses, fundamentally, the same construction used by Ambainis and Watrous \cite{ambainis2002two} to produce a 2QCFA for $L_{pal}$ (which is closely related to $F_2$). The algebraic structure of $F_2$ allows a substantially simpler argument.
\end{remark}

We now present several constructions of new DFRs from existing DFRs. We emphasize that all results in the following lemmas are constructive in the sense that, given the supposed DFR or collection of DFRs, each corresponding proof provides an explicit construction of the new DFR. We begin by considering conversions of a DFR of a group $G$ to a DFR with different parameters of the same group $G$. For $C \in \mathbb{R}_{>0}$, let $\eta_C:\mathbb{R}_{>0} \rightarrow \mathbb{R}_{>0}$ be given by $\eta_C(n)=Cn$.

\begin{lemma}\label{thm:dfr:conv}
	Suppose $\mathcal{F}$ is a $[k,d,\tau,\mathbb{A}]$-DFR for a group $G=\langle S|R \rangle$, with $S$ finite. The following statements hold.
	\begin{enumerate}[(i)]
		\item \label{thm:dfr:conv:combine}$G$ has a $[1,kd,\tau,\mathbb{A}]$-DFR.
		\item \label{thm:dfr:conv:increaseDimension} If $d' \in \mathbb{N}$ and $d'>d$, then $G$ has a $[k,d',\tau,\mathbb{A}]$-DFR. 
		\item \label{thm:dfr:conv:switchPresentation} Suppose $G$ also has presentation $\langle S'|R' \rangle$, with $S'$ finite. Then $\exists C \in \mathbb{R}_{>0}$ such that $\mathcal{F}$ is also a $[k,d,\tau \circ \eta_C,\mathbb{A}]$-DFR for $G=\langle S'|R' \rangle$.		
	\end{enumerate}
	Moreover, if $\mathcal{F}$ is a diagonal DFR, then each newly constructed DFR is also diagonal.
\end{lemma}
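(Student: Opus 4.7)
The plan is to prove each of (i), (ii), (iii) by an explicit construction that, by inspection, sends diagonal representations to diagonal representations, so the ``moreover'' clause requires no separate argument. This is essentially a bookkeeping lemma: there is no real obstacle, only the need to verify that all three conditions (unitary, distinguishing, amplitudes) of Definition~\ref{def:dfr} are preserved.

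For part (i), I would take $\rho = \rho_1 \oplus \cdots \oplus \rho_k : G \to \U(kd)$. Since $\chi_\rho(g) = \sum_j \chi_{\rho_j}(g)$, the triangle inequality combined with the trivial bound $|\chi_{\rho_{j}}(g)| \leq d$ on the $k-1$ indices other than the distinguishing index $j_0$ supplied by $\mathcal{F}$ yields $|\chi_\rho(g)| \leq (k-1)d + (d - \tau(l(g))) = kd - \tau(l(g))$ for every $g \in G_{\neq 1}$. For the amplitudes condition, I would write $\rho(\sigma) = \prod_{j=1}^k M_j(\sigma)$, where $M_j(\sigma)$ is block-diagonal with $\rho_j(\sigma)$ in the $j$-th block and identity blocks elsewhere, and then expand each $M_j(\sigma)$ using the factorization of $\rho_j(\sigma)$ furnished by condition (c) of Definition~\ref{def:dfr}, padded with identity blocks. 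All factors lie in $\U(kd,\mathbb{A})$ (since $\{0,1\} \subseteq \mathbb{A}$ in all cases of interest).

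For part (ii), I would replace each $\rho_j$ by $\rho_j' = \rho_j \oplus \mathbf{1}_{d'-d}$, so $\chi_{\rho_j'}(g) = \chi_{\rho_j}(g) + (d'-d)$, whence the triangle inequality gives $|\chi_{\rho_j'}(g)| \leq d' - \tau(l(g))$ at the distinguishing index. Each factor in the factorization of $\rho_j(\sigma)$ is likewise padded with an identity block, giving matrices in $\U(d',\mathbb{A})$.

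For part (iii), the citation in Section~\ref{sec:prelim:wordProblem} provides a constant $C \in \mathbb{R}_{\geq 1}$ (depending only on $S,S'$) such that $l_S(g) \leq C\, l_{S'}(g)$ for every $g \in G$. Since $\tau$ is monotone non-increasing, the original bound $|\chi_{\rho_j}(g)| \leq d - \tau(l_S(g))$ upgrades to $|\chi_{\rho_j}(g)| \leq d - \tau(C\, l_{S'}(g)) = d - (\tau \circ \eta_C)(l_{S'}(g))$. The amplitudes condition relative to $S'$ follows from a word-rewriting step: each $\sigma' \in S' \sqcup (S')^{-1}$ represents an element of $G$ expressible as $\phi(w)$ for some word $w = \sigma_1 \cdots \sigma_m$ in $S \sqcup S^{-1}$, so $\rho_j(\sigma') = \rho_j(\sigma_1) \cdots \rho_j(\sigma_m)$ is a product of matrices each of which is already a product in $\U(d,\mathbb{A})$, hence itself a product in $\U(d,\mathbb{A})$. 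The constant $C$ in the statement is the bilipschitz constant used above, which depends only on the two presentations and not on the DFR.
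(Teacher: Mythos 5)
Your proposal is correct and follows essentially the same route as the paper: the direct sum $\rho_1 \oplus \cdots \oplus \rho_k$ with the triangle-inequality character bound for (i), padding by $\mathbf{1}_{d'-d}$ for (ii), and the bilipschitz comparison $l_S(g) \leq C\, l_{S'}(g)$ together with monotonicity of $\tau$ for (iii). Your explicit verification of the amplitude condition (c) in each part is a detail the paper leaves implicit, and your parenthetical caveat that the identity-block padding needs $\{0,1\} \subseteq \mathbb{A}$ is accurate but harmless, since every $\mathbb{A}$ used in the paper contains $0$ and $1$.
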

\begin{proof}
	\begin{enumerate}[(i)]
		\item Consider the representation $\rho:G \rightarrow \U(kd)$ of $G$ given by $\rho=\rho_1 \oplus \cdots \oplus \rho_k$. As $\mathcal{F}=\{\rho_1,\ldots,\rho_k\}$ is a DFR for $G$, it satisfies the property \Cref{def:dfr}(\ref{def:dfr:distinguish}); for each $g \in G_{\neq 1}$, set $j_g$ to be the corresponding value of $j \in \{1,\ldots,k\}$ provided by the property. Therefore, for each $g \in G_{\neq 1}$, we have, $$\lvert \chi_{\rho}(g) \rvert=\left\lvert \sum_j \chi_{\rho_j}(g) \right\rvert \leq \lvert \chi_{\rho_{j_g}}(g)\rvert +\left\lvert \sum_{j \neq j_g} \chi_{\rho_i}(g) \right\rvert \leq d-\tau(l(g)) +(k-1)d\leq kd-\tau(l(g)).$$ 
		
		\item For each $j$, define the representation $\widehat{\rho}_j=\rho_j \oplus \mathbf{1}_{d'-d}$. Then $\{\widehat{\rho}_1,\ldots,\widehat{\rho}_k\}$ is a $[k,d',\tau,\mathbb{A}]$-DFR, by an argument analogous to the above proof of (i).
		
		\item Let $\Gamma(G,\Sigma)$ (resp. $\Gamma(G,\Sigma')$) denote the Cayley graph of $G$ with (symmetric) generating sets $\Sigma=S \cup S^{-1}$ (resp. $\Sigma'=S' \cup S'^{-1}$). Let $d_S$ and $d_{S'}$ denote the corresponding word metrics. Then $id_G:G \rightarrow G$, the identity map on $G$, is a bilipschitz equivalence between $(G,d_S')$ and $(G,d_S)$ (see, for instance, \cite[Proposition 5.2.4]{loh2017geometric}), and so $\exists C \in \mathbb{R}_{>0}$ such that, $\forall g_1,g_2 \in G$, $\frac{1}{C} d_{S'}(g_1,g_2) \leq d_S(g_1,g_2) \leq C d_{S'}(g_1,g_2)$. We then write $l_S(g)=d_S(g,1_G)$ and $l_{S'}(g)=d_{S'}(g,1_G)$ for the length of $g \in G$ with respect to each of the generating sets $S$ and $S'$. By the above, $l_S(g) \leq C l_{S'}(g)$. As $\mathcal{F}$ is a $[k,d,\tau,\mathbb{A}]$-DFR for $G$, we have that $\forall g \in G_{\neq 1}, \exists j_g \in \{1,\ldots,k\}$ such that $\lvert \chi_{\rho_{j_g}}(g) \rvert \leq d-\tau(l_S(g))$. As $l_S(g)\leq C l_{S'}(g)$, and $\tau$ is monotone non-increasing, we then have $\tau(l_S(g)) \geq \tau(C l_{S'}(g))$, which immediately implies $\lvert \chi_{\rho_{j_g}}(g) \rvert \leq d-\tau(C l_{S'}(g))$, as desired. \qedhere		
	\end{enumerate}
\end{proof}

Next, we show that a DFR of $G$ and a DFR of $H$ can be used to produce a DFR of $G \times H$, the direct product of $G$ and $H$. In the following, for a group $Q$, let $[q_1,q_2]=q_1^{-1}q_2^{-1}q_1 q_2$ denote the commutator of elements $q_1,q_2 \in Q$. For functions $\tau,\tau':\mathbb{R}_{>0} \rightarrow \mathbb{R}_{>0}$, we define the function $\tau^{\min}_{\tau,\tau'}:\mathbb{R}_{>0} \rightarrow \mathbb{R}_{>0}$ by $\tau^{\min}_{\tau,\tau'}(n):=\min(\tau(n),\tau'(n))$, $\forall n \in \mathbb{R}_{>0}$.

\begin{lemma}\label{thm:dfr:prod}
	Consider groups $G=\langle S_G|R_G \rangle$ and $H=\langle S_H|R_H \rangle$, with $S_G$ and $S_H$ finite, and $S_G \cap S_H = \emptyset$. Let $R_{com}=\{[g,h]:g \in S_G,h \in S_H\}$. If $G$ has a $[k,d,\tau,\mathbb{A}]$-DFR and $H$ has a $[k',d',\tau',\mathbb{A}]$-DFR, then $G \times H=\langle S_G \sqcup S_H|R_G \cup R_H \cup R_{com} \rangle$ has a $[k+k',\max(d,d'),\tau^{\min}_{\tau,\tau'},\mathbb{A}]$-DFR. Moreover, if $G$ and $H$ have diagonal DFRs with the above parameters, then $G \times H$ has a diagonal DFR with the above parameters.
\end{lemma}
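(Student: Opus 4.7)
The plan is to build a DFR for $G \times H$ by combining the two given DFRs in the most natural way: pull each representation of $G$ (resp. $H$) back along the canonical projection $G \times H \to G$ (resp. $G \times H \to H$). Concretely, let $\mathcal{F}_G = \{\rho_1,\ldots,\rho_k\}$ and $\mathcal{F}_H = \{\pi_1,\ldots,\pi_{k'}\}$, and let $D = \max(d, d')$. First, I would apply \Cref{thm:dfr:conv}(\ref{thm:dfr:conv:increaseDimension}) to inflate both families to dimension $D$ by direct-summing with trivial representations (this preserves the character bound, the amplitude condition, and diagonality). Then define $\widetilde{\rho}_j, \widetilde{\pi}_i : G \times H \rightarrow \U(D)$ by $\widetilde{\rho}_j(g,h) = \rho_j(g)$ and $\widetilde{\pi}_i(g,h) = \pi_i(h)$. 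These are well-defined group homomorphisms, since elements of the $G$-factor commute with elements of the $H$-factor (which is precisely what the relations $R_{com}$ enforce in the presentation of $G \times H$). The claim is that $\mathcal{F} := \{\widetilde{\rho}_1,\ldots,\widetilde{\rho}_k,\widetilde{\pi}_1,\ldots,\widetilde{\pi}_{k'}\}$ is the desired $[k+k',D,\tau^{\min}_{\tau,\tau'},\mathbb{A}]$-DFR.

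To verify condition (\ref{def:dfr:distinguish}), take any $(g,h) \in (G\times H)_{\neq 1}$, so either $g \neq 1_G$ or $h \neq 1_H$. Suppose $g \neq 1_G$; then there is some $j$ with $|\chi_{\rho_j}(g)| \leq D - \tau(l_G(g))$. The key length comparison is that $l_{G\times H}(g,h) \geq l_G(g)$: any expression of $(g,h)$ as a product of $n$ letters from $(S_G \sqcup S_H) \cup (S_G \sqcup S_H)^{-1}$ yields, by discarding the $S_H$-letters (equivalently, applying the projection homomorphism to $G$, which sends $S_H$ to $1_G$), an expression of $g$ as a product of at most $n$ letters from $S_G \cup S_G^{-1}$. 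Since $\tau$ is monotone non-increasing, $\tau(l_G(g)) \geq \tau(l_{G \times H}(g,h)) \geq \tau^{\min}_{\tau,\tau'}(l_{G\times H}(g,h))$, and then $|\chi_{\widetilde{\rho}_j}(g,h)| = |\chi_{\rho_j}(g)|$ gives the required bound. The case $h \neq 1_H$ is handled symmetrically using the $\widetilde{\pi}_i$ and the analogous inequality $l_{G\times H}(g,h) \geq l_H(h)$.

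For condition (\ref{def:dfr:amplitudes}), each generator of $G \times H$ lies in $S_G$ or $S_H$. If $\sigma \in S_G$, then $\widetilde{\rho}_j(\sigma) = \rho_j(\sigma)$ is (by assumption on $\mathcal{F}_G$, as extended) a product of matrices in $\U(D,\mathbb{A})$, and $\widetilde{\pi}_i(\sigma) = \pi_i(1_H) = I_D \in \U(D,\mathbb{A})$; the case $\sigma \in S_H$ is symmetric. Finally, for the diagonal case, the dimension-inflation step of \Cref{thm:dfr:conv}(\ref{thm:dfr:conv:increaseDimension}) preserves diagonality (the trivial summand is diagonal), and pullback along a projection preserves diagonality, so $\mathcal{F}$ is diagonal whenever $\mathcal{F}_G$ and $\mathcal{F}_H$ are.

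The main non-trivial point is the length comparison $l_{G\times H}(g,h) \geq \max(l_G(g), l_H(h))$; everything else is essentially bookkeeping, and no new character estimate is needed beyond what is already supplied by the two input DFRs. The only subtlety is ensuring that the same ambient dimension $D$ is used so that a single function $\tau^{\min}_{\tau,\tau'}$ controls both types of contributions, which is exactly what the preliminary application of \Cref{thm:dfr:conv}(\ref{thm:dfr:conv:increaseDimension}) accomplishes.
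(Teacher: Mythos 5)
Your proposal is correct and follows essentially the same route as the paper's proof: inflate both families to dimension $\max(d,d')$ via \Cref{thm:dfr:conv}(\ref{thm:dfr:conv:increaseDimension}), pull back along the two projections, and use monotonicity of $\tau,\tau'$ together with the comparison between $l_{G\times H}(g,h)$ and $l_G(g)$, $l_H(h)$. (The paper states the equality $l(g,h)=l(g)+l(h)$ where you prove only the inequality $l_{G\times H}(g,h)\geq\max(l_G(g),l_H(h))$ via the projection homomorphism, which is all that is needed and is arguably cleaner; you also explicitly check the amplitude condition, which the paper leaves implicit.)
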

\begin{proof}
	By \Cref{thm:dfr:conv}(\ref{thm:dfr:conv:increaseDimension}), we may assume, without loss of generality, that $d'=d$ (i.e., we increase the smaller of $d,d'$ to $\max(d,d')$). Let $\mathcal{F}_G=\{\rho_1,\ldots,\rho_k\}$ be a $[k,d,\tau,\mathbb{A}]$-DFR for $G$ and $\mathcal{F}_H=\{\pi_1,\ldots,\pi_{k'}\}$ a $[k',d,\tau',\mathbb{A}]$-DFR for $H$. For each $j \in \{1,\ldots,k\}$, define a representation $\widehat{\rho}_j:G \times H \rightarrow \U(d)$ such that, $\widehat{\rho}_j(g,h)=\rho_j(g), \forall (g,h)\in G \times H$. Analogously, for each $j \in \{1,\ldots,k'\}$, we define a representation $\widehat{\pi}_j:G \times H \rightarrow \U(d)$ such that $\widehat{\pi}_j(g,h)=\pi_j(h), \forall (g,h)\in G \times H$. 
	
	Then $\mathcal{F}_{G \times H}=\{\widehat{\rho}_1,\ldots,\widehat{\rho}_k,\widehat{\pi}_1,\ldots,\widehat{\pi}_{k'}\}$ is the desired DFR. To see this, first notice that, $\forall (g,h) \in G \times H$, $l(g,h)= l(g)+l(h)$, where we write $l(g,h)$ in place of $l((g,h))$, to avoid cumbersome notation. By definition, $\tau$ and $\tau'$ are monotone non-increasing, and so, $\forall (g,h) \in G \times H$, we have $\tau(l(g,h)) \leq \tau(l(g))$ and $\tau'(l(g,h)) \leq \tau'(l(h))$. As $\mathcal{F}_G$ is a $[k,d,\tau,\mathbb{A}]$-DFR for $G$, we have that for each $g \in G_{\neq 1}$, $\exists j_g \in \{1,\ldots,k\}$ such that $\lvert \chi_{\rho_{j_g}}(g) \rvert \leq d-\tau(l(g))$. Analogously, for each $h \in H_{\neq 1_H}$, $\exists j_h \in \{1,\ldots,k'\}$ such that $\lvert \chi_{\pi_{j_h}}(h) \rvert \leq d-\tau(l(h))$. 
	
	Consider $(g,h) \in G \times H$, where $(g,h) \neq 1_{G \times H}=(1_G,1_H)$. Then we must have $g \neq 1_G$ or $h \neq 1_H$. If $g \neq 1_G$, then, by the above $\exists j_g$ such that $$\lvert \chi_{\widehat{\rho}_{j_g}}(g,h) \rvert = \lvert \chi_{\rho_{j_g}}(g) \rvert \leq d-\tau(l(g)) \leq d-\tau(l(g,h)).$$ If, $h \neq 1_H$, then, analogously, $\exists j_h$ such that $$\lvert \chi_{\widehat{\pi}_{j_h}}(g,h) \rvert = \lvert \chi_{\pi_{j_h}}(h) \rvert \leq d-\tau'(l(h)) \leq d-\tau'(l(g,h)).$$ Therefore, for any $(g,h) \in (G \times H)_{\neq 1_{G \times H}}$, there is some representation $\beta \in \mathcal{F}_{G \times H}$ for which \[\lvert \chi_{\beta}(g,h) \rvert \leq \max(d-\tau(l(g,h)),d-\tau'(l(g,h)))=d-\min(\tau(l(g,h)),\tau'(l(g,h)))=d-\tau^{\min}_{\tau,\tau'}(l(g,h)). \qedhere\] 
\end{proof}

Now, we show that a DFR of a group $G$ can be used to produce a DFR of a finitely generated subgroup of $G$, or of a finite-index overgroup of $G$.

\begin{lemma}\label{thm:dfr:subOver}
	Suppose $\mathcal{F}_G$ is a $[k,d,\tau,\mathbb{A}]$-DFR for a group $G=\langle S_G|R_G \rangle$, with $S_G$ finite. The following statements hold.
	\begin{enumerate}[(i)]
		\item \label{thm:dfr:subOver:sub} Suppose $H \leq G$, where $H=\langle S_H|R_H \rangle$, with $S_H$ finite. Then $\exists C \in \mathbb{R}_{>0}$ such that $H$ has a $[k,d,\tau \circ \eta_C,\mathbb{A}]$-DFR. If, moreover, $\mathcal{F}_G$ is a diagonal DFR, then $H$ will also have a diagonal DFR with the claimed parameters.
		\item \label{thm:dfr:subOver:over} Suppose $G \leq Q$, where $Q=\langle S_Q|R_Q \rangle$, with $S_Q$ finite, $S_G \subseteq S_Q$, and $r:=[Q:G]$ finite. Then $\exists C \in \mathbb{R}_{>0}$ such that $Q$ has a $[k,dr,\tau \circ \eta_C,\mathbb{A}]$-DFR.
	\end{enumerate}
\end{lemma}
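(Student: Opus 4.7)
The overall strategy is to apply the standard representation-theoretic operations of restriction and induction introduced in Section~\ref{sec:prelim:repTheory}: for part (i) restrict each $\rho_j$ to $H$, and for part (ii) induce each $\rho_j$ up to $Q$. In each case, the multiplicative rescaling $\tau \mapsto \tau \circ \eta_C$ will arise from the elementary fact that the word metric on a finitely generated subgroup (respectively a finite-index overgroup) is bilipschitz-comparable, on the relevant elements, to the ambient word metric.

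For part (i), I would set $\pi_j := \Res_H^G(\rho_j) : H \to \U(d)$ and claim that $\{\pi_1,\ldots,\pi_k\}$ is the desired DFR of $H$. Each generator $\sigma \in S_H \cup S_H^{-1}$ sits inside $G$ and may be written as a word in $S_G \cup S_G^{-1}$, so taking $C := \max_{\sigma \in S_H \cup S_H^{-1}} l_{S_G}(\sigma)$ yields $l_{S_G}(h) \leq C \cdot l_{S_H}(h)$ for every $h \in H$. The distinguishing property then transfers by monotonicity of $\tau$, the unitarity is inherited automatically, and the amplitude condition is satisfied because $\pi_j(\sigma) = \rho_j(\sigma)$ is the product of the $\mathbb{A}$-unitary factorizations that $\mathcal{F}_G$ supplies for each letter appearing in a $S_G$-word for $\sigma$. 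Diagonality passes unchanged to the restriction.

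For part (ii), I would use induced representations: set $\widehat{\rho}_j := \Ind_G^Q(\rho_j) : Q \to \U(dr)$ with coset representatives $T = \{g_1 = 1_Q, g_2, \ldots, g_r\}$. The character of an induced representation is supported on the fixed points of the associated permutation, namely $\chi_{\widehat{\rho}_j}(q) = \sum_{i:\sigma_q(i)=i} \chi_{\rho_j}(g_i^{-1} q g_i)$. Fix $q \in Q_{\neq 1}$ and split into two cases. If $\sigma_q \neq \mathrm{id}$, then at most $r-1$ indices are fixed, so $\lvert \chi_{\widehat{\rho}_j}(q)\rvert \leq (r-1)d = dr - d$ for every $j$; since $\tau$ is bounded by $d$ on $\mathbb{N}_{\geq 1}$ (apply the DFR property of $G$ to any element of length one, using that the case $G$ trivial is vacuous), this slots into the desired bound $dr - \tau(C l_{S_Q}(q))$ for $C$ sufficiently large. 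If $\sigma_q = \mathrm{id}$, then $q = g_1^{-1} q g_1 \in G_{\neq 1}$, so $\mathcal{F}_G$ provides some $j_q$ with $\lvert \chi_{\rho_{j_q}}(q) \rvert \leq d - \tau(l_{S_G}(q))$, while the other $r-1$ summands each have absolute value at most $d$, giving $\lvert \chi_{\widehat{\rho}_{j_q}}(q) \rvert \leq dr - \tau(l_{S_G}(q))$.

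The main obstacle is converting $l_{S_G}(q)$ back into $l_{S_Q}(q)$ in this second case. Since $[Q:G] < \infty$ and both groups are finitely generated, a standard \v{S}varc--Milnor style argument supplies a constant $C'$ with $l_{S_G}(q) \leq C' \cdot l_{S_Q}(q)$ for all $q \in G$, so the bound becomes $dr - \tau(C' l_{S_Q}(q))$; we then pick $C$ large enough to cover the first case as well. Finally, the amplitude condition follows by decomposing $\widehat{\rho}_j(\sigma) = P_\sigma \cdot \mathrm{diag}(\rho_j(h_{\sigma,1}), \ldots, \rho_j(h_{\sigma,r}))$ for each $\sigma \in S_Q \cup S_Q^{-1}$, where $P_\sigma$ is the block permutation matrix implementing $\sigma_\sigma$ (trivially an $\mathbb{A}$-unitary with $\{0,1\}$ entries) and the block-diagonal factor is a product of block-diagonal $\mathbb{A}$-unitaries in $\U(dr, \mathbb{A})$, obtained by lifting each $\U(d,\mathbb{A})$-factorization of $\rho_j(h_{\sigma,i})$ to $\U(dr,\mathbb{A})$ by padding with identity blocks.
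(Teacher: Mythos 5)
Your proof is correct and follows essentially the same route as the paper's: restrict each $\rho_j$ to $H$ for part (i) and induce each $\rho_j$ up to $Q$ for part (ii), using the character formula for induced representations together with the bilipschitz comparability of the relevant word metrics to absorb the change of generating set into the constant $C$. The only cosmetic differences are that you split part (ii) on whether $\sigma_q$ is the identity permutation rather than on whether $q \in G$ (both splits yield the same two bounds), and that you verify the $\mathbb{A}$-amplitude condition for the induced representation explicitly via the factorization into a block permutation times a block-diagonal matrix, a detail the paper leaves implicit.
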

\begin{proof}
	\begin{enumerate}[(i)]
		\item As $H \leq G$, $G$ admits a presentation $\langle S_G'|R_G' \rangle$ such that $S_G'$ is finite and $S_H \subseteq S_G'$. Writing $l_{S_H}(h)$ for the length of $h \in H$ relative to the generating set $S_H$ and $l_{S_G'}(g)$ for the length of $g \in G$ relative to the generating set $S_G'$, we immediately have that $l_{S_H}(h) \geq l_{S_G'}(h)$, $\forall h \in H \leq G$. By \Cref{thm:dfr:conv}(\ref{thm:dfr:conv:switchPresentation}), $\exists C \in \mathbb{R}_{>0}$ such that $\mathcal{F}_G$ is a $[k,d,\tau \circ \eta_C,\mathbb{A}]$-DFR of $G=\langle S_G'|R_G'\rangle$. Let $\tau'=\tau \circ \eta_C$ and let $\mathcal{F}_H=\{\pi_1,\ldots,\pi_k\}$, where $\pi_j=\Res^G_H(\rho_j)$. As $\mathcal{F}_G$ is a $[k,d,\tau',\mathbb{A}]$-DFR for $G$, we have that for each $h \in H \leq G$, where $h \neq 1_H=1_G$, $\exists j_h \in \{1,\ldots,k\}$ such that $\lvert \chi_{\rho_{j_h}}(h) \rvert \leq d-\tau'(l_{S_G'}(h))$. Notice that $\chi_{\pi_j}(h)=\chi_{\rho_j}(h)$, $\forall h \in H, \forall j \in \{1,\ldots,k\}$. As $\tau'$ is monotone non-increasing, $\tau'(l_{S_H}(h))\leq \tau'(l_{S_G'}(h))$. Therefore, $\forall h \in H_{\neq 1}$, $\exists j_h$ such that $$\lvert \chi_{\pi_{j_h}}(h) \rvert = \lvert \chi_{\rho_{j_h}}(h) \rvert \leq d-\tau'(l_{S_G'}(h)) \leq  d-\tau'(l_{S_H}(h)).$$ Therefore, $\mathcal{F}_H$ is the desired DFR for $H$. 
		
		\item For each $j \in \{1,\ldots,k\}$, let $\pi_j=\Ind_G^Q(\rho_j):Q \rightarrow \U(kr)$. Then $\mathcal{F}_Q=\{\pi_1,\ldots,\pi_k\}$ is the desired DFR. To see this, let $T \subseteq Q$ be a complete family of left coset representatives of $G$ in $Q$, where $1_Q \in T$. Notice that $\lvert T \rvert=[Q:G]=r$, with $r$ finite. Then, for any $q \in Q$, we have (see, for instance, \cite[Proposition 2.7.35]{kowalski2014introduction}) $$\chi_{\pi_{j}}(q)  =  \sum_{\substack{t \in T\\t^{-1} q t\in G}} \chi_{\rho_j}(t^{-1}qt).$$
		
		Let $l_Q(q)$ denote the length of $q \in Q$ relative to $S_Q$ and $l_G(g)$ denote the length of $g \in G \leq Q$ relative to $S_G$. Then $\exists C \in \mathbb{R}_{>0}$ such that $l_G(g) \leq Cl_Q(g)$, $\forall g \in G$, as $[Q:G]$ is finite. As $\tau$ is monotone non-increasing, $\tau(l_G(g)) \geq \tau(Cl_Q(g))$, $\forall g \in G$. Additionally, $\tau(l(g)) \leq d$, $\forall g \in G_{\neq 1}$. Therefore, if $g \in G_{\neq 1}$, then $d \geq \tau(l_G(g)) \geq \tau(Cl_Q(q))$.
		
		Fix $q \in Q_{\neq 1}$. First, suppose $q \in G$. As $\mathcal{F}_G=\{\rho_1,\ldots,\rho_k\}$ is a $[k,d,\tau,\mathbb{A}]$-DFR for $G$, we conclude that there is some $j$ such that $\lvert \chi_{\rho_{j}}(q) \rvert \leq d-\tau(l_G(q)) \leq d-\tau(Cl_Q(q))$. This immediately implies $$\lvert \chi_{\pi_{j}}(q) \rvert = \left\lvert \sum_{\substack{t \in T\\t^{-1} q t\in G}} \chi_{\rho_j}(t^{-1}qt)\right\rvert \leq \lvert \chi_{\rho_{j}}(1_Q^{-1} q 1_Q) \rvert + \left\lvert \sum_{\substack{t \in T\setminus{1_Q}\\t^{-1} q t\in G}} \chi_{\rho_j}(t^{-1}qt)\right\rvert \leq d-\tau(Cl_Q(q))+(r-1)d.$$ Therefore, there is some $j$ such that $\lvert \chi_{\pi_{j}}(q) \rvert \leq dr-\tau(Cl_Q(q))$, if $q \in G$. Next, suppose instead $q \not \in G$ and let $m=\lvert \{t \in T|t^{-1} q t \in G\}\rvert$. As $q \not \in G$, $1_Q^{-1} q 1_Q=q \not \in G$, and so $m \leq  \lvert T \rvert-1=r-1$. Therefore, $\forall j$, we have $$\lvert \chi_{\pi_j}(q) \rvert=\left\lvert \sum_{\substack{t \in T\\t^{-1} q t\in G}} \chi_{\rho_j}(t^{-1}qt)\right\rvert \leq  dm \leq dr-d \leq dr-\tau(Cl_Q(q)).$$ Therefore, $\forall q \in Q_{\neq 1}$, $\exists j$ such that $\lvert \chi_{\pi_j}(q) \rvert \leq dr-\tau(Cl_Q(q))$, as desired. \qedhere	
	\end{enumerate}
\end{proof}

\begin{remark}
	By the preceding lemma, any group $G$ that virtually has a DFR also has a DFR, but with worse parameters. As will be shown, it is possible to recognize $W_G$ using a DFR for a finite-index subgroup of $G$, thereby avoiding this worsening of parameters. 
\end{remark}

We now construct DFRs, with good parameters, for a wide class of groups. Recall that any finitely generated abelian group $G$ admits a unique decomposition $G \cong \mathbb{Z}^r \times \mathbb{Z}_{m_1} \times \cdots \times \mathbb{Z}_{m_t}$, where $m_i$ divides $m_{i+1}$, $\forall i \in \{1,\ldots,t-1\}$, and each $m_i \in \mathbb{N}_{\geq 2}$. Let $R(r,m_1,\ldots,m_t)= \{a_i^{m_i}:i \in \{1,\ldots,t\}\} \cup \{[a_i,a_j]:i,j \in \{1,\ldots,r+t\}\}$. 

\begin{lemma}\label{thm:dfr:abelian:finite}
	Consider the finite (hence finitely generated) abelian group $G = \mathbb{Z}_{m_1} \times \cdots \times \mathbb{Z}_{m_t}=\langle a_1,\ldots,a_t| R(0,m_1,\ldots,m_t) \rangle$. If $t=0$ (i.e., $G$ is the trivial group), then $G$ has a diagonal algebraic $[1,2,2]$-DFR. Otherwise, $G$ has a diagonal algebraic $\left[t,2,\frac{19 \pi^2}{24 m_t^2}\right]$-DFR. 
\end{lemma}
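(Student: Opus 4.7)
The plan is to obtain this result as a direct consequence of the single-factor construction in \Cref{thm:dfr:init:Z/mZ} combined with the direct product construction in \Cref{thm:dfr:prod}.

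First I would handle the trivial case $t = 0$ separately. Here $G = \{1\}$ and $G_{\neq 1} = \emptyset$, so taking $\rho : \{1\} \to \U(2)$ to be the trivial representation $\mathbf{1}_2$ gives $\mathcal{F} = \{\mathbf{1}_2\}$. Condition \Cref{def:dfr}(\ref{def:dfr:distinguish}) holds vacuously (regardless of $\tau$; in particular the constant function $\tau \equiv 2$ satisfies the monotonicity hypothesis), and condition \Cref{def:dfr}(\ref{def:dfr:amplitudes}) is vacuous since $S \cup S^{-1} = \emptyset$. The representation is evidently diagonal and algebraic, giving a $[1,2,2]$-DFR as claimed.

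For the main case $t \geq 1$, I would proceed by induction on $t$ (or simply by iteration). The base case $t = 1$ is precisely \Cref{thm:dfr:init:Z/mZ}, which gives a diagonal algebraic $\bigl[1, 2, \frac{19\pi^2}{24 m_1^2}\bigr]$-DFR for $\mathbb{Z}_{m_1}$. For the inductive step, assume $\mathbb{Z}_{m_1} \times \cdots \times \mathbb{Z}_{m_{t-1}}$ has a diagonal algebraic $\bigl[t-1, 2, \tau_{t-1}\bigr]$-DFR where $\tau_{t-1}(n) = \min_{i < t} \frac{19\pi^2}{24 m_i^2}$. Applying \Cref{thm:dfr:prod} to this DFR together with the $\bigl[1, 2, \frac{19\pi^2}{24 m_t^2}\bigr]$-DFR for $\mathbb{Z}_{m_t}$ from \Cref{thm:dfr:init:Z/mZ} produces a diagonal algebraic $\bigl[t, 2, \tau_t\bigr]$-DFR for $\mathbb{Z}_{m_1} \times \cdots \times \mathbb{Z}_{m_t}$, where $\tau_t(n) = \min\!\bigl(\tau_{t-1}(n), \frac{19\pi^2}{24 m_t^2}\bigr) = \min_{i \leq t} \frac{19\pi^2}{24 m_i^2}$. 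The generating set built by \Cref{thm:dfr:prod} is exactly $\{a_1\} \sqcup \cdots \sqcup \{a_t\}$ with relations $R(0, m_1, \ldots, m_t)$, matching the stated presentation.

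Finally, to simplify $\tau_t$, I would invoke the divisibility hypothesis $m_i \mid m_{i+1}$, which forces $m_1 \leq m_2 \leq \cdots \leq m_t$ and hence $\frac{19\pi^2}{24 m_i^2} \geq \frac{19\pi^2}{24 m_t^2}$ for every $i$. Therefore $\tau_t$ equals the constant $\frac{19\pi^2}{24 m_t^2}$, yielding the claimed parameters. I do not anticipate any real obstacle here: essentially everything is bookkeeping, and the only point that needs any care is checking that the parameters track correctly through the iteration of \Cref{thm:dfr:prod}, namely that $k$ accumulates additively, $d$ stays at $2$ (since every factor uses $d = 2$), and $\tau$ is taken to the pointwise minimum.
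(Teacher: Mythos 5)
Your proposal is correct and follows essentially the same route as the paper: the $t=0$ case is dispatched trivially, and for $t\geq 1$ the DFR is assembled by applying \Cref{thm:dfr:init:Z/mZ} to each cyclic factor and combining via (iterated) \Cref{thm:dfr:prod}, with the divisibility condition $m_i \mid m_{i+1}$ forcing $m_1 \leq \cdots \leq m_t$ so that the pointwise minimum of the $\tau$'s is the constant $\frac{19\pi^2}{24 m_t^2}$. The paper states this more tersely, but your bookkeeping of how $k$, $d$, and $\tau$ evolve through the iteration is exactly what its one-line appeal to \Cref{thm:dfr:prod} implicitly relies on.
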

\begin{proof}
	If $t=0$, the claim is obvious. Suppose $t>0$. By \Cref{thm:dfr:init:Z/mZ}, each factor $\mathbb{Z}_{m_i}=\langle a|a^{m_i}\rangle$ has a diagonal algebraic $\left[1,2,\frac{19 \pi^2}{24 m_i^2}\right]$-DFR. Notice that $m_1 \leq \cdots \leq m_t$, as each $m_i$ divides $m_{i+1}$. The existence of the desired DFR follows from \Cref{thm:dfr:prod}.
\end{proof}

\begin{theorem}\label{thm:dfr:abelian} $\exists C_1 \in \mathbb{R}_{>0}$ such that, for any finitely generated abelian group $G = \mathbb{Z}^r \times \mathbb{Z}_{m_1} \times \cdots \times \mathbb{Z}_{m_t}=\langle a_1,\ldots,a_{r+t}| R(r,m_1,\ldots,m_t) \rangle$, the following statements hold.
	\begin{enumerate}[(i)]
		\item \label{thm:dfr:abelian:alg} $\exists C_2 \in \mathbb{R}_{>0}$ such that $G$ has a diagonal algebraic $\left[r+t,2,C_2 n^{-C_1}\right]$-DFR.
		\item \label{thm:dfr:abelian:nonAlg} $\forall \delta \in \mathbb{R}_{>0}$, $\exists C_3 \in \mathbb{R}_{>0}$, such that $G$ has a diagonal $\left[r \left(1+ \lfloor \frac{2}{\delta} \rfloor\right)+t,2,C_3 n^{-\delta},\widetilde{\mathbb{C}}\right]$-DFR.		
	\end{enumerate}
\end{theorem}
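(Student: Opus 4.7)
The plan is to assemble the desired DFRs for $G$ from DFRs for each cyclic factor, using the direct product construction of \Cref{thm:dfr:prod} as the glue. The two parts of the theorem differ only in which DFR for $\mathbb{Z}$ serves as a building block, so I would prove them in parallel.

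First, pick the universal constant $C_1 > 0$ to be the constant provided by \Cref{thm:dfr:init:Z:alg}; note that this $C_1$ depends on nothing (in particular not on $G$), which is exactly the uniformity demanded by the theorem. For part (i), apply \Cref{thm:dfr:init:Z:alg} to each of the $r$ free factors to obtain a diagonal algebraic $[1,2,C^{(0)} n^{-C_1}]$-DFR (the constant $C^{(0)}$ may be taken common across the factors). For part (ii), given $\delta$, apply \Cref{thm:dfr:init:Z:nonAlg} to each free factor to obtain a diagonal $[1+\lfloor 2/\delta \rfloor,2,C^{(0)} n^{-\delta},\widetilde{\mathbb{C}}]$-DFR. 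When $t \geq 1$, apply \Cref{thm:dfr:abelian:finite} to the torsion part $\mathbb{Z}_{m_1} \times \cdots \times \mathbb{Z}_{m_t}$ to obtain a diagonal algebraic $[t,2,\kappa]$-DFR, where $\kappa := 19\pi^2/(24 m_t^2)$ is a positive constant (independent of $n$). In part (i), all building-block DFRs have amplitudes in $\overline{\mathbb{Q}}$; in part (ii), the torsion DFR has amplitudes in $\overline{\mathbb{Q}} \subseteq \widetilde{\mathbb{C}}$, so both parts use a common amplitude set.

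Next, iterate \Cref{thm:dfr:prod} to combine the $r$ free-factor DFRs with the torsion DFR (choosing disjoint generating sets at each step so the hypothesis of the lemma is met). Since every building block is diagonal with $d = 2$, the iterated product yields a diagonal DFR with $d = 2$, having $r + t$ representations in part (i) and $r(1 + \lfloor 2/\delta \rfloor) + t$ representations in part (ii), and whose distinguishing function is the pointwise minimum of the building-block functions, namely $\min(C^{(0)} n^{-C_1},\kappa)$ in part (i) (resp.\ $\min(C^{(0)} n^{-\delta}, \kappa)$ in part (ii)). Since $n^{-C_1} \leq 1$ for $n \geq 1$, setting $C_2 := \min(C^{(0)}, \kappa)$ gives $C_2 n^{-C_1} \leq \min(C^{(0)} n^{-C_1}, \kappa)$ everywhere, so any DFR with distinguishing function $\min(C^{(0)} n^{-C_1}, \kappa)$ is, a fortiori, a DFR with distinguishing function $C_2 n^{-C_1}$ (replacing $\tau$ by a smaller positive function only tightens the character bound). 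An identical argument with $C_3 := \min(C^{(0)}, \kappa)$ and exponent $\delta$ handles part (ii).

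There is essentially no hard step here; the only real subtlety, and the main thing to get right, is the order of quantifiers on $C_1$ and $C_2$: one must ensure that $C_1$ is chosen once and for all (taking it from \Cref{thm:dfr:init:Z:alg}, a statement about $\mathbb{Z}$ alone), and that the group-dependent constant $\kappa$ from the torsion factor is absorbed into $C_2$ (which is permitted to depend on $G$) rather than into $C_1$. The degenerate cases $r = 0$ or $t = 0$ are handled by the same construction, using only the torsion DFR or only the product of free-factor DFRs respectively; the trivial-group case $r = t = 0$ is covered by the $[1,2,2]$-DFR from \Cref{thm:dfr:abelian:finite}.
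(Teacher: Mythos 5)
Your proposal is correct and follows essentially the same route as the paper's proof: both take the universal exponent $C_1$ from \Cref{thm:dfr:init:Z:alg} (resp.\ use \Cref{thm:dfr:init:Z:nonAlg} for part (ii)), combine the free factors and the torsion DFR from \Cref{thm:dfr:abelian:finite} via iterated applications of \Cref{thm:dfr:prod}, and absorb the group-dependent torsion constant into $C_2$ (resp.\ $C_3$). Your extra care about the quantifier order on $C_1$ versus $C_2$ and the replacement of the pointwise minimum by the smaller function $C_2 n^{-C_1}$ matches what the paper does implicitly.
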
 
\begin{proof}
	By \Cref{thm:dfr:init:Z:alg}, $\exists D_1,D_2 \in \mathbb{R}_{>0}$ such that $\mathbb{Z}$ has a diagonal algebraic $[1,2,D_2 n^{-D_1}]$-DFR, which we call $\mathcal{F}$. We set $C_1=D_1$. Let $H_1= \mathbb{Z}^r$ and $H_2=\mathbb{Z}_{m_1} \times \cdots \times \mathbb{Z}_{m_t}$. If $r=0$, both claims follow trivially from \Cref{thm:dfr:abelian:finite}. Suppose $r>0$.
	\begin{enumerate}[(i)]	
		\item Using the DFR $\mathcal{F}$ of $\mathbb{Z}$, \Cref{thm:dfr:prod} implies $H_1$ has a diagonal algebraic $[r,2,D_2 n^{-C_1}]$-DFR $\mathcal{H}_1$. If $t=0$, then $G=H_1$; therefore, $\mathcal{H}_1$ is the desired DFR for $G$, with $C_2=D_2$, and we are done. If $t>0$, \Cref{thm:dfr:abelian:finite} implies $H_2$ has a diagonal algebraic $\left[1,2,\frac{19 \pi^2}{24 m_t^2}\right]$-DFR $\mathcal{H}_2$. Set $C_2=\min(D_2,\frac{19 \pi^2}{24 m_t^2})$. By \Cref{thm:dfr:prod}, we conclude $G=H_1 \times H_2$ has a DFR with the claimed parameters.
		\item By \Cref{thm:dfr:init:Z:nonAlg}, $\exists D \in \mathbb{R}_{>0}$ such that $\mathbb{Z}$ has a diagonal $\left[1+\lfloor \frac{2}{\delta} \rfloor,2,Dn^{-\delta},\widetilde{\mathbb{C}}\right]$-DFR, $\mathcal{F}'$. The remainder of the proof is analogous to that of part (i), using $\mathcal{F}'$ in place of $\mathcal{F}$. \qedhere		
	\end{enumerate}
\end{proof}

As in \Cref{sec:intro:results}, $\widehat{\Pi}_1$ denotes the set of all finitely generated virtually abelian groups. For $G \in \widehat{\Pi}_1$, there is a unique $r \in \mathbb{N}$ such that $G$ is virtually $\mathbb{Z}^r$. We have the following corollary. 

\begin{corollary}\label{thm:dfr:virtAbelian}
	$\exists C \in \mathbb{R}_{>0}$ such that, $\forall G \in \widehat{\Pi}_1$, the following holds.
	\begin{enumerate}[(i)]
		\item \label{thm:dfr:virtAbelian:alg} $\exists D \in \mathbb{R}_{>0},\exists K \in \mathbb{N}_{\geq 1}$, such that $G$ virtually has a diagonal algebraic $[K,2,D n^{-C}]$-DFR.
		\item \label{thm:dfr:virtAbelian:nonAlg} $\forall \delta \in \mathbb{R}_{>0}$, $\exists D \in \mathbb{R}_{>0}, \exists K \in \mathbb{N}_{\geq 1}$, $G$ virtually has a diagonal $\left[K,2,D n^{-\delta},\widetilde{\mathbb{C}}\right]$-DFR.
	\end{enumerate}
\end{corollary}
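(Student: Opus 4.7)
The plan is to reduce both statements to direct applications of \Cref{thm:dfr:abelian}, which already supplies DFRs for every finitely generated abelian group, and then to lift these DFRs to the finite-index overgroup via the ``virtually has'' terminology of \Cref{def:dfrTypes}(\ref{def:dfrTypes:virtual}). In particular, the universal constant $C$ promised in the corollary will simply be the universal constant $C_1$ produced by \Cref{thm:dfr:abelian}(\ref{thm:dfr:abelian:alg}); the group-dependent parameters $K$ and $D$ will be read off from the abelian structure of a chosen finite-index subgroup of $G$.

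First, I would unwind the definition of $\widehat{\Pi}_1$: any $G \in \widehat{\Pi}_1$ has a finite-index subgroup $H \leq G$ with $H \cong \mathbb{Z}^r$ for some $r \in \mathbb{N}$, where $r=0$ is allowed (so that $G$ finite is permitted and $H$ is trivial). If $r \geq 1$, then \Cref{thm:dfr:abelian}(\ref{thm:dfr:abelian:alg}), applied with no torsion factors (i.e., $t=0$), yields a $D \in \mathbb{R}_{>0}$ such that $H$ has a diagonal algebraic $[r, 2, D n^{-C_1}]$-DFR. If instead $r = 0$, then $H$ is the trivial group, which by \Cref{thm:dfr:abelian:finite} has a diagonal algebraic $[1, 2, 2]$-DFR (and this constant function dominates $D n^{-C_1}$ for any desired $D, C_1$ after shrinking $D$). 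In either case, setting $K = \max(r,1)$ and applying \Cref{def:dfrTypes}(\ref{def:dfrTypes:virtual}) to the finite-index inclusion $H \leq G$ establishes part (i) with $C = C_1$.

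For part (ii), the argument is identical except that \Cref{thm:dfr:abelian}(\ref{thm:dfr:abelian:nonAlg}) replaces \Cref{thm:dfr:abelian}(\ref{thm:dfr:abelian:alg}): for each fixed $\delta > 0$, we obtain a diagonal $[r(1 + \lfloor 2/\delta \rfloor), 2, D n^{-\delta}, \widetilde{\mathbb{C}}]$-DFR for $H = \mathbb{Z}^r$ when $r \geq 1$, with the trivial-group case handled separately by \Cref{thm:dfr:abelian:finite}, and \Cref{def:dfrTypes}(\ref{def:dfrTypes:virtual}) then transfers the DFR from $H$ to $G$.

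The main (and essentially only) subtlety is confirming that the constant $C_1$ of \Cref{thm:dfr:abelian}(\ref{thm:dfr:abelian:alg}) is truly universal across all $G \in \widehat{\Pi}_1$, rather than depending on the rank $r$ of the chosen finite-index abelian subgroup or on the index $[G:H]$. Inspecting the quantifier ordering in \Cref{thm:dfr:abelian} shows that $C_1$ is produced before the group is introduced, so this universality is automatic and no additional Diophantine estimate is needed; the corollary is then immediate.
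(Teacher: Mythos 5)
Your proposal is correct and follows exactly the route the paper intends: the corollary is stated as immediate from \Cref{thm:dfr:abelian} by taking the finite-index subgroup $H\cong\mathbb{Z}^r$ guaranteed by membership in $\widehat{\Pi}_1$, applying the theorem to $H$ (with the trivial-group case absorbed via \Cref{thm:dfr:abelian:finite}), and invoking the ``virtually has'' terminology of \Cref{def:dfrTypes}(\ref{def:dfrTypes:virtual}). Your observation that the quantifier ordering in \Cref{thm:dfr:abelian} makes $C_1$ universal is precisely the point that makes the single constant $C$ in the corollary legitimate.
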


Next, we consider groups that can be built from finitely generated free groups. 

\begin{lemma}\label{thm:dfr:freeSingle}
	$\forall r \in \mathbb{N}$, $\exists C \in \mathbb{R}_{\geq 1}$, $F_r=\langle a_1,\ldots,a_r| \rangle$ has an algebraic $[1,2,C^{-n}]$-DFR.
\end{lemma}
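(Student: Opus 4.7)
The plan is to reduce to \Cref{thm:dfr:init:F2}, which already provides an algebraic $[1,2,C^{-n}]$-DFR for $F_2$, together with the subgroup construction of \Cref{thm:dfr:subOver}(\ref{thm:dfr:subOver:sub}) applied to the classical embedding of $F_r$ into $F_2$.

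First, I would dispose of the degenerate case $r = 0$ separately: the trivial group satisfies $G_{\neq 1} = \emptyset$, so condition (\ref{def:dfr:distinguish}) of \Cref{def:dfr} holds vacuously, and the singleton $\{\mathbf{1}_2\}$ serves as an algebraic $[1, 2, \tau]$-DFR for any admissible $\tau$, in particular for $\tau(n) = C^{-n}$ with any $C \geq 1$.

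For $r \geq 1$, I would invoke the classical fact that $F_r$ embeds into $F_2$ as a finitely generated subgroup. The case $r = 1$ is immediate, since the cyclic subgroup $\langle a \rangle \leq F_2$ is isomorphic to $\mathbb{Z} = F_1$. For $r \geq 2$, an explicit embedding is obtained, for instance, by identifying $F_r$ with the subgroup of $F_2 = \langle a, b \rangle$ generated by $\{a^i b a^{-i}: 0 \leq i \leq r-1\}$, which is free of rank $r$ by a standard Nielsen--Schreier argument (or a direct ping-pong verification on the Cayley graph of $F_2$). Having fixed such an embedding, \Cref{thm:dfr:init:F2} supplies an algebraic $[1, 2, C_0^{-n}]$-DFR for $F_2$, and \Cref{thm:dfr:subOver}(\ref{thm:dfr:subOver:sub}) then produces an algebraic $[1, 2, C_0^{-C_1 n}]$-DFR for $F_r$, where $C_1 \in \mathbb{R}_{>0}$ is the bilipschitz constant comparing the word metric on $F_r$ with the word metric induced from the embedding. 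Setting $C = C_0^{C_1}$ yields the desired $[1, 2, C^{-n}]$-DFR.

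There is no substantial obstacle. The only minor point to verify is that the change-of-presentation distortion $\eta_{C_1}$ introduced by \Cref{thm:dfr:subOver}(\ref{thm:dfr:subOver:sub}) preserves the exponential form of $\tau$; but since $C_0^{-C_1 n} = (C_0^{C_1})^{-n}$, composing an exponential with a linear scaling only changes the base, so the shape required by the statement is retained.
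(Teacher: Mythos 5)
Your proof is correct and follows essentially the same route as the paper: both reduce to the $[1,2,C^{-n}]$-DFR for $F_2$ from \Cref{thm:dfr:init:F2} and then pass to $F_r$ as a finitely generated subgroup via \Cref{thm:dfr:subOver}(\ref{thm:dfr:subOver:sub}). The only cosmetic differences are that the paper handles $r\in\{0,1\}$ by citing \Cref{thm:dfr:abelian} and realizes $F_r$ as a finite-index subgroup of $F_2$ via Nielsen--Schreier, whereas you use the explicit (infinite-index) embedding generated by $\{a^i b a^{-i}\}$; either works, since the subgroup lemma only requires finite generation.
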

\begin{proof}
	As $F_0=\{1\}$ and $F_1 = \mathbb{Z}$, \Cref{thm:dfr:abelian} immediately implies the claim when $r \in \{0,1\}$. Next, consider the case in which $r=2$. By \Cref{thm:dfr:init:F2}, $\exists C \in \mathbb{R}_{\geq 1}$ such that  $F_2=\langle a_1,a_2| \rangle$ has an algebraic $[1,2,C^{-n}]$-DFR. Finally, suppose $r>2$. By the Nielsen-Schreier theorem, $F_2$ has a finite-index subgroup isomorphic to $F_r$; the claim immediately follows from \Cref{thm:dfr:subOver}(\ref{thm:dfr:subOver:sub}).
\end{proof}

\begin{theorem}\label{thm:dfr:free}
	Suppose $G=\langle S|R \rangle$, with $S$ finite, such that $G \leq F_{r_1} \times \cdots \times F_{r_t}$, for some $r_1,\ldots,r_t \in \mathbb{N}$. Then $\exists C \in \mathbb{R}_{\geq 1}$ such that $G$ has an algebraic $[t,2,C^{-n}]$-DFR. 
\end{theorem}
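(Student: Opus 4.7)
The plan is to build the DFR for $G$ in two stages: first construct an algebraic $[t,2,C_0^{-n}]$-DFR for the ambient product $F_{r_1} \times \cdots \times F_{r_t}$, and then pass to the subgroup $G$ via restriction.

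For the first stage, I would apply \Cref{thm:dfr:freeSingle} to each factor $F_{r_i}$: this produces constants $C_i \in \mathbb{R}_{\geq 1}$ and an algebraic $[1,2,C_i^{-n}]$-DFR of $F_{r_i}$. Set $C_0 = \max_i C_i$; then each $F_{r_i}$ has an algebraic $[1,2,C_0^{-n}]$-DFR as well, since $C_0^{-n} \leq C_i^{-n}$ and DFR parameters may be weakened. Now iterate \Cref{thm:dfr:prod}: applied $t-1$ times, it yields an algebraic $[t,2,\tau^{\min}, \overline{\mathbb{Q}}]$-DFR for $F_{r_1} \times \cdots \times F_{r_t}$, where $\tau^{\min}$ is the pointwise minimum of $t$ copies of $C_0^{-n}$, which is again $C_0^{-n}$. (The amplitude set $\overline{\mathbb{Q}}$ is preserved throughout, and the hypothesis $S_G \cap S_H = \emptyset$ in \Cref{thm:dfr:prod} is harmless since we may always rename generators.)

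For the second stage, since $G \leq F_{r_1} \times \cdots \times F_{r_t}$ and $G$ is given with a finite generating set $S$, I would apply \Cref{thm:dfr:subOver}(\ref{thm:dfr:subOver:sub}). This provides a constant $C' \in \mathbb{R}_{>0}$ and an algebraic $[t, 2, C_0^{-n} \circ \eta_{C'}, \overline{\mathbb{Q}}]$-DFR of $G$. Since $C_0^{-\eta_{C'}(n)} = C_0^{-C'n} = (C_0^{C'})^{-n}$, setting $C := C_0^{C'} \in \mathbb{R}_{\geq 1}$ yields the desired algebraic $[t,2,C^{-n}]$-DFR for $G$.

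No step is really an obstacle here; essentially all the work has already been done in \Cref{thm:dfr:freeSingle}, \Cref{thm:dfr:prod}, and \Cref{thm:dfr:subOver}. The only minor point requiring care is tracking that the distinguishing function remains of exponential form $C^{-n}$ after both the product construction (which preserves this form, since a minimum of finitely many functions $C_i^{-n}$ is again of this form) and the restriction construction (which rescales $n$ by a constant, so $C^{-n}$ remains $C^{-n}$ for a possibly larger base $C$). The number of representations stays at $t$ and the dimension stays at $2$, matching the claimed parameters.
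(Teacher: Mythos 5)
Your proposal is correct and follows essentially the same route as the paper's proof: obtain a $[1,2,C_i^{-n}]$-DFR for each factor via \Cref{thm:dfr:freeSingle}, combine them with \Cref{thm:dfr:prod} into a $[t,2,(\max_i C_i)^{-n}]$-DFR for the direct product, and then pass to the subgroup $G$ via \Cref{thm:dfr:subOver}(\ref{thm:dfr:subOver:sub}). Your bookkeeping of how the constant changes under the rescaling $n \mapsto C'n$ is slightly more explicit than the paper's, but the argument is the same.
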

\begin{proof}	
	By \Cref{thm:dfr:freeSingle}, each $F_{r_i}$ has an algebraic $[1,2,C_i^{-n}]$-DFR, for some $C_i \in \mathbb{R}_{\geq 1}$. \Cref{thm:dfr:prod} implies that $F_{r_1} \times \cdots \times F_{r_t}$ has an algebraic $[t,2,C^{-n}]$-DFR, where $C=\max_i C_i$, and \Cref{thm:dfr:subOver}(\ref{thm:dfr:subOver:sub}) then implies $G$ has a DFR with the claimed parameters.
\end{proof}

As in \Cref{sec:intro:results}, $\widehat{\Pi}_2$ denotes the class of finitely generated groups that are virtually a subgroup of a direct product of finitely-many finite-rank free groups.

\begin{corollary}\label{thm:dfr:virtFree}
	$\forall G \in \widehat{\Pi}_2, \exists K \in \mathbb{N}_{\geq 1}, \exists C \in \mathbb{R}_{\geq 1}$, such that $G$ virtually has an algebraic $[K,2,C^{-n}]$-DFR.
\end{corollary}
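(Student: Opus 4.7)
The plan is to unpack the definition of $\widehat{\Pi}_2$ and invoke \Cref{thm:dfr:free} on the appropriate finite-index subgroup. Specifically, given $G \in \widehat{\Pi}_2$, by definition there exist $r_1,\ldots,r_t \in \mathbb{N}$ and a finite-index subgroup $H \leq G$ together with an injective homomorphism $\iota: H \hookrightarrow F_{r_1} \times \cdots \times F_{r_t}$. The goal is to produce an algebraic $[K,2,C^{-n}]$-DFR for $H$; then by \Cref{def:dfrTypes}(\ref{def:dfrTypes:virtual}), $G$ virtually has this DFR, which is exactly what the corollary asserts.

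The first step is to ensure that $H$ itself is finitely generated so that \Cref{thm:dfr:free} applies. This follows from the standard fact (a consequence of the Schreier / Reidemeister-Schreier argument) that a finite-index subgroup of a finitely generated group is itself finitely generated; since $G$ is finitely generated by the definition of $\widehat{\Pi}_2$, we obtain a finite presentation $H=\langle S_H|R_H\rangle$ with $S_H$ finite. Because $\iota$ realizes $H$ as a subgroup of $F_{r_1} \times \cdots \times F_{r_t}$, the hypotheses of \Cref{thm:dfr:free} are met.

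Applying \Cref{thm:dfr:free} to $H$ yields $C \in \mathbb{R}_{\geq 1}$ such that $H$ has an algebraic $[t,2,C^{-n}]$-DFR. Setting $K:=t$ and invoking \Cref{def:dfrTypes}(\ref{def:dfrTypes:virtual}) with respect to the inclusion $H \leq G$ (which has finite index by construction), we conclude that $G$ virtually has an algebraic $[K,2,C^{-n}]$-DFR, as desired.

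There is essentially no obstacle here: the corollary is really just the combination of the definition of $\widehat{\Pi}_2$, the finite generation of finite-index subgroups, and \Cref{thm:dfr:free}. The only minor subtlety worth noting in the write-up is that we prove a DFR for the finite-index subgroup $H$ rather than for $G$ itself, thereby using the ``virtual'' clause of the definition and deliberately avoiding the parameter degradation that would arise if we instead invoked \Cref{thm:dfr:subOver}(\ref{thm:dfr:subOver:over}) to promote the DFR up to $G$.
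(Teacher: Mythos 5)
Your proposal is correct and is exactly the argument the paper intends (the corollary is stated without proof as an immediate consequence of \Cref{thm:dfr:free}, the definition of $\widehat{\Pi}_2$, and \Cref{def:dfrTypes}(\ref{def:dfrTypes:virtual})); you also rightly observe that one keeps the DFR on the finite-index subgroup rather than inducing up via \Cref{thm:dfr:subOver}(\ref{thm:dfr:subOver:over}), matching the remark following that lemma. One trivial nit: a finite-index subgroup of a finitely generated group is finitely \emph{generated} but need not be finitely \emph{presented} (indeed the paper's own example $K=\ker\pi \in \widehat{\Pi}_2$ is not finitely presented), so you should only claim a presentation $\langle S_H|R_H\rangle$ with $S_H$ finite, which is all \Cref{thm:dfr:free} requires.
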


We conclude with a ``generic'' construction that covers all groups that have algebraic DFRs. We remark that while this does partially subsume all other results in this section, it does not do so completely, as the earlier constructions of DFRs, for certain particular groups, yield better parameters. 

\begin{theorem}\label{thm:dfr:embed}
	Consider a group $G=\langle S| R \rangle$, with $S$ finite, where $G$ is not the trivial group. Suppose $G$ has a faithful representation $\pi:G \rightarrow \U(l,\overline{\mathbb{Q}})$. Then $\pi$ has a (unique, up to isomorphism) set of irreducible subrepresentations $\{\pi_j:G \rightarrow \U(d_j,\overline{\mathbb{Q}})\}_{j=1}^m$ such that $\pi \cong \pi_1 \oplus \cdots \oplus \pi_m$.	Let $d_{\max}=\max_j d_j$. Define the value $d$ as follows: if $\bigcap_j \Pker(\pi_j)=\{1_G\}$, let $d=d_{\max}$, otherwise, let $d=d_{\max}+1$. Partition the non-trivial $\pi_j$ into isomorphism classes (i.e., only consider those $\pi_j$ which are not the trivial representation; $\pi_{j_1}$ and $\pi_{j_2}$ belong to the same isomorphism class if $\pi_{j_1} \cong \pi_{j_2}$) and let $k$ denote the number of isomorphism classes that appear. Then $\exists C \in \mathbb{R}_{\geq 1}$ such that $G$ has an algebraic $[k,d,C^{-n}]$-DFR. 
\end{theorem}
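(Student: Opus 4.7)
The plan is to construct the DFR explicitly from the irreducible decomposition of $\pi$. For each isomorphism class, pick a single representative, giving irreducible representations $\rho_1, \ldots, \rho_k$, where $\rho_i : G \to \U(d_i, \overline{\mathbb{Q}})$ and $d_i \leq d_{\max}$. Pad each to the common dimension $d$ by defining $\widehat{\rho}_i := \rho_i \oplus \mathbf{1}_{d-d_i} : G \to \U(d,\overline{\mathbb{Q}})$; this is a well-defined unitary representation into $\U(d,\overline{\mathbb{Q}})$, so properties \Cref{def:dfr}(\ref{def:dfr:unitary}) and \Cref{def:dfr}(\ref{def:dfr:amplitudes}) are immediate (for each $\sigma \in S \cup S^{-1}$ the single matrix $\widehat{\rho}_i(\sigma) \in \U(d,\overline{\mathbb{Q}})$ suffices for (\ref{def:dfr:amplitudes})). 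The task reduces to verifying \Cref{def:dfr}(\ref{def:dfr:distinguish}) with decay rate $C^{-n}$. By \Cref{thm:characterAlgDiophantine} applied to each $\widehat{\rho}_i$, there exist constants $\widehat{C}_i \geq 1$ with $\lvert \chi_{\widehat{\rho}_i}(g)\rvert \leq d - \widehat{C}_i^{-l(g)}$ for all $g \in G \setminus \Pker(\widehat{\rho}_i)$. Set $C = \max_i \widehat{C}_i$; it then suffices to prove that for every $g \in G_{\neq 1}$, some $\widehat{\rho}_i$ satisfies $g \notin \Pker(\widehat{\rho}_i)$.

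The argument splits along the two cases defining $d$. The key bookkeeping lemma is that whenever $d_i < d$, one has $\Pker(\widehat{\rho}_i) = \ker(\rho_i)$: if $\widehat{\rho}_i(g) = e^{i\theta}I_d$, inspecting the trivial block forces $e^{i\theta}=1$, so $\rho_i(g) = I_{d_i}$, i.e., $g \in \ker(\rho_i)$ (the reverse inclusion is trivial). In the case $d = d_{\max}$, the hypothesis $\bigcap_j \Pker(\pi_j) = \{1_G\}$ combined with $\Pker(\pi_j)$ depending only on the isomorphism class of $\pi_j$ yields $\bigcap_i \Pker(\rho_i) = \{1_G\}$, and since $\Pker(\widehat{\rho}_i) \subseteq \Pker(\rho_i)$ always (equal when $d_i = d$, equal to $\ker(\rho_i) \subseteq \Pker(\rho_i)$ when $d_i < d$), we have $\bigcap_i \Pker(\widehat{\rho}_i) = \{1_G\}$ as well. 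In the case $d = d_{\max}+1$, we have $d > d_i$ for every $i$, so $\Pker(\widehat{\rho}_i) = \ker(\rho_i)$ for every $i$; then using that trivial summands of $\pi$ contribute all of $G$ to kernel intersections and that $\ker$ is an isomorphism-class invariant, $\bigcap_i \Pker(\widehat{\rho}_i) = \bigcap_i \ker(\rho_i) = \bigcap_j \ker(\pi_j) = \ker(\pi) = \{1_G\}$ by faithfulness of $\pi$. In either case, for every $g \in G_{\neq 1}$ there exists $i$ with $g \notin \Pker(\widehat{\rho}_i)$, whence $\lvert \chi_{\widehat{\rho}_i}(g)\rvert \leq d - C^{-l(g)}$, completing the verification.

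The only step requiring real thought is the identification of $\Pker(\widehat{\rho}_i)$ when $d_i < d$ and the corresponding dichotomy between using $d = d_{\max}$ versus $d = d_{\max}+1$; once this is understood, the conclusion is essentially a bookkeeping argument that pushes faithfulness of $\pi$ through the isomorphism-class partitioning and then invokes \Cref{thm:characterAlgDiophantine} as a black box. No serious analytic obstacle arises, since the Diophantine content of the result has been entirely absorbed into \Cref{thm:nonComAlgDiophantine}.
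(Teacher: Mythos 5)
Your proposal is correct and follows essentially the same route as the paper's proof: pad the isomorphism-class representatives with trivial summands to dimension $d$, observe that padding forces $\Pker(\widehat{\rho}_i)=\ker(\rho_i)$ in the $d=d_{\max}+1$ case (and that $\Pker(\widehat{\rho}_i)\subseteq\Pker(\rho_i)$ in general), push faithfulness of $\pi$ through the intersection of (quasi)kernels, and invoke \Cref{thm:characterAlgDiophantine}. The only detail the paper records that you omit is the trivial observation that $d\geq 2$ (needed for \Cref{def:dfr}), which holds since $G$ is non-trivial.
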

\begin{proof}
	Notice that, as $G$ is not the trivial group, $d \geq 2$. Assume that the $\pi_j$ are ordered such that $\pi_1,\ldots,\pi_k$ are representatives of the $k$ distinct isomorphism classes of the non-trivial representations that appear among the $\pi_j$. For each $j \in \{1,\ldots,k\}$, define the representation $\rho_j=\pi_j \oplus \mathbf{1}_{d-d_j}:G \rightarrow \U(d,\overline{\mathbb{Q}})$. By \Cref{thm:characterAlgDiophantine}, $\forall j \in \{1,\ldots,k\}, \exists C_j \in \mathbb{R}_{\geq 1}$ such that, $\forall g \not \in \Pker(\rho_j)$, $\lvert \chi_{\rho_j}(g) \rvert \leq d - C_j^{-l(g)}$. Set $C=\max_j C_j$.
	
	Next, notice that $\bigcap_j \Pker(\rho_j)=\{1_G\}$. If $\bigcap_j \Pker(\pi_j)=\{1_G\}$, then this is obvious. Suppose $\bigcap_j \Pker(\pi_j) \neq \{1_G\}$. Then $d=d_{\max}+1>d_j$, $\forall j$, which implies $\rho_j=\pi_j \oplus \mathbf{1}_{t_j}$, where $t_j:=d-d_j \geq 1$. Therefore, for each $j$, $\rho_j(G) \cap Z(\U(d,\overline{\mathbb{Q}})) = I_d$, and so, by definition, $\Pker(\rho_j)=\ker(\rho_j)$.  As $\pi$ is faithful, $$\{1_G\}=\bigcap_{j=1}^m \ker(\pi_j)=\bigcap_{j=1}^k \ker(\rho_j)=\bigcap_{j=1}^k \Pker(\rho_j).$$ 
	
	Thus, $\forall g \in G_{\neq 1}$, $\exists j$ such that $g \not \in \Pker(\rho_j)$, which implies $\lvert \chi_{\rho_j}(g) \rvert \leq d - C_j^{-l(g)} \leq d - C^{-l(g)}$. Therefore, $\{\rho_1,\ldots,\rho_k\}$ is an algebraic $[k,d,C^{-n}]$-DFR for $G$. 
\end{proof}

\subsection{Projective DFRs}\label{sec:dfr:projective}

A DFR $\mathcal{F}=\{\rho_1,\ldots,\rho_j\}$ of a group $G$ is a set of unitary representations of $G$, i.e., group homomorphisms $\rho_j:G \rightarrow \U(d)$. We next consider a slight generalization. A \textit{projective} unitary representation of $G$ is a group homomorphism $\rho:G \rightarrow \PU(d)=\U(d)/Z(\U(d))$. We may (non-uniquely) lift any such $\rho$ to a function $\widehat{\rho}:G \rightarrow \U(d)$ (i.e., $\gamma \circ \widehat{\rho}=\rho$, where $\gamma:\U(d) \rightarrow \PU(d)$ is the canonical projection). Note that $\widehat{\rho}$ is not necessarily a group homomorphism and that certain projective representations $\rho$ cannot be lifted to an ordinary representation. However, for any two lifts, $\widehat{\rho}_1$ and $\widehat{\rho}_2$, of $\rho$, we have $\lvert \chi_{\hat{\rho}_1}(g) \rvert=\lvert \chi_{\hat{\rho}_2}(g) \rvert$, $\forall g \in G$. Therefore, the function $\lvert \chi_{\rho}(\cdot) \rvert:G \rightarrow \mathbb{R}$ given by $\lvert \chi_{\rho}(g) \rvert=\lvert \chi_{\hat{\rho}}(g)\rvert$ is well-defined. 

We then define a $[k,d,\tau,\mathbb{A}]$-PDFR as a set of projective representations $\mathcal{F}=\{\rho_1,\ldots,\rho_j\}$ that satisfies \Cref{def:dfr} where ``representation'' is replaced by ``projective representation'' in that definition. As we will observe in the following section, the same process that allows a DFR for a group $G$ to be used to produce a 2QCFA for the word problem $W_G$, can also be applied to a PDFR. If a PDFR consists entirely of representations into $\PU(d,\overline{\mathbb{Q}})=\U(d,\overline{\mathbb{Q}})/Z(\U(d,\overline{\mathbb{Q}}))$, we say it is an \textit{algebraic} PDFR. The following variant of \Cref{thm:dfr:embed} follows by a precisely analogous proof.

\begin{theorem}\label{thm:pdfr:generic}
	Suppose the group $G= \langle S|R \rangle$, with $S$ finite, has a family $\mathcal{F}=\{\rho_1,\ldots,\rho_k\}$ of projective representations $\rho_j:G \rightarrow \PU(d,\overline{\mathbb{Q}})$, such that $\bigcap_j \ker(\rho_j)=\{1_G\}$. Then $\exists C \in \mathbb{R}_{\geq 1}$ such that $\mathcal{F}$ is an algebraic $[k,d,C^{-n}]$-PDFR for $G$.
\end{theorem}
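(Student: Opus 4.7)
The plan is to mirror the proof of \Cref{thm:dfr:embed}, with the key adjustment being that a projective representation $\rho_j:G\to\PU(d,\overline{\mathbb{Q}})$ generally has no lift to an honest group homomorphism into $\U(d,\overline{\mathbb{Q}})$. So \Cref{thm:characterAlgDiophantine} cannot be invoked directly on $\rho_j$; instead I will invoke \Cref{thm:nonComAlgDiophantine} on a cleverly chosen finitely generated subgroup of $\U(d,\overline{\mathbb{Q}})$ built from lifts of the generators.

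Concretely, for each $j\in\{1,\ldots,k\}$ and each $\sigma\in S\cup S^{-1}$, fix an arbitrary lift $\widehat{\rho}_j(\sigma)\in\U(d,\overline{\mathbb{Q}})$ of $\rho_j(\sigma)\in\PU(d,\overline{\mathbb{Q}})$ under the canonical projection $\gamma:\U(d,\overline{\mathbb{Q}})\to\PU(d,\overline{\mathbb{Q}})$; such lifts exist because the entries of $\rho_j(\sigma)$ are algebraic. Let $S_{H_j}=\{\widehat{\rho}_j(\sigma):\sigma\in S\cup S^{-1}\}$ and let $H_j=\langle S_{H_j}\rangle\leq\U(d,\overline{\mathbb{Q}})$. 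For any word $w=w_1\cdots w_n\in (S\cup S^{-1})^*$ representing $g=\phi(w)\in G$, the product $\widehat{\rho}_j(w):=\widehat{\rho}_j(w_1)\cdots\widehat{\rho}_j(w_n)$ lies in $H_j$ and satisfies $\gamma(\widehat{\rho}_j(w))=\rho_j(g)$; hence $\widehat{\rho}_j(w)$ is a lift of $\rho_j(g)$ and has length (in $H_j$ relative to $S_{H_j}$) at most $n$. In particular, since any two lifts of the same element of $\PU(d,\overline{\mathbb{Q}})$ differ by an element of $Z(\U(d,\overline{\mathbb{Q}}))$ and have equal trace modulus, we obtain $\lvert\chi_{\rho_j}(g)\rvert=\lvert\Tr(\widehat{\rho}_j(w))\rvert$ for any such choice of $w$.

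Now apply \Cref{thm:nonComAlgDiophantine} to each $H_j$ with generating set $S_{H_j}$: there is a constant $C_j\in\mathbb{R}_{\geq 1}$ such that every $h\in H_j\setminus Z(\U(d,\overline{\mathbb{Q}}))$ satisfies $\lvert\Tr(h)\rvert\leq d-C_j^{-\ell(h)}$, where $\ell(h)$ denotes the word length of $h$ in $H_j$. Set $C=\max_j C_j\geq 1$. Given any $g\in G_{\neq 1}$, the hypothesis $\bigcap_j\ker(\rho_j)=\{1_G\}$ produces some $j$ with $g\notin\ker(\rho_j)$, i.e.\ $\rho_j(g)\neq 1_{\PU(d,\overline{\mathbb{Q}})}$; then for any lift $\widehat{\rho}_j(g)$ we have $\widehat{\rho}_j(g)\notin Z(\U(d,\overline{\mathbb{Q}}))$. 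Choosing $w$ to be a geodesic word of length $l(g)$ representing $g$, the element $\widehat{\rho}_j(w)\in H_j$ has $\ell(\widehat{\rho}_j(w))\leq l(g)$ and lies outside $Z(\U(d,\overline{\mathbb{Q}}))$, so
\[\lvert\chi_{\rho_j}(g)\rvert=\lvert\Tr(\widehat{\rho}_j(w))\rvert\leq d-C_j^{-\ell(\widehat{\rho}_j(w))}\leq d-C^{-l(g)},\]
which verifies property (b) of \Cref{def:dfr} for $\mathcal{F}$ viewed as a projective family, with $\tau(n)=C^{-n}$. Properties (a) and (c) are immediate from the hypothesis: each $\rho_j$ is a projective representation into $\PU(d,\overline{\mathbb{Q}})$, and each $\rho_j(\sigma)=\gamma(\widehat{\rho}_j(\sigma))$ where $\widehat{\rho}_j(\sigma)\in\U(d,\overline{\mathbb{Q}})$ already gives the factorization required by \Cref{def:dfr}(\ref{def:dfr:amplitudes}) in the projective setting. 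Hence $\mathcal{F}$ is an algebraic $[k,d,C^{-n}]$-PDFR.

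The only real obstacle is bookkeeping the fact that the $\widehat{\rho}_j$ fail to be homomorphisms: products of lifts of generators need not equal the chosen lift of the product. This is handled entirely by the observation above, namely that any product of generator-lifts is \emph{some} lift of the corresponding group element, and since the trace modulus is lift-independent, the bound obtained from \Cref{thm:nonComAlgDiophantine} on $H_j$ transfers verbatim to $\lvert\chi_{\rho_j}(g)\rvert$. Once this is observed, the proof is essentially identical to that of \Cref{thm:dfr:embed}.
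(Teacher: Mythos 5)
Your proposal is correct and matches the paper's intent: the paper gives no explicit proof, stating only that the result ``follows by a precisely analogous proof'' to \Cref{thm:dfr:embed}, and the one genuinely new detail needed---applying \Cref{thm:nonComAlgDiophantine} to the subgroup of $\U(d,\overline{\mathbb{Q}})$ generated by arbitrary lifts of the generator images, then using lift-independence of the trace modulus---is exactly what you supply. Your handling of the failure of $\widehat{\rho}_j$ to be a homomorphism (any product of generator-lifts is \emph{some} lift of the corresponding group element, of controlled word length) is the right bookkeeping and closes the argument.
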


\subsection{Unbounded-Error DFRs}\label{sec:dfr:unbounded}

If $\mathcal{F}=\{\rho_1,\ldots,\rho_k\}$ is a DFR for a group $G$, then $\bigcap_j \Pker(\rho_j)=\{1_G\}$. However, a crucial element in the definition of a DFR is the requirement that, much more strongly, all $g \in G_{\neq 1}$ are ``far'' from being in $\bigcap_j \Pker(\rho_j)$; in particular, if $\mathcal{F}$ is a $[k,d,\tau,\mathbb{A}]$-DFR, then  $\forall g \in G_{\neq 1}, \exists j$ such that $\lvert \chi_{\rho_j}(g)\rvert \leq d-\tau(l(g))$. This requirement is essential in order for our construction of a 2QCFA, that recognizes $W_G$ using a DFR for $G$, to operate with \textit{bounded} error. We next consider a generalization of a DFR, where this requirement is removed, which will then yield a 2QCFA that recognizes $W_G$ with \textit{unbounded} error.

We say $\mathcal{F}=\{\rho_1,\ldots,\rho_k\}$ is an \textit{unbounded-error} $[k,d,\mathbb{A}]$-DFR for a group $G=\langle S|R \rangle$ if the conditions of \Cref{def:dfr} hold, where \Cref{def:dfr}(\ref{def:dfr:distinguish}) is replaced by  \Cref{def:dfr}(\ref{def:dfr:distinguish})': $\forall g \in G_{\neq 1}$, $\exists j$ such that $\lvert \chi_{\rho_j}(g) \rvert < d$. This condition is equivalent to $\bigcap_j \Pker(\rho_j)=\{1_G\}$.

Note that, by \Cref{thm:characterAlgDiophantine}, any algebraic unbounded-error $[k,d]$-DFR is also an algebraic $[k,d,C^{-n}]$-DFR, for some $C \in \mathbb{R}_{\geq 1}$; furthermore, as noted in the discussion following \Cref{def:dfrTypes}, only a finitely generated abelian group could have a diagonal unbounded-error $[k,d]$-DFR, and all finitely generated abelian groups were shown to have DFRs in \Cref{thm:dfr:abelian}. Therefore, in order to obtain something new, we must consider unbounded-error DFRs that are neither algebraic nor diagonal.

We will show that any $G \in \widehat{\Pi}_3$ has an unbounded-error DFR. We begin by again considering the group $\mathbb{Z}^r$, for $r \in \mathbb{N}_{\geq 1}$. While the DFRs produced by \Cref{thm:dfr:abelian} suffice for establishing all of our results concerning the recognizability of the word problem for $\mathbb{Z}^r$, we next exhibit a different construction of a DFR for $\mathbb{Z}^r$, which we will require in order to exhibit an unbounded-error DFR of a related group. In the following, for a commutative (unital) ring $R$, let $\SO(2,R)$ denote the group of $2 \times 2$ orthogonal matrices of determinant $1$ whose entries lie in $R$. For a set of prime numbers $\mathcal{P}=\{p_1,\ldots,p_m\}$, let $\mathbb{Z}[\frac{1}{p_1},\ldots,\frac{1}{p_m}]$ denote the ring obtained by adjoining $\frac{1}{p_1},\ldots,\frac{1}{p_m}$ to the ring $\mathbb{Z}$, i.e., $\mathbb{Z}[\frac{1}{p_1},\ldots,\frac{1}{p_m}]$ is the localization of $\mathbb{Z}$ away from $\mathcal{P}$. Notice that $\SO(2,\mathbb{Z}[\frac{1}{p_1},\ldots,\frac{1}{p_m}]) \leq \SO(2,\mathbb{Q}) \leq \SU(2,\mathbb{Q}) \leq \SU(2,\overline{\mathbb{Q}})$.

\begin{lemma}\label{thm:dfr:ratCircleFreeAbelian}
	Consider the group $\mathbb{Z}^r=\langle S_r|R_r \rangle$, where $S_r=\{a_1,\ldots,a_r\}$ and $R_r=\{[a_i,a_j]|i,j \in \{1,\ldots,r\}\}$. There is a representation $\rho:\mathbb{Z}^r \rightarrow \SO(2,\mathbb{Z}[\frac{1}{p_1},\ldots,\frac{1}{p_r}])$ and $D_1,D_2 \in \mathbb{R}_{>0}$, such that $\{\rho\}$ is a $[1,2,D_2 n^{-D_1}]$-algebraic DFR for $\mathbb{Z}^r$.
\end{lemma}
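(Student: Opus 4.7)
The plan is to realize $\mathbb{Z}^r$ as a subgroup of $\SO(2)$ by sending each free generator $a_i$ to a rational rotation whose associated Pythagorean-style triple involves only the prime $p_i$, and then extract the Diophantine character bound via Baker's theorem (\Cref{thm:baker}) on linear forms in logarithms of algebraic numbers. Since $\SO(2)$ is abelian, any assignment of rotation matrices to the $a_i$ extends uniquely to a homomorphism on $\mathbb{Z}^r$, so there is nothing to check about the relations $R_r$.

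Concretely, I would choose $r$ distinct odd primes $p_1, \ldots, p_r$ with $p_i \equiv 1 \pmod 4$ (such primes exist in arbitrary quantity by Dirichlet, or even by Fermat), and by the classical two-squares theorem write $p_i = a_i^2 + b_i^2$ with $a_i, b_i \in \mathbb{Z}$. Set
\[
\zeta_i \;=\; \frac{(a_i + b_i \sqrt{-1})^2}{p_i} \;=\; \frac{a_i^2 - b_i^2}{p_i} + \frac{2 a_i b_i}{p_i}\sqrt{-1} \;=:\; c_i + s_i \sqrt{-1},
\]
so $c_i, s_i \in \mathbb{Z}[\tfrac{1}{p_i}]$ and $c_i^2 + s_i^2 = |\zeta_i|^2 = 1$. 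Then $M_i := \bigl(\begin{smallmatrix} c_i & -s_i \\ s_i & c_i \end{smallmatrix}\bigr) \in \SO(2, \mathbb{Z}[\tfrac{1}{p_i}]) \subseteq \SO(2, \mathbb{Z}[\tfrac{1}{p_1}, \ldots, \tfrac{1}{p_r}])$, and I define $\rho$ by $a_i \mapsto M_i$. Writing $\zeta_i = e^{\sqrt{-1}\,\theta_i}$, the image of a word $g = a_1^{h_1} \cdots a_r^{h_r}$ is the rotation by $\theta = h_1 \theta_1 + \cdots + h_r \theta_r$, so $\chi_\rho(g) = 2\cos\theta$ and we need to show $|\cos\theta|$ is bounded away from $1$ polynomially in $n = l(g)$.

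The core step is a Baker-theorem estimate. First I would establish that $\log \zeta_1, \ldots, \log \zeta_r, \pi\sqrt{-1}$ are linearly independent over $\mathbb{Q}$: any rational relation, after clearing denominators and exponentiating, would yield $\zeta_1^{n_1} \cdots \zeta_r^{n_r} = \pm 1$ for some integers $n_i$. Using the Gaussian prime $\mathfrak{p}_i = (a_i + b_i \sqrt{-1})$ in $\mathbb{Z}[\sqrt{-1}]$ (well-defined since $p_i$ splits), the valuations $v_{\mathfrak{p}_j}(\zeta_i) = \delta_{ij}$ (nonzero only when $j=i$, since the $p_i$ are distinct rational primes) force all $n_i = 0$, after which the surviving relation $q_0\,\pi\sqrt{-1} = 0$ gives $q_0 = 0$. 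Each $\log \zeta_i$ and $\pi\sqrt{-1}$ lie in the set $L$ of \Cref{thm:baker} since $\zeta_i, -1 \in \overline{\mathbb{Q}}$ and are nonzero.

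Now apply \Cref{thm:baker}: there exists $C \in \mathbb{R}_{>0}$ such that for every nonzero $(h_1, \ldots, h_r, m) \in \mathbb{Z}^{r+1}$ with $H = \max(|h_1|, \ldots, |h_r|, |m|)$,
\[
\bigl| h_1 \log \zeta_1 + \cdots + h_r \log \zeta_r - m \pi\sqrt{-1} \bigr| \;\geq\; (eH)^{-C}.
\]
Choosing branches so $|\theta_i| \leq \pi$, the quantity above equals $|\theta - m\pi|$ (as a real number times $\sqrt{-1}$). Letting $m$ be the nearest integer to $\theta/\pi$, I get $|m| \leq |\theta|/\pi + 1 \leq n + 1$ and hence $H \leq n+1$. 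If $g \neq 1$ then $(h_1, \ldots, h_r) \neq 0$, so the tuple $(h_1, \ldots, h_r, m)$ is nonzero, giving $|\theta - m\pi| \geq (e(n+1))^{-C}$. Combined with the elementary estimate $|\cos\theta| = |\cos(\theta - m\pi)| \leq 1 - \tfrac{1}{3}|\theta - m\pi|^2$ valid since $|\theta - m\pi| \leq \pi/2$, this yields $|\chi_\rho(g)| = 2|\cos\theta| \leq 2 - D_2 n^{-D_1}$ for suitable $D_1, D_2 \in \mathbb{R}_{>0}$, as required.

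The main obstacle is arranging the application of Baker's theorem so that the bound is polynomial in $n$ rather than in $\max(|h_i|, |m|)$: this requires the auxiliary integer $m$ (the nearest multiple of $\pi$) to itself be controlled by $n$, which in turn rests on choosing principal branches for $\log \zeta_i$ so that each $|\theta_i| \leq \pi$. The verification of $\mathbb{Q}$-linear independence of the logarithms via splitting primes in $\mathbb{Z}[\sqrt{-1}]$ is the other substantive ingredient.
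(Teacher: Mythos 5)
Your proposal is correct and follows essentially the same route as the paper: the same rotation matrices built from Gaussian-integer decompositions of primes $p_i\equiv 1\pmod 4$, the same application of Baker's theorem (\Cref{thm:baker}) to the linear form in $\log\zeta_1,\ldots,\log\zeta_r,\pi\sqrt{-1}$, and the same device of bounding the nearest-multiple-of-$\pi$ integer $m$ by $l(g)$ so that the resulting lower bound is polynomial in $n$. The only (welcome) deviation is that where the paper cites Tan's theorem on the group of rational points of the circle to get P-faithfulness and hence $\mathbb{Q}$-linear independence of the logarithms, you prove that independence directly via valuations at the split Gaussian primes $(a_i+b_i\sqrt{-1})$, which is a correct and more self-contained substitute.
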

\begin{proof}
	Fundamentally, we follow the construction of Tan \cite{tan1996group} of the rational points on the unit circle. Let $p_j$ denote the $j^{\text{th}}$ prime number that is congruent to $1$ modulo $4$, and let $m_j,n_j \in \mathbb{N}$ denote the (unique) values which satisfy $p_j=m_j^2+n_j^2$ and $m_j > n_j > 0$. Define the representation $\rho: \mathbb{Z}^r \rightarrow \SO(2,\mathbb{Z}[\frac{1}{p_1},\ldots,\frac{1}{p_r}])$ such that $$a_j \mapsto \frac{1}{p_j} \begin{pmatrix} m_j^2-n_j^2 & 2m_j n_j \\ -2m_jn_j & m_j^2-n_j^2\end{pmatrix}, \ \ \forall j \in \{1,\ldots,r\}.$$ 
	
	Notice that $\rho(a_j)$ has eigenvalues $p_j^{-1}(m_j^2-n_j^2 \pm 2 m_j n_j i)$. As $\SO(2,\mathbb{Z}[\frac{1}{p_1},\ldots,\frac{1}{p_r}])$ is abelian, the $\rho(a_j)$ are simultaneously diagonalizable. Define $Y \in \U(2)$ such that, $\forall j$, $Y \rho(a_j) Y^{-1}=D_j$, where $D_j$ is a $2 \times 2$ diagonal matrix whose diagonal entries are the eigenvalues $p_j^{-1}(m_j^2-n_j^2 \pm 2 m_j n_j i)$. Define $\alpha_j \in (\mathbb{R} \cap (- \pi,\pi))$ such that $D_j=\text{diag}[e^{i \alpha_j},e^{-i \alpha_j}]$. 
	
	For some $(q_1,\ldots,q_r) \in \mathbb{Z}^r$, consider the element $g=a_1^{q_1} \cdots a_r^{q_r} \in \mathbb{Z}^r$. Then $$\chi_{\rho}(g)=\Tr\left(\prod_{j=1}^r \rho(a_j)^{q_j}\right)=\Tr\left(\prod_{j=1}^r \left(Y\rho(a_j)Y^{-1}\right)^{q_j}\right)=2 \cos\left(\sum_j q_j\alpha_j\right).$$
	
	Let $L=\{\beta \in \mathbb{C}_{\neq 0}|e^{\beta} \in \overline{\mathbb{Q}}\}$. Let $\beta_0 = i\pi$ and, for $j \in \{1,\ldots,r\}$, let $\beta_j=i\alpha_j$. Then $\beta_0,\ldots,\beta_r \in L$. By \cite[Theorem 1]{tan1996group}, $\rho$ is P-faithful, which immediately implies $\beta_0,\ldots,\beta_r$ are linearly independent over $\mathbb{Q}$. By Proposition~\ref{thm:baker}, $\exists C \in \mathbb{R}_{>0}$ such that, $\forall (q_0,\ldots,q_r) \in \mathbb{Z}^{r+1}$, where $q_{\max}:=\max_j \lvert q_j \rvert>0$, we have $\lvert \sum_j q_j \beta_j \rvert \geq (eq_{\max})^{-C}$. 
	
	Consider any $g=a_1^{q_1} \cdots a_r^{q_r} \in \mathbb{Z}^r_{\neq 1_{\mathbb{Z}^r}}$ (i.e., not all $q_i=0$). Let $q_0=\text{round}(\frac{1}{\pi} \sum_{j=1}^r q_j \alpha_j)$ and observe that, by construction $\lvert \alpha_j \rvert \leq \pi$, $\forall j$, and so $\lvert q_0 \rvert \leq \sum_{j=1}^r \lvert q_j \rvert=l(g)$. Therefore, $q_{\max}:=\max_{j \in \{0,\ldots,r\}} q_j \leq l(g)$, which implies $$\min_{m \in \mathbb{Z}} \left\lvert m \pi +\sum_{j=1}^r q_j \alpha_j \right\rvert = \left\lvert q_0 \pi +\sum_{j=1}^r q_j \alpha_j \right\rvert = \left\lvert q_0 \beta_0 +\sum_{j=1}^r q_j \beta_j \right\rvert \geq (el(g))^{-C}.$$ Therefore, $$\lvert\chi_{\rho}(g)\rvert=2\left\lvert \cos\left(\sum_j q_j\alpha_j\right)\right\rvert \leq 2-C' \min_{m \in \mathbb{Z}} \left\lvert m \pi +\sum_{j=1}^r q_j \alpha_j \right\rvert^2 \leq 2-C' (el(g))^{-2C},$$ for a constant $C' \in \mathbb{R}_{>0}$. We then conclude that $\{\rho\}$ is a $[1,2,D_2 n^{-D_1}]$-algebraic DFR for $\mathbb{Z}^r$, where $D_1=2C$ and $D_2=C'e^{-2C}$.
\end{proof}   

\begin{lemma}\label{thm:unboundedDFR:ZfreeZr}
	For any $r \in \mathbb{N}_{\geq 1}$, $\mathbb{Z} * \mathbb{Z}^r$ has an unbounded-error $[1,2,\widetilde{\mathbb{C}}]$-DFR. 
\end{lemma}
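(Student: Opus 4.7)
The plan is to construct a single P-faithful representation $\rho:\mathbb{Z}*\mathbb{Z}^r\to\U(2)$ whose generator images are products of matrices in $\U(2,\widetilde{\mathbb{C}})$. On the $\mathbb{Z}^r$-factor I take the Tan representation $\pi_2:\mathbb{Z}^r\to\SO(2,\mathbb{Q})$ from \Cref{thm:dfr:ratCircleFreeAbelian}, which is P-faithful (hence faithful, since $\mathbb{Z}^r$ is torsion-free). On the $\mathbb{Z}$-factor I send the generator $a$ to $M:=HD_{z_0}H$, where $H$ is the $2\times 2$ Hadamard matrix, $D_{z_0}=\operatorname{diag}(z_0,z_0^{-1})$, and $z_0:=e^{i\pi r_0}$ for a fixed irrational $r_0\in\overline{\mathbb{Q}}\cap\mathbb{R}$ (say $r_0=\sqrt 2$). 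Each of $H$, $D_{z_0}$, $H$ lies in $\U(2,\widetilde{\mathbb{C}})$, so the amplitude condition of \Cref{def:dfr}(\ref{def:dfr:amplitudes}) is met, and the universal property of the free product yields $\rho$.

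All the work lies in verifying P-faithfulness. Treating $z$ as a formal indeterminate, one computes $M(z)^h=z^h P_++z^{-h}P_-$, where $P_\pm=\tfrac{1}{2}\bigl(\begin{smallmatrix}1&\pm1\\ \pm1&1\end{smallmatrix}\bigr)$ are the rank-one spectral projections of $M$. Since $\pi_2(\mathbb{Z}^r)\subseteq\SO(2,\mathbb{Q})$, each entry of $\rho(w)(z)$ becomes a Laurent polynomial in $\mathbb{Q}[z,z^{-1}]$. By the Gelfond--Schneider theorem, $z_0=(-1)^{r_0}$ is transcendental, so any non-zero element of $\mathbb{Q}[z,z^{-1}]$ remains non-zero at $z_0$; it therefore suffices to exhibit, for each non-trivial reduced word $w$, a non-zero off-diagonal Laurent polynomial entry in $\rho(w)(z)$.

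The heart of the argument is to examine the coefficient of the highest-degree monomial $z^{d^*}$, with $d^*:=\sum_j|h_j|$ ranging over the $\mathbb{Z}$-exponents appearing in $w$. Expanding over sign choices $\epsilon\in\{\pm\}^k$, this monomial arises uniquely from $\epsilon_j=\operatorname{sgn}(h_j)$, and the corresponding coefficient is a rank-one matrix $\alpha\,uv^*$, where $\alpha=\prod_j\langle v_{\epsilon_j},\pi_2(g_j)v_{\epsilon_{j+1}}\rangle$ and $u,v$ are $v_\pm$ (or $\pi_2(g_0)v_\pm$, $\pi_2(g_k)^{*}v_\pm$) depending on whether $w$ begins and ends with a $\mathbb{Z}$- or $\mathbb{Z}^r$-factor. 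Writing $\pi_2(g)=\bigl(\begin{smallmatrix}c_g&-s_g\\ s_g&c_g\end{smallmatrix}\bigr)$, each inner product above equals $\pm c_{g_j}$ or $\pm s_{g_j}$, while the coordinates of $u,v$ have the form $\pm 1/\sqrt 2$ or $(c_g\pm s_g)/\sqrt 2$. The key observation, for any $g\in\mathbb{Z}^r\setminus\{0\}$, is that $c_g$, $s_g$, and $c_g\pm s_g$ are all non-zero: otherwise $\pi_2(g)$ or $\pi_2(2g)$ would equal $\pm I$ (forcing $g=0$ by P-faithfulness of $\pi_2$ together with torsion-freeness of $\mathbb{Z}^r$), or $c_g=\pm s_g\in\mathbb{Q}$ with $c_g^2+s_g^2=1$ would force $c_g=\pm 1/\sqrt 2\notin\mathbb{Q}$. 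Hence the $(1,2)$-entry of the leading coefficient is non-zero, so $\rho(w)(z_0)\notin Z(\U(2))$. The degenerate cases are immediate: $w\in\mathbb{Z}^r\setminus\{0\}$ gives $\rho(w)=\pi_2(w)\notin Z(\U(2))$ by P-faithfulness; $w=a^h$ with $h\neq 0$ would demand $z_0^{2h}=1$, impossible since $r_0$ is irrational. The main obstacle is the bookkeeping for the four alternation patterns of $w$ needed to read off the correct $u,v$, but in every sub-case the same rationality/torsion-freeness argument rules out the vanishing that a cancellation would require.
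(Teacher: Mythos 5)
Your proof is correct, and it reaches the lemma by a genuinely different route from the paper's. The paper keeps both factor representations rational --- the Tan representation on $\mathbb{Z}^r$ and a rational rotation on the $\mathbb{Z}$-factor, both from \Cref{thm:dfr:ratCircleFreeAbelian} --- and obtains faithfulness of the free product by conjugating the $\mathbb{Z}^r$-side by $\Lambda=\mathrm{diag}[\lambda,\lambda^2]$ with $\lambda=e^{\pi i \alpha}$ transcendental (Gel'fond--Schneider), after which P-faithfulness is \emph{cited} from Shalen \cite{shalen1979linear}; the only computation the paper does itself is checking Shalen's hypothesis that nontrivial elements of each factor have nonzero off-diagonal entries. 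You instead put the transcendental into the $\mathbb{Z}$-factor as the eigenvalues of $HD_{z_0}H$, leave the Tan representation untouched, and prove P-faithfulness from scratch by the leading-monomial argument on the Laurent-polynomial entries of $\rho(w)(z)$ --- in effect inlining a proof of the relevant special case of Shalen's proposition. The transcendence input is identical in both versions, and your non-vanishing checks ($s_g\neq 0$, $c_g\neq 0$, $c_g\neq\pm s_g$ for $g\neq 0$, via P-faithfulness, torsion-freeness, and rationality of the Tan matrices) are exactly the ``nonzero off-diagonal entries'' condition in a different guise. What your version buys is self-containedness and an explicit witness (the nonvanishing top-degree coefficient of the $(1,2)$-entry) for $\rho(w)(z_0)\notin Z(\U(2))$; what it costs is the endpoint bookkeeping over the four alternation patterns. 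Both constructions express each generator's image as a product of at most three matrices in $\U(2,\widetilde{\mathbb{C}})$, so the amplitude condition of \Cref{def:dfr}(\ref{def:dfr:amplitudes}) is met either way.
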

\begin{proof}
	Fix $r$. Let $S_r=\{x_1,\ldots,x_r\}$ and let $R_r=\{[x_i,x_j]|i,j \in \{1,\ldots,r\}\}$. By Lemma~\ref{thm:dfr:ratCircleFreeAbelian}, the group $A:=\mathbb{Z}^r=\langle S_r|R_r \rangle$ has a P-faithful representation $\rho:A \rightarrow \SU(2,\mathbb{Q})$, and the group $B:=\mathbb{Z}=\langle \{y\}|\rangle$ has a P-faithful representation $\pi:B \rightarrow \SU(2,\mathbb{Q})$. Notice that, $\forall a \in A_{\neq 1_A}$ both off-diagonal entries of the matrix $\rho(a)$ are nonzero. To see this, consider some $a \in A_{\neq 1_A}$. As $\rho(a) \in \SU(2)$, its two off-diagonal entries are equal in magnitude, and so they are both zero or both nonzero. If they are both zero, then $\rho(a)$ is diagonal; however, the only diagonal matrices in $\SU(2,\mathbb{Q})$ are $\{\pm I_2\}$, which would then imply $\rho(a) \in \{\pm I_2\}=Z(\SU(2))$, which contradicts the fact that $\rho$ is P-faithful. By a symmetric argument, $\forall b \in B_{\neq 1_B}$, both off-diagonal entries of the matrix $\pi(b)$ are nonzero.
	
	We now fundamentally follow (the proof of) Shalen \cite[Proposition 1.3]{shalen1979linear} to produce a P-faithful representation of $A*B \cong \mathbb{Z} * \mathbb{Z}^r$. Fix $\alpha \in ((\mathbb{R} \cap \overline{\mathbb{Q}})\setminus \mathbb{Q})$, let $\lambda=e^{\pi i \alpha}$, and notice that, by the Gel'fond-Schneider theorem, $\lambda \not \in \overline{\mathbb{Q}}$. Let $\Lambda=\text{diag}[\lambda,\lambda^2]$, the $2 \times 2$ diagonal matrix with diagonal entries $\lambda$ and $\lambda^2$, and observe that $\Lambda \in \T(2,\widetilde{\mathbb{C}})$. Define the representation $\widehat{\rho}:A \rightarrow \SU(2)$ by $\widehat{\rho}(a)=\Lambda \rho(a) \Lambda^{-1}$, $\forall a \in A$. Define the representation $\gamma:A*B \rightarrow \SU(2)$ such that $\gamma(a)=\widehat{\rho}(a)$, $\forall a \in A$ and $\gamma(b)=\pi(b)$, $\forall b \in B$ (where $\gamma$ is uniquely defined by the universal property of the free product). By Shalen \cite[Proposition 1.3]{shalen1979linear}, $\gamma$ is a P-faithful representation. Moreover, $\pi(y) \in \SU(2,\mathbb{Q}) \leq \U(2,\overline{\mathbb{Q}})$, and for each $x_j \in S_r$, $\widehat{\rho}(x_j)=\Lambda \rho(x_j) \Lambda^{-1}$, and so $\widehat{\rho}(x_j)$ is the product of three matrices in $\U(2,\overline{\mathbb{Q}}) \cup \T(2,\widetilde{\mathbb{C}})$. As $\{y\} \sqcup S_r$ is a generating set for $A*B$, this implies that the image of each such generator under $\gamma$ is expressible as the product of at most three matrices in $\U(2,\overline{\mathbb{Q}}) \cup \T(2,\widetilde{\mathbb{C}})$. Therefore, $\{\gamma\}$ is an unbounded-error $[1,2,\widetilde{\mathbb{C}}]$-DFR for $A*B \cong \mathbb{Z} * \mathbb{Z}^r$.
\end{proof}

\begin{theorem}\label{thm:unboundedDFR:pi3}
	$\forall G \in \widehat{\Pi}_3, \exists k \in \mathbb{N}$ such that $G$ virtually has an unbounded-error $[k,2,\widetilde{\mathbb{C}}]$-DFR.
\end{theorem}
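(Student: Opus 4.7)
I would start by unpacking the class $\widehat{\Pi}_3$. Given $G \in \widehat{\Pi}_3$, there is a finite-index subgroup $G' \leq G$ with $G' \leq H$ for some $H = H_1 \times \cdots \times H_t \in \Pi_3$, where each $H_i \in \Sigma_2$ has the form $H_i = \mathbb{Z}^{r_{i,1}} * \cdots * \mathbb{Z}^{r_{i,s_i}}$. The goal then reduces to constructing an unbounded-error $[1,2,\widetilde{\mathbb{C}}]$-DFR for each $H_i$: once this is done, the product construction of \Cref{thm:dfr:prod} (whose proof transfers verbatim to the unbounded-error setting, since it uses the bound $\lvert \chi \rvert \leq d - \tau(\cdot)$ only through its qualitative consequence $\lvert \chi \rvert < d$) yields an unbounded-error $[t,2,\widetilde{\mathbb{C}}]$-DFR for $H$, and the restriction argument of \Cref{thm:dfr:subOver}(\ref{thm:dfr:subOver:sub}) (which in the unbounded-error setting does not even require the bilipschitz scaling of generating sets) transports it to $G' \leq H$. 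Taking $k := t$, this is exactly an unbounded-error $[k,2,\widetilde{\mathbb{C}}]$-DFR that $G$ virtually possesses.

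To construct the unbounded-error DFR for a single $H_i \in \Sigma_2$, I would generalize \Cref{thm:unboundedDFR:ZfreeZr}, which handled only the case $s_i = 2$ with one factor equal to $\mathbb{Z}$, to arbitrary free products of finitely generated free abelian groups. The required ingredients are already in place: \Cref{thm:dfr:ratCircleFreeAbelian} supplies a P-faithful representation $\rho_{i,j} \colon \mathbb{Z}^{r_{i,j}} \to \SU(2,\mathbb{Q})$ for each factor, and the elementary observation used in the proof of \Cref{thm:unboundedDFR:ZfreeZr} shows that no non-central image under $\rho_{i,j}$ is diagonal, since the only diagonal elements of $\SU(2,\mathbb{Q})$ are $\pm I_2$. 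I would then assemble these into a representation $\gamma_i \colon H_i \to \SU(2)$ by conjugating each $\rho_{i,j}$ by an independent diagonal twist $\Lambda_{i,j} = \text{diag}[\lambda_{i,j},\lambda_{i,j}^2]$ with $\lambda_{i,j} = e^{\pi i \alpha_{i,j}}$ and $\alpha_{i,j} \in (\overline{\mathbb{Q}} \cap \mathbb{R}) \setminus \mathbb{Q}$, and then invoking the universal property of the free product. Because each $\Lambda_{i,j}$ lies in $\T(2,\widetilde{\mathbb{C}})$ and each $\rho_{i,j}(a)$ lies in $\U(2,\overline{\mathbb{Q}})$, the image of every generator of $H_i$ factors as a product of three matrices in $\U(2,\widetilde{\mathbb{C}})$, which is condition~(\ref{def:dfr:amplitudes}) of \Cref{def:dfr}; P-faithfulness of $\gamma_i$ is then exactly the unbounded-error form of condition~(\ref{def:dfr:distinguish}).

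The hard part will be verifying that this multi-twist assembly is in fact P-faithful. In \Cref{thm:unboundedDFR:ZfreeZr} this step is outsourced to Shalen's criterion \cite[Proposition 1.3]{shalen1979linear} for a single $A * B$ amalgamation, whose hypothesis (nonzero off-diagonal entries on every non-central image) is immediate because both factors live inside $\SU(2,\mathbb{Q})$. For more than two factors I would iterate the two-factor criterion, but at each stage it must be re-verified that the already-built partial free product representation still satisfies Shalen's hypothesis -- something that is no longer automatic once twists in $\widetilde{\mathbb{C}} \setminus \overline{\mathbb{Q}}$ have been introduced. I would handle this by choosing the $\alpha_{i,j}$ inductively: Shalen's hypothesis fails only on a proper algebraic subvariety of the relevant twist parameters, and by the Gel'fond--Schneider theorem (already used in the proof of \Cref{thm:unboundedDFR:ZfreeZr}) each candidate $\lambda_{i,j} = e^{\pi i \alpha_{i,j}}$ with $\alpha_{i,j}$ algebraic irrational is itself transcendental, so it cannot satisfy any fixed nontrivial algebraic condition over the previously chosen parameters. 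This provides a valid choice at every stage, after which the direct-product and restriction layers from the first paragraph close out the argument.
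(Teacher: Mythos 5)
Your outer layers---the product construction for $\Pi_3$ and the restriction and finite-index steps---match the paper and are fine; the unbounded-error versions of \Cref{thm:dfr:prod} and \Cref{thm:dfr:subOver} do transfer as you say. The gap is in the step you yourself flag as the hard part: making the multi-twist free product representation P-faithful. Your justification, that each $\lambda_{i,j}=e^{\pi i \alpha_{i,j}}$ is transcendental by Gel'fond--Schneider and ``so cannot satisfy any fixed nontrivial algebraic condition over the previously chosen parameters,'' is a non sequitur. Transcendence over $\mathbb{Q}$ says nothing about transcendence or genericity over $\mathbb{Q}(\lambda_{i,1},\ldots,\lambda_{i,j-1})$, a field that already contains transcendentals: taking $\alpha_{i,2}=2\alpha_{i,1}$ gives $\lambda_{i,2}=\lambda_{i,1}^2$, which satisfies an evident algebraic relation over the previously chosen parameter. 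To iterate Shalen's criterion with independent twists you would need the successive $\lambda_{i,j}$ to be algebraically independent (or at least each transcendental over the field generated by $\overline{\mathbb{Q}}$ and its predecessors), and establishing algebraic independence of several numbers of the form $e^{\pi i \alpha}$ with $\alpha$ real algebraic irrational is genuinely hard transcendence theory (Schanuel-type territory), not something Gel'fond--Schneider provides. Nor can you escape by choosing a generic point off the bad subvariety, because the amplitude constraint forces every twist to lie in the countable set $\{e^{\pi i r}: r\in\overline{\mathbb{Q}}\cap\mathbb{R}\}$.

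The paper sidesteps all of this with one observation you are missing: by the Kurosh subgroup theorem, $\mathbb{Z}^{r_1}*\cdots*\mathbb{Z}^{r_m}$ embeds in the \emph{two-factor} free product $\mathbb{Z}*\mathbb{Z}^{r}$ with $r=\max_j r_j$ (the kernel of the retraction $\mathbb{Z}*\mathbb{Z}^r\to\mathbb{Z}$ killing $\mathbb{Z}^r$ is a free product of a free group with infinitely many conjugates of $\mathbb{Z}^r$, which contains $\mathbb{Z}^{r_1}*\cdots*\mathbb{Z}^{r_m}$). Hence a single application of \Cref{thm:unboundedDFR:ZfreeZr}---one Shalen amalgamation, one transcendental twist---followed by the restriction argument already equips every $H_i\in\Sigma_2$ with an unbounded-error $[1,2,\widetilde{\mathbb{C}}]$-DFR, and the rest of your argument then goes through unchanged. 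To salvage your direct construction you would have to either prove the required algebraic independence of the twists or reduce to the two-factor case as the paper does.
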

\begin{proof}
	Consider a group $H \in \Sigma_2$. Such an $H$ is of the form $H \cong \mathbb{Z}^{r_1} * \cdots * \mathbb{Z}^{r_m}$, for some $r_1,\ldots,r_m \in \mathbb{N}$. Let $r=\max_j r_j$. Then, by a straightforward application of the Kurosh subgroup theorem, $H$ embeds in $\mathbb{Z} * \mathbb{Z}^r$, which implies $H$ has an unbounded-error $[1,2,\widetilde{\mathbb{C}}]$-DFR, by \Cref{thm:unboundedDFR:ZfreeZr}. Next, consider a group $L \in \Pi_3$; such a group is of the form $L \cong H_1 \times \cdots \times H_k$, for some $H_1,\ldots,H_k \in \Sigma_2$. As all such $H_j$ have unbounded-error $[1,2,\widetilde{\mathbb{C}}]$-DFRs, we conclude, by an argument identical to that of \Cref{thm:dfr:prod}, that $L$ has an unbounded-error $[k,2,\widetilde{\mathbb{C}}]$-DFR. Finally, for any $G \in \widehat{\Pi}_3$, $G$ has a finitely-index subgroup $K$ such that $K$ is isomorphic to a finitely generated subgroup of some $L \in \Pi_3$. As just observed, any such $L$ has an unbounded-error $[k,2]$-DFR, for some $k$, and so, by the same argument as in \Cref{thm:dfr:subOver}(\ref{thm:dfr:subOver:sub}), $K$ has an unbounded-error $[k,2,\widetilde{\mathbb{C}}]$-DFR. We then conclude $G$ virtually has an unbounded-error $[k,2,\widetilde{\mathbb{C}}]$-DFR, as desired.
\end{proof}

\section{Recognizing the Word Problem of a Group with a 2QCFA}\label{sec:2QCFAword}

In this section, we use a DFR for a group $G$ to construct a 2QCFA that recognizes the word problem of $G$, as well as for certain other groups related to $G$. 

\begin{definition2}\label{def:goodSetForDFR}
	Consider a group $G=\langle S|R \rangle$, with $S$ finite. As before, let $\Sigma=S \sqcup S^{-1}$, let $\phi:\Sigma^* \rightarrow G$ denote the natural map that takes each string in $\Sigma^*$ to the element of $G$ that it represents, and let $W_G:=W_{G=\langle S|R \rangle}=\{w \in \Sigma^*:\phi(w)=1_G\}$ denote the word problem of $G$ with respect to the given presentation. Suppose $\mathcal{F}=\{\rho_1,\ldots,\rho_k\}$ is a $[k,d,\tau,\mathbb{A}]$-DFR (or PDFR) for $G$. By \Cref{thm:dfr:intersectPKer}, if $w \in W_G$, then $\lvert \chi_{\rho_j}(\phi(w)) \rvert = d$, $\forall j$, and if $w \not \in W_G$, then $\exists j$ where $\lvert \chi_{\rho_j}(\phi(w)) \rvert \leq d - \tau(l(\phi(w)))$. Let $G_j=\{g \in G : \lvert \chi_{\rho_j}(g) \rvert \leq d-\tau(l(g))\}$. A 2QCFA can recognize $W_G$ by checking if $\phi(w) \in \bigcup_j G_j=G_{\neq 1}$.
\end{definition2}

The well-known Hadamard test may be used to estimate $\chi_{\rho_j}(\phi(w))=\Tr(\rho_j(\phi(w)))$; however, as we wish to produce a 2QCFA that has as few quantum states as possible, we wish to avoid the use of ancilla, and so we follow a slightly different approach.

\subsection{Computing with DFRs}\label{sec:2QCFAword:usingDFRs}

We begin by defining several useful 2QCFA subroutines.

\begin{definition2}\label{def:2qcfaRound}
	Suppose $M$ is a 2QCFA with $d \geq 2$ quantum basis states $Q=\{q_1,\ldots,q_d\}$, quantum start state $q_1 \in Q$, and alphabet $\Sigma$. 
	\begin{enumerate}[(a)]
		\item \label{def:2qcfaRound:pairTransition} Suppose  $\ket{\psi_1}=\sum_q \alpha_q \ket{q}$ and $\ket{\psi_2}=\sum_q \beta_q \ket{q}$, where $\alpha_q,\beta_q \in \overline{\mathbb{Q}}, \forall q \in Q$. There are (many) $t \in \U(d,\overline{\mathbb{Q}})$ such that $t \ket{\psi_1} = \ket{\psi_2}$. Let $\mathcal{T}_{\ket{\psi_1} \rightarrow \ket{\psi_2}}$ denote an arbitrary such $t$.
		\item \label{def:2qcfaRound:unitaryRound}  Let $\pi:G \rightarrow \U(d)$ be a representation of $G$ and let $\ket{\psi}=\sum_q \beta_q \ket{q}$, where $\beta_q \in \overline{\mathbb{Q}}$, $\forall q \in Q$. Then the \textit{unitary round} $\mathcal{U}(\pi,\ket{\psi})$ is a particular sub-computation of $M$ on $w$, defined as follows. The round begins with the quantum register in the superposition $\ket{q_1}$ and the tape head at the right end of the tape. On reading $\#_R$, $M$ performs the unitary transformation $\mathcal{T}_{\ket{q_1} \rightarrow \ket{\psi}}$ to its quantum register, and moves its head to the left. On reading a symbol $\sigma \in \Sigma$, $M$ performs the unitary transformation $\pi(\phi(\sigma))$ to the quantum register and moves its head left. When the tape head first reaches the left end of the tape (i.e., the first time the symbol $\#_L$ is read), $M$ performs the identity transformation to its quantum register, and does not move its head, at which point the round ends. As $\phi$ is a (monoid) homomorphism and $\pi$ is a (group) homomorphism, we immediately conclude that, at the end of the round, the quantum register is in the superposition $\pi(\phi(w))\ket{\psi}$.
		\item \label{def:2qcfaRound:measureRound:basic} For $t \in \U(d)$, a \textit{measurement round} $\mathcal{M}(\pi,\ket{\psi},t)$ is a sub-computation of $M$ that begins with the unitary round $\mathcal{U}(\pi,\ket{\psi})$. Then $M$ performs the unitary transformation $t$, and does not move its head. After which $M$ performs the quantum measurement specified by the partition $B=\{B_0,B_1\}$ of $Q$ given by $B_0=\{q_2,\ldots,q_d\}$ and $B_1=\{q_1\}$, producing some \textit{result} $r \in \{0,1\}$; then $M$ records $r$ in its classical state, and does not move its head, at which point the round is over. 
	\end{enumerate} 
\end{definition2}

\begin{lemma}\label{thm:algorithm:measureRep:diag}
	Using the notation of \Cref{def:goodSetForDFR}, let $\ket{1}=\frac{1}{\sqrt{d}}\sum_j \ket{q_j}$. Fix any $F \in \U(d,\overline{\mathbb{Q}})$ such that all entries in the first row of $F$ are equal to $\frac{1}{\sqrt{d}}$. For concreteness, we take $F$ as the usual (unitary) $d \times d$ DFT matrix, i.e., the $(u,v)$ entry of $F$ is given by $F[u,v]=\frac{1}{\sqrt{d}} e^{-\frac{2 \pi i}{d}(u-1)(v-1)}$, $\forall u,v \in \{1,\ldots,d\}$. Then, $\forall w \in \Sigma^*, \forall j \in \{1,\ldots,k\}$, the result $r$ of the measurement round $\mathcal{M}(\rho_j,\ket{1},F)$ (on input $w$) has the following properties.
	\begin{enumerate}[(a)]
		\item \label{thm:algorithm:measureRep:diag:completeness} (Perfect Completeness) If $\phi(w)=1_G$, then $\Pr[r=1]=1$.
		\item \label{thm:algorithm:measureRep:diag:soundness} (Soundness) If $\phi(w) \in G_j$, then $\Pr[r=0] \geq \frac{\tau(n)}{d}-\delta$, where $\delta=\max_v \lvert \sum_{u \neq v} \rho_j(\phi(w))[u,v] \rvert$. If, moreover, $\mathcal{F}$ is a diagonal DFR, then $\Pr[r=0] \geq \frac{\tau(n)}{d}$.
	\end{enumerate}
\end{lemma}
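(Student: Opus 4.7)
The plan is to first trace the quantum state through $\mathcal{M}(\rho_j,\ket{1},F)$ and reduce the acceptance probability to a single matrix element, then bound that matrix element using the defining inequality of a DFR together with the bookkeeping quantity $\delta$.

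By the definition of the unitary round, after $\mathcal{U}(\rho_j,\ket{1})$ the quantum register holds $U\ket{1}$, where $U:=\rho_j(\phi(w))$. The subsequent action of $F$, followed by the $\{q_1\}$ versus $\{q_2,\ldots,q_d\}$ measurement, records outcome $B_1=\{q_1\}$ with probability $\lvert\bra{q_1}FU\ket{1}\rvert^{2}$. The crucial observation is that the top row of the chosen DFT matrix $F$ is exactly the vector $\bra{1}=\tfrac{1}{\sqrt{d}}(1,\ldots,1)$, so $\bra{q_1}F=\bra{1}$ and therefore
\[
\Pr[r=1]=\lvert\bra{1}U\ket{1}\rvert^{2}=\tfrac{1}{d^{2}}\Bigl\lvert\chi_{\rho_{j}}(\phi(w))+\sum_{u\neq v}U[u,v]\Bigr\rvert^{2},
\]
where the second equality comes from expanding $\bra{1}U\ket{1}=\tfrac{1}{d}\sum_{u,v}U[u,v]$ and splitting off the diagonal contribution.

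For part (a), if $\phi(w)=1_G$ then $U=I_d$, the off-diagonal sum vanishes, and the right-hand side equals $1$. For part (b), the defining property of a DFR combined with $l(\phi(w))\le n$ and the fact that $\tau$ is non-increasing yields $\lvert\chi_{\rho_{j}}(\phi(w))\rvert\le d-\tau(n)$, while grouping the off-diagonal sum column by column gives $\lvert\sum_{u\neq v}U[u,v]\rvert\le d\delta$. The triangle inequality then produces $\lvert\bra{1}U\ket{1}\rvert\le 1-\tfrac{\tau(n)}{d}+\delta$, and squaring gives
\[
\Pr[r=0]=1-\lvert\bra{1}U\ket{1}\rvert^{2}\ge\left(\tfrac{\tau(n)}{d}-\delta\right)\!\left(2-\tfrac{\tau(n)}{d}+\delta\right),
\]
which is at least $\tfrac{\tau(n)}{d}-\delta$ (the case $\delta\ge\tau(n)/d$ being vacuous, and otherwise the second factor is $\ge 1$ since $\tau(n)\le d$). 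The diagonal case is the trivial specialization in which every off-diagonal entry of $U$ vanishes, giving $\delta=0$.

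There is no serious obstacle. The key conceptual step is recognizing that the DFT choice of $F$ is precisely what converts the amplitude $\bra{q_1}FU\ket{1}$ into the ``equal-superposition'' matrix element $\bra{1}U\ket{1}$, whose sum of entries ties directly to $\chi_{\rho_j}$. The only mild care needed is in controlling the off-diagonal contribution by $\delta$: the column-wise triangle inequality loses a factor of $d$, which exactly matches the $\tfrac{1}{d}$ normalization in $\bra{1}U\ket{1}$, so the bound emerges with no slack beyond the quadratic loss absorbed in passing from $\lvert\bra{1}U\ket{1}\rvert$ to its square.
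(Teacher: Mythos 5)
Your proof is correct and follows essentially the same route as the paper's: identify $\Pr[r=1]=\lvert\tfrac{1}{d}\sum_{u,v}\rho_j(\phi(w))[u,v]\rvert^2$ via the all-$\tfrac{1}{\sqrt d}$ first row of $F$, split into diagonal (character) and off-diagonal (bounded by $d\delta$) parts, and conclude with the triangle inequality plus $1-(1-x)^2\geq x$ for $x\leq 1$. Your explicit handling of the vacuous case $\delta\geq\tau(n)/d$ is a minor presentational addition; the substance is identical.
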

\begin{proof}
	Notice that, for any $M \in \U(d)$, $FM \ket{1}=\left(\frac{1}{d}\sum_{u,v}M[u,v] \right) \ket{q_1}+\sum_{h> 1} \alpha_h \ket{q_h}$, for some $\alpha_2,\ldots,\alpha_d \in \mathbb{C}$. Therefore, $\Pr[r=1]=  \lvert\frac{1}{d} \sum_{u,v}\rho_j(\phi(w))[u,v] \rvert^2$. If $\phi(w)=1_G$, then $\rho_j(\phi(w))=I_d$, where $I_d$ denotes the $d \times d$ identity matrix; therefore, $\Pr[r=1]=1$, as desired. If $\phi(w) \in G_j$, then $$\Pr[r=0]=1-\left\lvert \frac{1}{d}\sum_{u,v} \rho_j(\phi(w))[u,v]\right\rvert^2 \geq 1-\frac{1}{d^2}\left(\left\lvert\chi_{\rho_j}(\phi(w)) \right\rvert+ \sum_v \bigg\lvert\sum_{u \neq v}  \rho_j(\phi(w))[u,v] \bigg\rvert\right)^2$$
	$$\geq 1-\frac{1}{d^2}(d-\tau(n)+d\delta)^2 \geq 2\left(\frac{\tau(n)}{d}-\delta\right)-\left(\frac{\tau(n)}{d}-\delta \right)^2 \geq \frac{\tau(n)}{d}-\delta,$$ 
	where the last inequality follows from the fact that $\tau(n) \leq d$. If $\mathcal{F}$ is a diagonal DFR, then $\rho_j(\phi(w))$ is a diagonal matrix, which implies $\delta=0$. In this case, if $\phi(w) \in G_j$, then $\Pr[r=0] \geq \frac{\tau(n)}{d}$.
\end{proof}

The preceding lemma allows a 2QCFA to perform the needed measurements of any diagonal DFR. We next consider the case of general DFRs.

\begin{definition2}\label{def:2qcfa:measureRound} Using the notation of Definition~\ref{def:2qcfaRound}, we define the following additional 2QCFA subroutines.
	\begin{enumerate}[(a)]
		\item \label{def:2qcfa:measureRound:reset} A \textit{reset} consists of $A$ moving its head directly to the right end of the tape, without altering its quantum register. That is to say, when reading $\#_L$ or any $\sigma \in \Sigma$, $A$ must perform the identity transformation on its quantum register and move its head one step to the right. When $\#_R$ is encountered for the first time, $A$ must again perform the identity transformation on its quantum register and $A$ must not move its head, after which the reset is complete.
		\item \label{def:2qcfa:measureRound:multi} For $p \in \mathbb{N}_{\geq 1}$, a $[\leq p]$-\text{pass measurement round} of $A$ on input $w$ consists of $A$ performing at most $p$ measurement rounds, where the overall result is the AND of the results of individual measurement rounds, and which stops as soon as any result of $0$ is obtained. Formally, we define a $[\leq p]$-pass measurement round $\mathcal{M}\left[(\pi_1,\ket{\psi_1},t_1),\ldots,(\pi_p,\ket{\psi_p},t_p)\right]$ as follows. Initialize a counter $j=1$ ($A$ keeps track of $j$ using its classical states). $A$ repeatedly does the following: $A$ performs the measurement round $\mathcal{M}(\pi_j,\ket{\psi_j},t_j)$ producing the result $r_j$, if $r_j=0$ or $j=p$, we are done and the result is $r_j$, otherwise (in particular, notice this requires $r_j=1$ and so the quantum register is $\ket{q_1}$) $A$ increments the counter to $j+1$, performs a reset, and continues (and of course does \textit{not} continue to remember $r_j$).
	\end{enumerate} 
\end{definition2}

\begin{lemma}\label{thm:algorithm:measureRep:nonDiag}
	Using the notation of \Cref{def:goodSetForDFR} and \Cref{thm:algorithm:measureRep:diag}, let $P_v \in \U(d,\overline{\mathbb{Q}})$ denote an arbitrary permutation matrix with a $1$ in entry $(1,v)$, $\forall v \in \{1,\ldots,d\}$. The result $r \in \{0,1\}$ of the $[\leq (d+1)]$-pass measurement round $\mathcal{M}\left[(\rho_j,\ket{1},F),(\rho_j,\ket{q_1},P_1),(\rho_j,\ket{q_2},P_2),\ldots,(\rho_j,\ket{q_d},P_d)\right]$ satisfies the following.
	\begin{enumerate}[(a)]
		\item \label{thm:algorithm:measureRep:nonDiag:completeness} (Perfect Completeness) If $\phi(w)=1_G$, then $\Pr[r=1]=1$.
		\item \label{thm:algorithm:measureRep:nonDiag:soundness} (Soundness) If $\phi(w) \in G_j$, then $\Pr[r=0] \geq \frac{(\tau(n))^2}{4d^3}$.
	\end{enumerate}
\end{lemma}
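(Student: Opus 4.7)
The plan is to verify perfect completeness directly, and then handle soundness by a two-case analysis governed by the quantity $\delta=\max_v\lvert\sum_{u\neq v}A[u,v]\rvert$, where $A:=\rho_j(\phi(w))$. The first measurement round (already analyzed in the diagonal lemma) handles the case when $\delta$ is small, and the remaining $d$ rounds handle the case when $\delta$ is large by exploiting unitarity of $A$: large off-diagonal mass in some column $v^*$ forces the diagonal entry $|A[v^*,v^*]|$ to be bounded strictly below $1$, which the round with parameters $(\rho_j,\ket{q_{v^*}},P_{v^*})$ is designed to detect.

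For perfect completeness, suppose $\phi(w)=1_G$, so $A=I_d$. Then each unitary round leaves its initial superposition unchanged. In round $1$, $F\ket{1}=\ket{q_1}$ because the first row of $F$ consists entirely of $1/\sqrt{d}$; in round $v+1$, $P_v\ket{q_v}=\ket{q_1}$ by the defining property of $P_v$. Hence every individual measurement returns $1$ with certainty, and therefore $r=1$ with probability $1$.

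For soundness, assume $\phi(w)\in G_j$. The crucial structural fact is that a measurement result of $1$ collapses the register to $\ket{q_1}$ (the unique element of $B_1$), after which the reset restores the head. Consequently the $d+1$ individual measurements are mutually independent, and
\[
\Pr[r=0]=1-\prod_{j=1}^{d+1}\Pr[r_j=1]\geq \max_{j}\Pr[r_j=0].
\]
The prior lemma already gives $\Pr[r_1=0]\geq \tau(n)/d-\delta$. For round $v+1$, the computation $P_v A\ket{q_v}$ places the entry $A[v,v]$ into coordinate $1$, so $\Pr[r_{v+1}=1]=|A[v,v]|^2$; then unitarity of column $v$ of $A$ yields $\Pr[r_{v+1}=0]=\sum_{u\neq v}|A[u,v]|^2$. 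Choosing $v^*$ to attain $\delta$ and applying Cauchy--Schwarz, $\delta^2\leq (d-1)\sum_{u\neq v^*}|A[u,v^*]|^2$, so $\Pr[r_{v^*+1}=0]\geq \delta^2/(d-1)$.

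Finally, I would split on the value of $\delta$. If $\delta\leq \tau(n)/(2d)$, then $\Pr[r=0]\geq \Pr[r_1=0]\geq \tau(n)/(2d)$; otherwise $\Pr[r=0]\geq \Pr[r_{v^*+1}=0]>\tau(n)^2/(4d^2(d-1))$. Since $\tau(n)\leq d$ and $d\geq 2$, both quantities are bounded below by $\tau(n)^2/(4d^3)$, which is the stated bound. The main obstacle is the Cauchy--Schwarz step that quantitatively converts aggregate off-diagonal cancellation (measured by $\delta$) into a per-column deficit of the diagonal entry $|A[v^*,v^*]|$; everything else is a clean threshold argument combining the two rounds of analysis.
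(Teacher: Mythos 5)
Your proposal is correct and follows essentially the same route as the paper's proof: completeness is immediate from $\rho_j(\phi(w))=I_d$, and soundness splits on whether $\delta\leq\tau(n)/(2d)$ (handled by the first, DFT-based round) or $\delta>\tau(n)/(2d)$ (handled by the permutation round for the column attaining $\delta$, via Cauchy--Schwarz). The only cosmetic differences are your slightly sharper constant $d-1$ in place of $d$ in the Cauchy--Schwarz step and your phrasing of $\Pr[r=0]\geq\max_v\Pr[r_{v}=0]$ through the product formula rather than the paper's decomposition $\Pr[r=0]\geq(1-p_{v'+1})+p_{v'+1}\Pr[r_{v'+1}=0]$; both are equivalent.
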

\begin{proof}
	If $\phi(w)=1_G$, then $\rho_j(\phi(w))=I_d$; this immediately implies all measurements performed have result $1$ with certainty, which then implies $\Pr[r=1]=1$. If $\phi(w) \in G_j$, then, by \Cref{thm:algorithm:measureRep:diag}, the result $r_1$ of the first measurement round satisfies $\Pr[r_1=0] \geq \frac{\tau(n)}{d}-\delta$. If $\delta \leq \frac{\tau(n)}{2d}$, then $\Pr[r_1=0] \geq \frac{\tau(n)}{2d}$; as $\Pr[r=0] \geq \Pr[r_1=0]$, the claim has been proven in this case. 
	
	Suppose instead that $\delta > \frac{\tau(n)}{2d}$. Fix $v'$ such that $\delta=\lvert \sum_{u \neq v'} \rho_j(\phi(w))[u,v'] \rvert$. Notice that $P_{v'} M \ket{q_{v'}}$ is of the form $\sum_h \beta_h \ket{q_h}$ where the $\beta_h$ are a permutation of the entries in column $v'$ of $M$, and $\beta_1=M_{v',v'}$. Let $p_{v+1}$ denote the probability that $B$ performs the $(v+1)^{\text{th}}$ quantum measurement (recall that a multiple pass measurement round will stop as soon as a result of $0$ is obtained) and let $r_{v+1}$ denote the result of that measurement, assuming that it is performed. Then, $$\Pr[r_{v'+1}=0] = \sum_{h > 1} \lvert\beta_h\vert^2=\sum_{u \neq v'} \lvert \rho_j(\phi(w))[u,v'] \rvert^2 \geq \frac{1}{d}\left(\sum_{u \neq v'} \lvert \rho_j(\phi(w))[u,v'] \rvert \right)^2 \geq \frac{1}{d} \delta^2 \geq \frac{(\tau(n))^2}{4d^3}.$$ Therefore,
	$$\Pr[r=0] \geq (1-p_{v'+1})+\Pr[r_{v'+1}=0]p_{v'+1}\geq \Pr[r_{v'+1}=0] \geq \frac{(\tau(n))^2}{4d^3}.$$ \qedhere	
\end{proof}

In the unbounded-error case, we have the following.

\begin{lemma}\label{thm:algorithm:unboundedMeasure}
	Suppose $\mathcal{F}=\{\rho_1,\ldots,\rho_k\}$ is an unbounded-error $[k,d,\mathbb{A}]$-DFR (or PDFR). The result $r$ of the $[\leq (d+1)]$-pass measurement round $\mathcal{M}\left[(\rho_j,\ket{1},F),(\rho_j,\ket{q_1},P_1),\ldots,(\rho_j,\ket{q_d},P_d)\right]$ satisfies the following.
	\begin{enumerate}[(a)]
		\item \label{thm:algorithm:unboundedMeasure:completeness} (Perfect Completeness) If $\phi(w)=1_G$, then $\Pr[r=1]=1$.
		\item \label{thm:algorithm:unboundedMeasure:soundness} (Soundness) If $\phi(w) \in G_j$, then $\Pr[r=0] >0$.
	\end{enumerate}
\end{lemma}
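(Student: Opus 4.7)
The plan is to mimic the structure of the proof of \Cref{thm:algorithm:measureRep:nonDiag}, but since we only need to establish positivity rather than a quantitative lower bound, the argument collapses to a short contradiction. Write $M$ for the unitary $\rho_j(\phi(w))$ (in the PDFR case, for an arbitrary lift $\widehat{\rho}_j(\phi(w))$; a global phase will not affect any of the probabilities we compute, so this choice is harmless). For completeness, if $\phi(w) = 1_G$ then $M = I_d$ in the DFR case, and $M \in Z(\U(d))$ in the PDFR case, so $M$ acts as a scalar on every starting superposition; in particular each individual measurement round inside the multipass routine returns $1$ with certainty, and hence $\Pr[r = 1] = 1$.

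For soundness, assume $\phi(w) \in G_j$, which in the unbounded-error setting means $\lvert \chi_{\rho_j}(\phi(w)) \rvert < d$, and argue by contradiction: suppose $\Pr[r = 0] = 0$, i.e., $\Pr[r = 1] = 1$. Because the first measurement round with result $1$ collapses the quantum register to $\ket{q_1}$, conditioning on $r_1 = 1$ and then resetting the head leaves the machine in exactly the initial configuration of a fresh measurement round, so the events $\{r_{v+1} = 1\}$ for $v = 1, \ldots, d$ are independent once conditioned on $r_1 = 1$. A direct calculation (as in the proof of \Cref{thm:algorithm:measureRep:diag} and \Cref{thm:algorithm:measureRep:nonDiag}) gives
\[
\Pr[r_1 = 1] \;=\; \Bigl\lvert \tfrac{1}{d}\sum_{u,v} M[u,v] \Bigr\rvert^2, \qquad \Pr[r_{v+1} = 1 \mid r_1 = 1] \;=\; \lvert M[v,v] \rvert^2,
\]
using that $P_v M \ket{q_v}$ has amplitude $M[v,v]$ on $\ket{q_1}$.

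Hence $\Pr[r = 1] = 1$ forces $\lvert M[v,v] \rvert = 1$ for every $v \in \{1,\ldots,d\}$. Since $M$ is unitary and each of its columns is a unit vector, unit-modulus diagonal entries force all off-diagonal entries of $M$ to vanish, so $M$ is diagonal. Then $\lvert \sum_{u,v} M[u,v] \rvert = \lvert \sum_u M[u,u] \rvert = \lvert \chi_{\rho_j}(\phi(w)) \rvert$, and the additional requirement $\Pr[r_1 = 1] = 1$ forces $\lvert \chi_{\rho_j}(\phi(w)) \rvert = d$, contradicting $\phi(w) \in G_j$. Therefore $\Pr[r = 0] > 0$, as claimed. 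The only place where any subtlety arises is the PDFR case, and that is handled by the initial observation that all of the quantities above depend on $M$ only through $\lvert \sum_{u,v} M[u,v]\rvert$ and $\lvert M[v,v]\rvert$, both of which are invariant under multiplying $M$ by a global phase; so the argument applies verbatim with $M$ interpreted as any unitary lift of the projective image.
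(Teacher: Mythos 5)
Your proof is correct and takes essentially the same route as the paper, whose entire proof of this lemma is the remark that it is ``precisely analogous'' to the proof of \Cref{thm:algorithm:measureRep:nonDiag}; your contradiction argument (unit-modulus diagonal entries force $M$ diagonal, which forces $\lvert\chi_{\rho_j}(\phi(w))\rvert=d$) is just the natural qualitative specialization of that quantitative argument, and your handling of the projective case via phase-invariance of all the relevant probabilities matches the paper's setup.
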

\begin{proof}
	Precisely analogous to the proof of \Cref{thm:algorithm:measureRep:nonDiag}.
\end{proof}

Finally, we consider unbounded-error MO-1QFA.

\begin{lemma}\label{thm:algorithm:oneWayUnboundedMeasure}
	Suppose $\mathcal{F}=\{\rho_1,\ldots,\rho_k\}$ is an unbounded-error $[k,d,\mathbb{A}]$-DFR (or PDFR). There is a MO-1QFA $B$ with $2 (kd)^2$ basis states such that, if $\phi(w)=1_G$, then $\Pr[B \text{ accepts } w]=1$, and if $\phi(w)\neq 1_G$, then $\Pr[B \text{ rejects } w]=>0$.
\end{lemma}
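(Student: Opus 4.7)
The plan is to recast the argument of \Cref{thm:algorithm:unboundedMeasure} as a single left-to-right unitary pass, using the standard ``trace as inner product'' trick via a maximally entangled state. Let $\widehat{\rho} = \widehat{\rho}_1 \oplus \cdots \oplus \widehat{\rho}_k : G \to \U(kd)$, where each $\widehat{\rho}_j$ is $\rho_j$ itself in the DFR case, and is a fixed lift of the projective representation $\rho_j$ normalized so that $\widehat{\rho}_j(1_G) = I_d$ in the PDFR case. Work in $\mathbb{C}^{kd} \otimes \mathbb{C}^{kd}$ and let $\ket{\Omega} = (kd)^{-1/2} \sum_{i=1}^{kd} \ket{i}\ket{i}$ be the maximally entangled state. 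Construct $B$ with a distinguished initial/accepting basis state $\ket{q_0}$, a left end-marker transition $U_{\#_L}$ sending $\ket{q_0} \mapsto \ket{\Omega}$, per-symbol transitions $U_\sigma := \widehat{\rho}(\phi(\sigma)) \otimes I_{kd}$, and right end-marker transition $U_{\#_R} := U_{\#_L}^{-1}$. Acceptance is declared when the final computational-basis measurement returns $\ket{q_0}$.

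For the analysis, after reading $w = w_1 \cdots w_n$ (flanked by the end markers) the unitary acting on the first tensor factor is $U_w := \widehat{\rho}(\phi(w_n)) \cdots \widehat{\rho}(\phi(w_1)) = c_w\,\widehat{\rho}(\phi(w))$ for some phase $c_w \in \U(1)$, because $\widehat{\rho}$ differs from a homomorphism only by a central element (and trivially $c_w = 1$ in the DFR case). Using the elementary identity $\bra{\Omega} (A \otimes I_{kd}) \ket{\Omega} = (kd)^{-1}\Tr(A)$ for any $A \in \U(kd)$, together with $|c_w|^2 = 1$, one obtains
\[
\Pr[B \text{ accepts } w] \;=\; \bigl|\bra{\Omega} (U_w \otimes I_{kd}) \ket{\Omega}\bigr|^2 \;=\; \frac{|\chi_{\widehat{\rho}}(\phi(w))|^2}{(kd)^2}.
\]

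If $\phi(w) = 1_G$, then $\rho(\phi(w))$ is the identity in $\PU(kd)$, so $U_w$ is a scalar multiple of $I_{kd}$, whence $|\chi_{\widehat{\rho}}(\phi(w))|^2 = (kd)^2$ and acceptance is certain. If $\phi(w) \neq 1_G$, the unbounded-error DFR (resp.\ PDFR) property supplies some $j^*$ with $|\chi_{\rho_{j^*}}(\phi(w))| < d$, and since $|\chi_{\rho_j}(\phi(w))| \leq d$ for every $j$ (each $\rho_j(g)$ being a unitary on $\mathbb{C}^d$), the triangle inequality gives $|\chi_{\widehat{\rho}}(\phi(w))| \leq \sum_j |\chi_{\rho_j}(\phi(w))| < kd$, so the accept probability is strictly less than $1$ and rejection happens with strictly positive probability.

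The main technical subtlety is the PDFR case, where $\widehat{\rho}$ is only a lift and hence not itself a homomorphism. This is handled by noting that the accumulated discrepancy $c_w$ has unit modulus and therefore disappears on passing to the absolute value of the character, while the unital choice of lift ensures $U_w$ is genuinely central (not just central up to an extra phase) precisely when $\phi(w)=1_G$, preserving perfect completeness. The tensor register accounts for $(kd)^2$ basis states; the extra factor of $2$ in the stated $2(kd)^2$ bound absorbs the ancillary basis states needed to realize the preparation unitaries $U_{\#_L}$ and $U_{\#_R}$ from a fixed computational basis start state and to maintain a clean partition into accepting and rejecting basis states in the MO-1QFA formalism. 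Once this setup is laid out, the verification is essentially a one-pass unitary rewrite of the proof of \Cref{thm:algorithm:unboundedMeasure}, with the measurement round replaced by the single final projective measurement.
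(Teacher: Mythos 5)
Your overall strategy --- fold the family into a single $kd$-dimensional (projective) representation and read off $\lvert\Tr\rvert$ with a maximally-entangled-state overlap test --- is the same route the paper sketches (the paper phrases it as the Hadamard test, with the control qubit accounting for the factor of $2$ in $2(kd)^2$; your control-free overlap test is an acceptable variant). However, there is a concrete false step at the heart of your analysis: the identity $U_w := \widehat{\rho}(\phi(w_n)) \cdots \widehat{\rho}(\phi(w_1)) = c_w\,\widehat{\rho}(\phi(w))$. Since a one-way machine applies the transition for $w_1$ first, the accumulated operator is (up to phase) $\widehat{\rho}\bigl(\phi(w_n)\cdots\phi(w_1)\bigr) = \widehat{\rho}(\phi(w^R))$, the image of the \emph{reversed} word, and for non-abelian $G$ one has $\phi(w)=1_G \not\Leftrightarrow \phi(w^R)=1_G$. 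For example, in $G=\mathbb{Z}_7\rtimes\mathbb{Z}_3=\langle a,b\mid a^7,b^3,b^{-1}aba^{-2}\rangle$ (a finite group, hence in the class covered by the lemma), the word $w=b^{-1}aba^{-1}a^{-1}$ satisfies $\phi(w)=1_G$ but $\phi(w^R)=a^{-2}bab^{-1}=a^2\neq 1_G$, so your machine would accept a word of $W_G$ with probability $\lvert\Tr\widehat{\rho}(a^2)\rvert^2/(kd)^2<1$, destroying perfect completeness (and the symmetric failure breaks soundness). This is exactly why the paper's 2QCFA ``unitary round'' (Definition~\ref{def:2qcfaRound}) scans the tape \emph{right to left}: that ordering makes the matrix product come out as $\pi(\phi(w_1))\cdots\pi(\phi(w_n))=\pi(\phi(w))$. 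A one-way automaton cannot do this, so you must compensate --- e.g.\ use per-symbol transitions $\widehat{\rho}(\phi(\sigma))^{-1}\otimes I$, which accumulate to $\widehat{\rho}(\phi(w))^{-1}$ and leave $\lvert\Tr\rvert$ unchanged, or act on the second tensor factor by transposes via $(A\otimes I)\ket{\Omega}=(I\otimes A^{T})\ket{\Omega}$. The fix is routine, but as written the construction is wrong, and the error is in a displayed equation rather than an omitted detail.

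A secondary issue, which your write-up shares with the paper's one-line sketch, concerns the PDFR case: the blockwise direct sum of \emph{lifts} of projective representations is generally only a projective representation up to a \emph{different} phase on each block, so when $\phi(w)=1_G$ one gets $U_w=\bigoplus_j c_{w,j} I_d$ with possibly unequal phases $c_{w,j}$, whence $\lvert\Tr U_w\rvert = d\,\lvert\sum_j c_{w,j}\rvert$ can be strictly less than $kd$ and perfect completeness can fail. Your appeal to a ``unital choice of lift'' does not resolve this, since unitality only normalizes $\widehat{\rho}_j(1_G)$, not the cocycle along the word. The safe repair is to run $k$ separate $d$-dimensional tests (as the paper's two-way subroutines do, and as the $2(kd)^2$ state budget easily permits), or to argue that each block's phase cancels because one measures $\lvert\cdot\rvert^2$ blockwise rather than on the summed trace.
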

\begin{proof}
	By (the unbounded-error analogue of) \Cref{thm:dfr:conv}(\ref{thm:dfr:conv:combine}), we have an unbounded-error $[1,kd,\mathbb{A}]$-DFR (or PDFR) $\{\pi\}$. By a straightforward application of the well-known Hadamard test, we may determine if $\lvert \chi_{\pi}(w) \rvert < kd$. We omit the details.
\end{proof}

\subsection{Constructions of 2QCFA for Word Problems}\label{sec:2QCFAword:algorithm}

Now, by combining the results of the previous section, the constructions of DFRs from Section~\ref{sec:dfr:constructions}, and standard techniques from computational group theory, we show that 2QCFA can recognize the word problems of a wide class of groups. 

\begin{lemma}\label{thm:algorithm:polyDirectWordAlg}
	Consider a group $G=\langle S|R \rangle$, with $S$ finite, and let $W_G=W_{G =\langle S|R \rangle}$. Suppose $G$ has a diagonal $[k,d,C_1 n^{-C_2},\mathbb{A}]$-DFR (or PDFR), for some $C_1,C_2 \in \mathbb{R}_{>0}$. Then $\forall \epsilon \in \mathbb{R}_{>0}$, we have $W_G \in \mathsf{coR2QCFA}(n^{\lceil C_2 \rceil+2},\epsilon,d,\overline{\mathbb{Q}} \cup \mathbb{A})$.
\end{lemma}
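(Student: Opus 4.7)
The plan is to feed the diagonal DFR (or PDFR) $\mathcal{F}=\{\rho_1,\ldots,\rho_k\}$ into the measurement rounds of \Cref{thm:algorithm:measureRep:diag} and balance their polynomially small rejection probability against an equally small acceptance coin produced by the Ambainis--Watrous biased-coin construction from \cite{ambainis2002two}. Fix $\epsilon\in\mathbb{R}_{>0}$ and let $F\in\U(d,\overline{\mathbb{Q}})$ be the DFT matrix of \Cref{thm:algorithm:measureRep:diag}. I will construct a 2QCFA $M$ with $d$ quantum basis states that repeatedly executes the following loop on input $w$ of length $n$. First, $M$ runs the bias-$\Theta(1/n)$ coin subroutine of \cite{ambainis2002two} a total of $\lceil C_2\rceil$ times in series and ANDs the outcomes, obtaining a classical bit $b$ with $\Pr[b=1]=p_A:=c\,n^{-\lceil C_2\rceil}$, where $c=c(\epsilon,C_1,d,k)>0$ can be tuned arbitrarily small by conditioning on an additional fixed-bias coin; if $b=1$, $M$ accepts. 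Otherwise $M$ advances a classical counter $j$ (cycling deterministically through $\{1,\ldots,k\}$) and performs the measurement round $\mathcal{M}(\rho_j,\ket{1},F)$; if the outcome is $0$ it rejects, and if the outcome is $1$ the register collapses back to $\ket{q_1}$ and the loop restarts.

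For correctness, suppose first that $\phi(w)=1_G$. By \Cref{thm:algorithm:measureRep:diag}(a) every measurement round returns $1$ with probability $1$, so $M$ never rejects and halts in a given iteration precisely when $b=1$; the expected number of iterations is $1/p_A=O(n^{\lceil C_2\rceil})$, and since each iteration takes $O(n^2)$ expected time (dominated by the $\lceil C_2\rceil=O(1)$ series copies of the Ambainis--Watrous subroutine, each of length $O(n^2)$), the expected running time is $O(n^{\lceil C_2\rceil+2})$. Suppose instead $\phi(w)\neq 1_G$. By \Cref{thm:dfr:intersectPKer} there is some $j^{*}\in\{1,\ldots,k\}$ with $\lvert\chi_{\rho_{j^{*}}}(\phi(w))\rvert\leq d-C_1 n^{-C_2}$, so by \Cref{thm:algorithm:measureRep:diag}(b) each block of $k$ consecutive iterations rejects with probability at least $p_R:=C_1 n^{-C_2}/d$, whereas the same block accepts with probability at most $k p_A$. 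Since $\lceil C_2\rceil\geq C_2$ we have $p_A\leq c n^{-C_2}$, so choosing $c$ small enough (in terms of $\epsilon,C_1,k,d$) makes the overall acceptance probability $k p_A/(k p_A+p_R)$ at most $\epsilon$.

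Regarding quantum resources: the DFT matrix $F$ and the Ambainis--Watrous subroutine use only amplitudes in $\overline{\mathbb{Q}}$, the $\rho_j$-transitions lie in $\mathbb{A}$, and the Ambainis--Watrous coin uses only two quantum basis states, which embed inside the $d$ states required for the DFR measurements, so $M$ indeed operates with $d$ quantum states and amplitudes in $\overline{\mathbb{Q}}\cup\mathbb{A}$. The PDFR case is handled identically, because \Cref{thm:algorithm:measureRep:diag} depends on $\rho_j$ only through $\lvert\chi_{\rho_j}(\cdot)\rvert$, which is well-defined on projective representations. The main obstacle is not conceptual but one of bookkeeping: each loop iteration must leave the quantum register in the known state $\ket{q_1}$ so that perfect completeness is preserved across iterations. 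A measurement round returning $1$ collapses the register to $\ket{q_1}$ by definition of $\mathcal{M}(\rho_j,\ket{1},F)$, and the Ambainis--Watrous subroutine can be arranged to end with a computational-basis measurement followed by a reset transition $\mathcal{T}_{\ket{q_i}\rightarrow\ket{q_1}}$ (the $b=1$ branch accepts immediately, so its final register state is irrelevant), which closes the argument.
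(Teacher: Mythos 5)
Your proposal is correct and follows essentially the same route as the paper's proof: cycle through the measurement rounds of \Cref{thm:algorithm:measureRep:diag}, reject on any outcome $0$, and balance this against an Ambainis--Watrous coin of bias $\Theta(n^{-\lceil C_2\rceil})$ via the standard ratio argument, with the register reset to $\ket{q_1}$ between rounds. The only deviations are cosmetic (you run the accept-coin once per measurement round rather than once per block of $k$, which forces the extra factor $k$ and a $(1-kp_A)$ correction in the per-block reject probability that you absorb into the choice of $c$, and you omit the one-line running-time check in the $\phi(w)\neq 1_G$ case), neither of which affects correctness.
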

\begin{proof}
	Define the subsets $G_j \subseteq G_{\neq 1}$ as in \Cref{def:goodSetForDFR}, and observe that $G_{\neq 1}=\cup_j G_j$. The 2QCFA $A$ will recognize $W_G$ by running the subroutine of Lemma~\ref{thm:algorithm:measureRep:diag}, for each $j$. If $\phi(w) \neq 1_G$, then, for at least some $j$, this subroutine will, with sufficient probability, produce a result that allows one to conclude with certainty, that $\phi(w) \neq 1_G$, at which point $A$ will immediately reject. To assure that $w$ for which $\phi(w)=1_G$ are accepted, $A$ will periodically run a subroutine that accepts with some small probability and continues otherwise, using the technique from Ambainis and Watrous \cite{ambainis2002two}. In particular, for $m,y \in \mathbb{N}$, let $\mathcal{R}(m,y)$ denote the subroutine that, on an input of length $n \in \mathbb{N}$ produces a result $b \in \{0,1\}$, where $\Pr[b=1]=(n+1)^{-m} 2^{-y}$, within expected running time $O(n^2)$ (see \cite{ambainis2002two} for details; in brief, if the 2QCFA starts with its head over the first symbol to the right of $\#_L$ and performs an unbiased one-dimensional random walk along the tape until either of the end-markers are encountered, then the probability that $\#_R$ is the first end-marker encountered is $(n+1)^{-1}$; by repeating this procedure $m$ times, and generating unbiased random bits $y$ times, the desired $b$ can be produced).
	
	We now fill in the details. $A$ has the quantum basis states $\ket{q_1},\ldots,\ket{q_d}$, where $q_1$ is the quantum start state. $A$ performs the following procedure.
	\vspace{5mm}
	\newline Use the classical states to store a counter $j \in \{1,\ldots,k\}$, initialized to $1$
	\newline Repeat indefinitely:
	\newline \hspace*{8mm} Move the head to the right end of the tape, leaving the quantum register unchanged
	\newline \hspace*{8mm} Run the subroutine of Lemma~\ref{thm:algorithm:measureRep:diag} with $\rho_j$ producing the result $r$
	\newline \hspace*{8mm} If $r=0$ then \underline{reject}
	\newline \hspace*{8mm} Add $1$ to $j$, where the addition is performed modulo $k$ 
	\newline \hspace*{8mm} If $j=k$ then
	\newline \hspace*{16mm} Run the subroutine $\mathcal{R}(\lceil C_2 \rceil,\left\lceil \log(\frac{\epsilon C_1}{d})\right\rceil) $, giving the result $b$
	\newline \hspace*{16mm} If $b=1$ then \underline{accept}
	\vspace{5mm}

	We now show that $A$ has the claimed parameters.  Clearly, $A$ has $d$ basis states and the transition amplitudes of $A$ belong to $\overline{\mathbb{Q}} \cup \mathbb{A}$. To see the remaining claims, fix a string $w$ and let $n$ denote its (string) length. Consider a subcomputation of the above computation of $A$ that begins when the counter $j=1$ and $A$ is at the beginning of the ``Repeat indefinitely" loop, and ends as soon as $A$ accepts or rejects, or after $k$ complete iterations of the ``Repeat indefinitely" loop. Let $p_{acc}$ and $p_{rej}$ denote, respectively, the probability that such a subcomputation ends with $A$ accepting or rejecting. Let $E_j$ denote the event that such a subcomputation actually runs the subroutine of Lemma~\ref{thm:algorithm:measureRep:diag} with $\rho_{j}$ (note that the only way this does not happen is if $A$ has already rejected for some $\widetilde{j}<j$), let $p_j$ denote the probability that $E_j$ occurs, and let $r_j$ denote the result produced by this subroutine, if $E_j$ occurs. Notice that $$\Pr[b=1|E_k]=2^{-\left\lceil \log(\frac{\epsilon C_1}{d})\right\rceil} (n+1)^{-\lceil C_2 \rceil} >0.$$
	
	First, suppose $w \not \in W_G$. There is at least one $j'$ such that $\phi(w) \in Y_{j'}$. Therefore, when the counter $j=j'$, \Cref{thm:algorithm:measureRep:diag}(\ref{thm:algorithm:measureRep:diag:soundness}) guarantees that $\Pr[r_{j'}=0|E_{j'}] \geq \frac{C_1}{d} n^{-C_2}$. Notice that the event that $A$ rejects in such a subcomputation is the (disjoint) union of the event $A$ rejects before step $j'$ (i.e., $E_{j'}$ does not occur) and the event $A$ rejects at step $j'$ or later. Therefore, $$p_{rej}=(1-p_{j'})1+\sum_{j \geq j'} p_j \Pr[r_j=0|E_j]\geq (1-p_{j'})+ p_{j'} \Pr[r_{j'}=0|E_{j'}]\geq \Pr[r_{j'}=0|E_{j'}] \geq \frac{C_1}{d} n^{-C_2}.$$ We also have $$p_{acc}=p_k \Pr[b=1|E_k] <\Pr[b=1|E_k]=2^{-\left\lceil \log(\frac{\epsilon C_1}{d})\right\rceil} (n+1)^{-\lceil C_2 \rceil} \leq \epsilon \frac{C_1}{d} (n+1)^{-\lceil C_2 \rceil} \leq \epsilon p_{rej}.$$ As we repeat such subcomputations until $A$ either accepts or rejects, we have $$\Pr[A \text{ rejects } w|w \not \in W_G]=\frac{p_{rej}}{p_{acc}+p_{rej}} \geq \frac{p_{rej}}{\epsilon p_{rej}+p_{rej}}=\frac{1}{1+\epsilon} \geq 1-\epsilon.$$
	
	Next, instead suppose $w \in W_G$. Then \Cref{thm:algorithm:measureRep:diag}(\ref{thm:algorithm:measureRep:diag:completeness}) guarantees that every use of the subroutine of Lemma~\ref{thm:algorithm:measureRep:diag} will produce $r=1$. This implies $p_{rej}=0$, $p_k=1$, and $$p_{acc}=p_k \Pr[b=1|E_k]=2^{-\left\lceil \log(\frac{\epsilon C_1}{d})\right\rceil} (n+1)^{-\lceil C_2 \rceil} \geq \epsilon \frac{C_1}{2d} (n+1)^{-\lceil C_2 \rceil}>0.$$ As we repeat such subcomputations until $A$ either accepts or rejects, we have $$\Pr[A \text{ accepts } w|w \in W_G]=\frac{p_{acc}}{p_{acc}+p_{rej}}=1.$$ 
	
	This completes the proof of the claim that $A$ recognizes $W_G$ with one-sided error $\epsilon$. Lastly, to see that $A$ has the claimed expected running time, let $p_{halt}$ denote the probability that any given subcomputation of the above form ends with $A$ halting (i.e., accepting or rejecting). When $w \in W_G$, $$p_{halt}=p_{acc}+p_{rej}\geq\epsilon \frac{C_1}{2d} (n+1)^{-\lceil C_2 \rceil}.$$ When $w \not \in W_G$, $$p_{halt}=p_{acc}+p_{rej} \geq p_{rej} \geq \frac{C_1}{2d} n^{-C_2} \geq \epsilon \frac{C_1}{2d} (n+1)^{-\lceil C_2 \rceil}.$$ Therefore the expected number of executions of such subcomputations is $O(n^{\lceil C_2 \rceil})$. Each subcomputation of the above form consists of at most $k$ passes through the ``Repeat indefinitely" loop. Each pass involves a single use of the subroutine of Lemma~\ref{thm:algorithm:measureRep:diag}, which runs in time $O(n)$; additionally, the pass in which the counter $j=k$ also involves a single use of the subroutine $\mathcal{R}$, which runs in time $O(n^2)$. Therefore, $A$ runs in expected time $O(n^{\lceil C_2 \rceil+2})$, as desired. 		
\end{proof}

\begin{lemma}\label{thm:algorithm:genDirectWordAlg}
	Consider a group $G=\langle S|R \rangle$, with $S$ finite, and let $W_G=W_{G =\langle S|R \rangle}$. If $G$ has a $[k,d,C_1^{-n},\mathbb{A}]$-DFR (or PDFR), for some $C_1 \in \mathbb{R}_{\geq 1}$, then $\forall \epsilon \in \mathbb{R}_{>0}$, $\exists C_2 \in \mathbb{R}_{\geq 1}$ such that $W_G \in \mathsf{coR2QCFA}(C_2^n,\epsilon,d,\overline{\mathbb{Q}} \cup \mathbb{A})$.
\end{lemma}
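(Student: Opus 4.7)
The plan is to adapt the proof of \Cref{thm:algorithm:polyDirectWordAlg} essentially verbatim, making two substitutions. First, since we no longer assume the DFR is diagonal, I would replace the single-pass measurement subroutine of \Cref{thm:algorithm:measureRep:diag} with the $[\leq (d+1)]$-pass measurement round from \Cref{thm:algorithm:measureRep:nonDiag}; when $\phi(w) \neq 1_G$ and one chooses the appropriate representation $\rho_j$, this yields a ``reject'' result with probability at least $(C_1^{-n})^2/(4d^3) = (4d^3)^{-1} C_1^{-2n}$, while maintaining perfect completeness. Second, since the rejection probability is now exponentially small in $n$, I would replace the Ambainis--Watrous polynomial-bias coin $\mathcal{R}(m,y)$ with a subroutine $\mathcal{R}'$ that produces a Boolean value equal to $1$ with probability at most $\epsilon(4d^3)^{-1}C_1^{-2n}$, in time $O(n)$.

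The 2QCFA $A$ itself would then be structured exactly as in \Cref{thm:algorithm:polyDirectWordAlg}: it maintains a counter $j \in \{1,\ldots,k\}$ in its classical states, repeatedly resets the head to the right end of the tape, runs the $[\leq(d+1)]$-pass measurement round for $\rho_j$ (rejecting on any result of $0$), increments $j$ modulo $k$, and each time $j$ wraps around, runs $\mathcal{R}'$ and accepts if the result is $1$. The quantum register uses only $d$ basis states, and the transition amplitudes lie in $\overline{\mathbb{Q}} \cup \mathbb{A}$ since both the matrices $F, P_v$ used in the measurement rounds and the unitaries used by $\mathcal{R}'$ (Hadamard-type rotations on a $2$-dimensional subspace of $\mathbb{C}^d$) have entries in $\overline{\mathbb{Q}}$.

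To implement $\mathcal{R}'$ using only the $d \geq 2$ quantum basis states already available, I would sweep the head from $\#_L$ to $\#_R$ a constant number $t$ of times, where $t = \lceil 2\log_2 C_1 + \log_2(4d^3/\epsilon)\rceil$, maintaining a ``still alive'' flag in the classical state. At each tape cell visited, apply the unitary sending $\ket{q_1} \mapsto \tfrac{1}{\sqrt{2}}(\ket{q_1}+\ket{q_2})$ to the register, measure in the partition $\{\{q_1\},\{q_2,\ldots,q_d\}\}$, clear the flag if the outcome is not $q_1$, and restore the register to $\ket{q_1}$ before moving on. The flag remains set only if every one of the $\Theta(tn) = \Theta(n)$ bits came up $q_1$, which happens with probability $2^{-\Theta(n)}$; by the choice of $t$ this is at most $\epsilon(4d^3)^{-1}C_1^{-2n}$, and the whole subroutine runs in $O(n)$ deterministic time.

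The error and runtime analysis then follows the template of \Cref{thm:algorithm:polyDirectWordAlg} with only constants changed. Per cycle of the outer loop we have $p_{rej} \geq (4d^3)^{-1}C_1^{-2n}$ when $w \notin W_G$ and $p_{rej} = 0$, $p_{acc} \geq \tfrac{1}{2}\epsilon(4d^3)^{-1} C_1^{-2n}$ when $w \in W_G$, with $p_{acc} \leq \epsilon p_{rej}$ always; the ratio argument gives one-sided error $\epsilon$. Each cycle costs $O(kn) = O(n)$ time (each measurement round is $O(n)$; $\mathcal{R}'$ is $O(n)$), and the expected number of cycles is $O(C_1^{2n})$, yielding total expected time $C_2^n$ for any $C_2 > C_1^2$. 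The only nonroutine step is the construction of $\mathcal{R}'$ within the ancilla-free $d$-state constraint, and even there the Hadamard-on-a-$2$-subspace trick above suffices.
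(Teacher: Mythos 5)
Your proposal is correct and follows essentially the same route as the paper: swap in the $[\leq(d+1)]$-pass measurement round of \Cref{thm:algorithm:measureRep:nonDiag} in place of the diagonal subroutine, and replace the polynomial-bias accept-coin $\mathcal{R}$ with an $O(n)$-time coin of exponentially small bias built from per-cell measurements on a two-dimensional subspace during tape sweeps (the paper's $\mathcal{R}'(p,y)$ uses a single sweep with a bias-$p$ coin at each cell plus $y$ extra fair bits rather than your $t$ fair-coin sweeps, but this difference is cosmetic). The only slip is quantitative: with your choice of $t$ the accept probability per cycle is roughly $(\epsilon/(4d^3))^{n} C_1^{-2n}$, so the expected number of outer cycles is $O((4d^3C_1^2/\epsilon)^n)$ rather than $O(C_1^{2n})$ and one needs $C_2 > 4d^3C_1^2/\epsilon$ rather than $C_2 > C_1^2$ --- which still suffices, since the lemma only asserts the existence of some $C_2$.
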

\begin{proof}
	We proceed almost exactly as in the proof of \Cref{thm:algorithm:polyDirectWordAlg}, with the only modification arising from the fact that the substantially weaker bound on the parameter $\tau$ of the DFR has a corresponding decrease in the probability that the subroutine of Lemma~\ref{thm:algorithm:measureRep:nonDiag} can distinguish $w$ with $\lvert \chi_{\rho_j}(\phi(w))\rvert=d$ from $w$ with $\lvert \chi_{\rho_j}(\phi(w))\rvert \neq d$. As before, $A$ will periodically run a subroutine that accepts with some small probability, though the above issue requires that this is done with a substantially smaller probability than in the proof of \Cref{thm:algorithm:polyDirectWordAlg}. 
	
	$A$ has the quantum basis states $\ket{q_1},\ldots,\ket{q_d}$, where $q_1$ is the quantum start state. For $p \in \overline{\mathbb{Q}} \cap [0,1]$, let $\mathcal{B}(p)$ denote the subroutine that produces a biased random Boolean value $x$, such that $\Pr[x=1]=p$, which operates as follows. We start with the quantum register in the superposition $\ket{q_1}$. Let $\ket{\psi}=\sqrt{p} \ket{q_1}+\sqrt{1-p} \ket{q_2}$. We then perform the unitary transformation $T_{\ket{q_1} \rightarrow \ket{\psi}}$, followed by the quantum measurement with respect to the partition $B_0=\{2\ldots,d\},B_1=\{1\}$. The result $1$ occurs with probability $p$. If the result is $0$, we then perform the unitary transformation $T_{\ket{q_2} \rightarrow \ket{q_1}}$ to return the quantum register to the superposition $\ket{q_1}$. The head of the 2QCFA does not move during this subroutine.
	
	For $p \in \overline{\mathbb{Q}} \cap [0,1]$, $y \in \mathbb{N}$, let $\mathcal{R}'(p,y)$ denote the subroutine that, on an input of length $n \in \mathbb{N}$ produces a result $b \in \{0,1\}$, where $\Pr[b=1]=p^n 2^{-y}$, and has running time $O(n)$. $\mathcal{R}'(p,y)$ operates by scanning the tape once, from left to right. On symbols other than the end-markers, $\mathcal{B}(p)$ is run; if the result is $0$, the subroutine immediately halts with the result of $0$, otherwise it continues reading the next symbol. When the right end-marker $\#_R$ is encountered, the subroutine generates up to $y$ unbiased bits, one after the other. If any of these bits are $0$, the subroutine immediately halts with the result of $0$; if all $y$ bits are $1$, the subroutine halts with the result of $1$. Notice that the transition amplitudes needed to implement $\mathcal{R}'$ are all algebraic numbers.
	
	$A$ performs the following procedure.
	\vspace{5mm}
	\newline Use the classical states to store a counter $j \in \{1,\ldots,k\}$, initialized to $1$
	\newline Repeat indefinitely:
	\newline \hspace*{8mm} Move the head to the right end of the tape, leaving quantum register unchanged
	\newline \hspace*{8mm} Run the subroutine of Lemma~\ref{thm:algorithm:measureRep:nonDiag} with $\rho_j$ producing the result $r$
	\newline \hspace*{8mm} If $r=0$ then \underline{reject}
	\newline \hspace*{8mm} Add $1$ to $j$, where the addition is performed modulo $k$ 
	\newline \hspace*{8mm} If $j=k$ then
	\newline \hspace*{16mm} Run the subroutine $\mathcal{R}'(\frac{1}{\lceil C^2 \rceil},\lceil \log(\frac{\epsilon}{4d^4})\rceil)$, giving the result $b$
	\newline \hspace*{16mm} If $b=1$ then \underline{accept}
	\vspace{5mm}
	
	All remaining parts of the proof are identical to that of \Cref{thm:algorithm:polyDirectWordAlg}, and so we omit the details.		
\end{proof}

\begin{lemma}\label{thm:algorithm:unboundedWordAlg}
	Consider a group $G=\langle S|R \rangle$, with $S$ finite, and let $W_G=W_{G =\langle S|R \rangle}$. If $G$ has an unbounded-error $[k,d,\mathbb{A}]$-DFR (or PDFR), then $W_G \in \mathsf{coN2QCFA}(n,d,\overline{\mathbb{Q}} \cup \mathbb{A})$.
\end{lemma}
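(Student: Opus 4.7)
The plan is to give a direct construction that is much simpler than the proofs of Lemmas \ref{thm:algorithm:polyDirectWordAlg} and \ref{thm:algorithm:genDirectWordAlg}, since the unbounded-error setting removes both the need for any polynomial or exponential amplification via the subroutine $\mathcal{R}$ and the need to control the rejection probability quantitatively; all that must be arranged is perfect completeness, positive rejection probability on non-members, and halting in expected linear time.

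Let $\mathcal{F}=\{\rho_1,\ldots,\rho_k\}$ be the hypothesized unbounded-error $[k,d,\mathbb{A}]$-DFR (or PDFR) for $G=\langle S|R\rangle$. The 2QCFA $N$ will have $d$ quantum basis states $\{q_1,\ldots,q_d\}$, quantum start state $q_1$, and transition amplitudes drawn from $\overline{\mathbb{Q}}\cup\mathbb{A}$ (the $\overline{\mathbb{Q}}$ part coming from the fixed DFT matrix $F$ and the permutation matrices $P_v$ of Lemma \ref{thm:algorithm:measureRep:nonDiag}/Lemma \ref{thm:algorithm:unboundedMeasure}, and the $\mathbb{A}$ part coming from $\rho_j(\sigma)$). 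The algorithm is:

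\smallskip\noindent Use a classical counter $j\in\{1,\ldots,k\}$, initialized to $1$. Repeat the following $k$ times:
\begin{enumerate}[(1)]
\item perform a reset (scan right, leaving the quantum register unchanged, until $\#_R$ is read);
\item execute the $[\leq(d+1)]$-pass measurement round
$\mathcal{M}\bigl[(\rho_j,\ket{1},F),(\rho_j,\ket{q_1},P_1),\ldots,(\rho_j,\ket{q_d},P_d)\bigr]$ of Lemma \ref{thm:algorithm:unboundedMeasure}, obtaining result $r_j\in\{0,1\}$; if $r_j=0$, \underline{reject};
\item increment $j$.
\end{enumerate}
If the loop completes without rejecting, \underline{accept}.

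\smallskip

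Correctness and complexity are then immediate from Lemma \ref{thm:algorithm:unboundedMeasure}. For completeness, if $w\in W_G$ then $\phi(w)=1_G$, so by part (a) every round returns $r_j=1$ with probability $1$; hence $N$ accepts $w$ with probability $1$. For soundness, if $w\notin W_G$ then $\phi(w)\in G_{\neq 1}$, so by the defining property of the DFR there exists some $j^\ast$ with $\phi(w)\in G_{j^\ast}$, and part (b) gives $\Pr[r_{j^\ast}=0]>0$; the probability of reaching the $j^\ast$-th iteration is at least the probability that no earlier iteration rejects, which is bounded below by $\prod_{j<j^\ast}\Pr[r_j=1]\ge\Pr[r_{j^\ast}=0]\cdot(\text{something positive})$, but more directly: $\Pr[N\text{ rejects }w]\ge \Pr[r_{j^\ast}=0\text{ at iteration }j^\ast]$, which is positive because the computation up to that iteration is a well-defined probabilistic process and the event $\{r_1=1,\ldots,r_{j^\ast-1}=1,r_{j^\ast}=0\}$ has positive probability (each $r_j=1$ event trivially has positive probability, as does $r_{j^\ast}=0$). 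Finally, every iteration consists of a reset plus a $[\leq(d+1)]$-pass measurement round, each of which makes $O(1)$ passes over the tape; the loop runs at most $k$ times, so $N$ halts deterministically after $O(kdn)=O(n)$ steps, which certainly bounds the expected running time.

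Since the acceptance/rejection probabilities sum to $1$ (the machine halts deterministically in $O(n)$ steps) and both required probability conditions hold, we obtain $W_G\in\mathsf{coN2QCFA}(n,d,\overline{\mathbb{Q}}\cup\mathbb{A})$. There is no real obstacle here; the only mild subtlety is ensuring the head is correctly positioned at the right end before each measurement round, which is exactly the role of the explicit reset step (as already introduced in Definition \ref{def:2qcfa:measureRound}(\ref{def:2qcfa:measureRound:reset})).
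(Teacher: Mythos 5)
Your proposal is correct and takes essentially the same approach as the paper, which likewise runs the $[\leq(d+1)]$-pass measurement round of Lemma~\ref{thm:algorithm:unboundedMeasure} once for each $j$, rejects on any result of $0$, and accepts otherwise. (One tiny quibble: the claim that ``each $r_j=1$ event trivially has positive probability'' need not hold, but in the degenerate case where some earlier round rejects with probability $1$ the machine rejects outright, so the conclusion $\Pr[N\text{ rejects }w]>0$ is unaffected.)
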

\begin{proof}
	The 2QCFA $A$ operates by using \Cref{thm:algorithm:unboundedMeasure} to check if $\lvert \chi_{\rho_j}(\phi(w))\rvert \neq d$, for each $j$. If this subroutine produces the result $0$ for some $j$, then $A$ rejects; otherwise, $A$ accepts. It is immediate that $A$ recognizes $W_G$ with negative one-sided bounded error, and that $A$ has the claimed parameters. 
\end{proof}

We now show that, if $H$ is a finite-index subgroup of $G$, a 2QCFA that recognizes $W_G$ can be constructed from a 2QCFA that recognizes $W_H$.

\begin{lemma}\label{thm:algorithm:overgroup}
	Consider a group $H=\langle S_H|R_H\rangle$, with $S_H$ finite, and suppose that $A_H$ is a 2QCFA that recognizes $W_H$, which operates in the manner of our proofs of \Cref{thm:algorithm:polyDirectWordAlg,thm:algorithm:genDirectWordAlg,thm:algorithm:unboundedWordAlg}. Further suppose $G$ is a group such that $H \leq G$ and $[G:H]$ is finite. Then $G$ admits a presentation $G=\langle S_G|R_G \rangle$, with $S_G$ finite, such that there is a 2QCFA $A_G$ that recognizes $W_G$. Moreover, $A_G$ has the same acceptance criteria, asymptotic expected running time, number of quantum basis states, and class of transition amplitudes as $A_H$.
\end{lemma}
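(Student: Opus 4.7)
The plan is to have $A_G$ use the same quantum register as $A_H$ and simulate, within each unitary/measurement round of $A_H$, a ``virtual'' tape over $S_H \sqcup S_H^{-1}$ that represents the $H$-component of $\phi(w)$, while the coset of $\phi(w)$ in $G/H$ is tracked in the classical control. The size of the coset space is $r := [G:H] < \infty$, so this adds only a constant-size classical register and leaves the quantum register untouched.

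Concretely, I would choose a complete set of left coset representatives $T=\{t_1 = 1_G, t_2, \ldots, t_r\}$ and set $S_G = S_H \cup (T \setminus \{1_G\})$; this is a finite generating set of $G$. Using the left coset decomposition $G = \bigsqcup_i t_i H$, for each pair $(\sigma, i) \in (S_G \sqcup S_G^{-1}) \times \{1, \ldots, r\}$ there exist unique $\nu(\sigma, i) \in \{1, \ldots, r\}$ and $\eta(\sigma, i) \in H$ with $\sigma \cdot t_i = t_{\nu(\sigma, i)} \cdot \eta(\sigma, i)$. Since there are only finitely many such pairs, I can fix once and for all a bounded-length word $u_{\sigma, i} \in (S_H \sqcup S_H^{-1})^*$ with $\phi_H(u_{\sigma, i}) = \eta(\sigma, i)$. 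For an input $w = w_1 \cdots w_n$, I recursively define $a_{n+1} = 1$, $a_j = \nu(w_j, a_{j+1})$, and $h_j = \eta(w_j, a_{j+1})$ for $j = n, n-1, \ldots, 1$. A straightforward induction on $n-j$ gives $\phi(w_j w_{j+1} \cdots w_n) = t_{a_j}\, h_j h_{j+1} \cdots h_n$; in particular $\phi(w) = t_{a_1} h_1 h_2 \cdots h_n$, so $\phi(w) = 1_G$ if and only if $a_1 = 1$ and $h_1 h_2 \cdots h_n = 1_H$.

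To build $A_G$, I adjoin to the classical control of $A_H$ an extra register storing a coset index $a \in \{1, \ldots, r\}$, multiplying the classical state space by $r$ but not touching the quantum register. $A_G$ then executes the same outer protocol as $A_H$ (the loop over $j$, the Ambainis--Watrous biased-coin subroutine $\mathcal{R}$, the order in which measurement rounds are fired, etc.), except that each unitary or measurement round is replaced as follows: upon first reading $\#_R$, set $a := 1$ and initialize the quantum register exactly as $A_H$ does; then, scanning right-to-left, at each input symbol $w_j$ use the current value of $a$ (which by induction equals $a_{j+1}$) to look up the fixed word $u_{w_j, a}$ and, via a constant number of zero-motion transitions, apply the unitaries $\rho_j(\phi_H(u_{w_j, a})) = \rho_j(h_j)$ to the quantum register --- each $\rho_j(s)$ for $s \in S_H \sqcup S_H^{-1}$ is already a product of a bounded number of matrices in $\U(d, \mathbb{A})$ by \Cref{def:dfr}(\ref{def:dfr:amplitudes}) --- then update $a := \nu(w_j, a)$ and move left; upon reaching $\#_L$, if $a \neq 1$ treat the round as returning measurement result $0$, otherwise apply $F$ (or the appropriate $P_v$) and complete the measurement as $A_H$ does.

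Correctness is then immediate: by induction on the right-to-left scan, when $\#_L$ is reached the quantum register is in the state $\rho_j(h_1 h_2 \cdots h_n) \ket{\psi}$ and the classical coset register holds $a_1$. If $w \in W_G$ then $a_1 = 1$ and $h_1 \cdots h_n = 1_H$, so the round behaves exactly like $A_H$'s round on the identity element and returns result $1$ with probability $1$; if $w \not\in W_G$ then either $a_1 \neq 1$ (and the round deterministically returns $0$), or $a_1 = 1$ but $h_1 \cdots h_n \in H_{\neq 1}$, and by uniform boundedness of the $|u_{\sigma, i}|$ this element has length $O(n)$ in $S_H$, so the soundness estimates \Cref{thm:algorithm:measureRep:diag}, \Cref{thm:algorithm:measureRep:nonDiag}, and \Cref{thm:algorithm:unboundedMeasure} apply directly, possibly with the parameter $\tau$ rescaled by a constant in its argument --- which preserves the asymptotic form of the running time (same polynomial degree, or exponential of the same class) by the same reasoning as in \Cref{thm:dfr:conv}(\ref{thm:dfr:conv:switchPresentation}), after adjusting the parameters fed to $\mathcal{R}$ accordingly. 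By construction $A_G$ uses exactly the same quantum basis states and the same class of transition amplitudes as $A_H$. The main obstacle is arranging the bookkeeping so that the coset tracking is \emph{local} during the scan: the pairing of left cosets with a right-to-left scan is the key point, since it makes the recurrence $a_j = \nu(w_j, a_{j+1})$ depend only on the current symbol and the previously tracked coset, not on the accumulated $H$-element, which would be too large to store in finite classical memory.
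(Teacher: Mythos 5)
Your proposal is correct and follows essentially the same route as the paper: a complete set of left coset representatives, the decomposition $\sigma t_i = t_{\nu(\sigma,i)}\eta(\sigma,i)$ (the paper's $\sigma g_j = g_{\alpha(\sigma,j)}\beta(\sigma,j)$), coset tracking in a constant-size classical register during the right-to-left scans, feeding fixed bounded-length $S_H$-words to the simulated $A_H$, and rejecting whenever the final coset index is not the identity. Your explicit remark that the resulting $H$-element has length $O(n)$ in $S_H$ (so the soundness bounds apply with $\tau$ rescaled by a constant in its argument) is a detail the paper leaves implicit, but the argument is the same.
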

\begin{proof}
	Following (essentially) \cite{muller1983groups} (with the exception that we do not assume $H$ is a \textit{normal} subgroup of $G$), we now construct a convenient presentation for $G$. We begin by establishing some notation. Let $l=[G:H]$, and let $g_1,\ldots,g_l$ denote a complete family of left coset representatives of $H$ in $G$, where $g_1=1_G$. We assume for notational convenience that $S_H \cap S_H^{-1} = \emptyset$ (and so, in particular, $1_H \not \in S_H$). Let $\Sigma_H=S_H \sqcup S_H^{-1}$, $S_G=S_H \sqcup (g_2,\ldots,g_l)$, and $\Sigma_G=S_G \cup S_G^{-1}$. Let $\phi_H:\Sigma_H^* \rightarrow H$ and $\phi_G:\Sigma_G^* \rightarrow G$ be the natural maps. Let $T_l=\{1,\ldots,l\}$. 
	
	As the $g_i$ are a complete family of left coset representatives of $H$ in $G$, every element $g \in G$ can be expressed uniquely as some $g_i h$, where $i \in T_l$ and $h \in H$. In particular, for any $\sigma \in \Sigma_G$ and $j \in T_l$, consider the element $\sigma g_j \in G$; there is unique $i \in T_l$ and $h \in H$ such that $\sigma g_j =g_i h$. Therefore, we can define functions $\alpha: \Sigma_G \times T_l \rightarrow T_l$ and $\beta:\Sigma_G \times T_l \rightarrow H$, such that $$\sigma g_j=g_{\alpha(\sigma,j)} \beta(\sigma,j), \ \forall \sigma \in \Sigma_G,\forall j \in T_l.$$ Let $\tau:H \rightarrow F(S_H)$ be the function that takes each $h \in H$ to some element in the free group on $S_H$ such that $h=\tau(h)$, as elements of $H$. Then $G$ has presentation $\langle S_G|R_G \rangle$, where $S_G$ is as defined above and $$R_G=R_H \cup \left\{g_{\alpha(\sigma,j)} \tau(\beta(\sigma,j)) g_j^{-1} \sigma^{-1} :\sigma \in \Sigma_G,j \in T_l\right\}.$$ 
	
	We now construct a 2QCFA $A_G$ that recognizes $W_G:=W_{G=\langle S_G|R_G \rangle}$. Consider an input $w \in \Sigma_G^*$. For any $p \in \{0,\ldots,\lvert w \rvert\}$, let $w^p=w_{\lvert w \rvert-p+1} \cdots w_{\lvert w \rvert}$ denote the suffix of $w$ of length $p$; in particular, $w^0$ is the empty string. $A_G$ must determine if $\phi_G(w)=1_G=g_1 1_H$. The key idea is that $A_G$ will make many right-to-left passes over its input, such that, after $A_G$ has read the suffix $w^p$, if $\phi_G(w^p)=g_m h$, then $A_G$ will have the values $m \in T_l$ and $h \in H$ ``stored" in its internal state, in an appropriate sense. Namely, $A_G$ will keep track of $m \in T_l$ using its classical states, and $A_G$ will keep track of $h$ by simulating $A_H$. 
	
	We now fill in the details. $A_G$ has the same quantum basis states as $A_H$, which we will denote $\ket{q_1},\ldots,\ket{q_d}$, and quantum start state $q_1$. $A_G$ begins by moving its head to the far right end of the tape, leaving its quantum register in the superposition $\ket{q_1}$. $A_G$ will store a value $t \in T_l$ using its classical states, where $t$ is initialized to $1$. $A_G$ then repeatedly scans its input in the manner prescribed by $A_H$, i.e., $A_G$ makes many right-to-left passes reading the input word $w$, and $A_G$ also performs the simulated coin flipping via random walks of $A_H$. During each right-to-left pass, $A_G$ will maintain the property that after reading the suffix $w^p$, if $\phi_G(w^p)=g_m h$, then the stored value $t=m$ and $A_N$ will have been simulated on a string $\widehat{w^p} \in \Sigma_H^*$ (read ``backwards"), where $\phi_H(\widehat{w^p})=h$. 
	
	$A_G$ accomplishes this as follows. Suppose $A_G$ has already read the particular suffix $w_p$ and $\phi_G(w^p)=g_m h$, and is now about to read the next symbol, $\sigma:=w_{\lvert w \rvert - p}$. After reading $\sigma$, we want $A_G$ to update its internal state (both classical and quantum) to correspond to the word $w^{p+1}=\sigma \circ w^p$. By construction, $\sigma g_m=g_{\alpha(\sigma,m)} \beta(\sigma,m)$, and so $$\phi_G(w^{p+1})=\phi_G(\sigma \circ w^p)=\phi_G(\sigma) \phi_G(w^p)=\sigma g_m h=g_{\alpha(\sigma,m)} \beta(\sigma,m) h.$$ Define the function $\widehat{\beta}:\Sigma_G \times T_l \rightarrow \Sigma_H^*$ such that $\widehat{\beta}(\kappa,j)$ is any word in $\Sigma_H^*$ of minimum (string) length such that $\phi_N(\widehat{\beta}(\kappa,j))=\beta(\kappa,j)$, $\forall \kappa \in \Sigma_G, \forall j \in T_l$. $A_G$ then updates its stored value $t \in T_l$ from $m$ to $\alpha(\sigma,m)$ and simulates $A_H$ on $\widehat{\beta}(\sigma,m)$. That is to say, at this point $A_H$ has been simulated on the string $\widehat{w^p}$, where $\phi_H(\widehat{w^p})=h$; $A_G$ then feeds the string $\widehat{\beta}(\sigma,m)$ to $A_H$ (from right-to-left), after which $A_H$ will have been simulated on $\widehat{\beta}(\sigma,m) \circ \widehat{w^p}$, as desired. During this process of feeding the string $\widehat{\beta}(\sigma,m)$ to $A_H$, $A_G$ does not move its head.
	
	All that remains is to define the acceptance criteria of $A_G$. Suppose $A_G$ has just made a complete pass over the input, simulating $A_H$ along the way, and then possibly also performed a simulated coin-flipping procedure, if $A_H$ so demanded. $A_G$ also has the value $m$ in its internal state, such that $\phi_G(w)=g_m h$. At this point (the simulation of) $A_H$ may or may not have halted. $A_G$ behaves as follows. If $m\neq 1$, $A_G$ immediately rejects. If $m=1$, then if $A_H$ has halted (accepting or rejecting the input), then $A_G$ halts, accepting if $A_H$ accepted and rejecting if $A_H$ rejected. If $m=1$ and $A_H$ has not halted, $A_G$ continues. It immediately follows from the above argument that $A_G$ recognizes $W_G$ and that $A_G$ has all the claimed properties.  	
\end{proof}

Using the above results, and the constructions of DFR from Section~\ref{sec:dfr}, the main theorems stated in the introduction straightforwardly follow.

\begin{proof}[Proof of Theorem~\ref{thm:main:abelian}]
	Fix $G \in \widehat{\Pi}_1$. By \Cref{thm:dfr:virtAbelian}(\ref{thm:dfr:virtAbelian:alg}), $G$ virtually has a diagonal algebraic $[K_1,2,D_2 n^{-D_1}]$-DFR, for some $K_1 \in \mathbb{N}_{\geq 1}$ and $D_1,D_2 \in \mathbb{R}_{>0}$ (where $D_1$ is a universal constant that does not depend on $G$). By \Cref{thm:algorithm:polyDirectWordAlg,thm:algorithm:overgroup}, we conclude $W_G \in \mathsf{coR2QCFA}(n^{\lceil D_1 \rceil+2},\epsilon,2,\overline{\mathbb{Q}})$. Similarly, by \Cref{thm:dfr:virtAbelian}(\ref{thm:dfr:virtAbelian:nonAlg}), with $\delta=0.9$, $G$ virtually has a diagonal $[K_2,2,D_3 n^{-0.9},\widetilde{\mathbb{C}}]$-DFR. By \Cref{thm:algorithm:polyDirectWordAlg,thm:algorithm:overgroup}, $W_G \in \mathsf{coR2QCFA}(n^3,\epsilon,2,\widetilde{\mathbb{C}})$.
\end{proof}

\begin{proof}[Proof of Theorem~\ref{thm:main:freeDirProd}]
	Follows from \Cref{thm:dfr:virtFree}, \Cref{thm:algorithm:genDirectWordAlg}, and \Cref{thm:algorithm:overgroup}.
\end{proof}

\begin{proof}[Proof of Theorem~\ref{thm:main:projEmbed}]
	Follows from \Cref{thm:pdfr:generic}, \Cref{thm:algorithm:genDirectWordAlg}, and \Cref{thm:algorithm:overgroup}.
\end{proof}

\begin{proof}[Proof of Theorem~\ref{thm:main:pi3}]
	Follows from \Cref{thm:unboundedDFR:pi3} and \Cref{thm:algorithm:unboundedWordAlg}.
\end{proof}

\begin{proof}[Proof of Theorem~\ref{thm:main:totallyGeneric}]
	By the assumption of the theorem, $G$ has a (finitely generated) finite index subgroup that has an unbounded-error $[k,d,\mathbb{C}]$-PDFR. For 2QCFA, the claim follows from \Cref{thm:algorithm:unboundedWordAlg} and \Cref{thm:algorithm:overgroup}; for MO-1QFA, the claim follows from \Cref{thm:algorithm:oneWayUnboundedMeasure}.
\end{proof}

\section{Discussion}\label{sec:discussion}

\subsection{Computational Complexity of the Word Problem}\label{sec:discussion:wordComplex}

We now compare the results that we have obtained concerning the ability of a 2QCFA to recognize certain group word problems with existing results  for ``simple" classical and quantum models. We use the following notation for complexity classes: $\mathsf{REG}$ denotes the regular languages (languages recognized by deterministic finite automata), $\mathsf{CFL}$ (resp. $\mathsf{DCFL}$) denotes the context-free (resp. deterministic context-free) languages (languages recognized by nondeterministic (resp. deterministic) pushdown automata), $\mathsf{OCL}$ (resp. $\mathsf{DOCL}$) denotes the one-counter (resp. deterministic one-counter) languages (languages recognized by nondeterministic (resp. deterministic) pushdown automata where the stack alphabet is limited to a single symbol), $\mathsf{poly{-}CFL}$ (resp. $\mathsf{poly{-}DCFL}$, $\mathsf{poly{-}OCL}$, $\mathsf{poly{-}DOCL}$) denotes the intersection of finitely many context-free (resp. deterministic context-free, one-counter, deterministic one-counter) languages, and $\mathsf{L}$ denotes deterministic logspace (languages recognized by deterministic Turing machines with read-only input tape and read/write work tape of size logarithmic in the input). 

Using the notation of \Cref{sec:intro:results}, we write $\widehat{\Pi}_0$ (resp. $\widehat{\Pi}_1$, $\widehat{\Sigma}_1$, $\widehat{\Pi}_2$) for the finitely-generated groups that are virtually cyclic (resp. abelian, free, a subgroup of a direct product of finitely many finite-rank free groups). We also write $\widehat{\{1\}}$ for the finite groups (i.e., the virtually trivial groups), and $\mathcal{L}$ for the set of all finitely generated groups $G$ that are linear groups over some field of characteristic $0$. The following proposition, which collects the results of many authors, demonstrates the extremely strong relationship between the computational complexity of $W_G$ and certain algebraic properties of $G$.

\begin{proposition}
	(\cite{anisimov1971group, herbst1991subclass, holt2008groups, brough2014groups, muller1983groups, dunwoody1985accessibility, anisimov1972some, muller1985theory, lipton1977word}) Let $G$ be a finitely generated group with word problem $W_G$. The following statements hold.
	\begin{enumerate}[(i)]
		\item $G \in \widehat{\{1\}} \Leftrightarrow W_G \in \mathsf{REG}.$
		\item $G \in \widehat{\Pi}_0 \Leftrightarrow W_G \in \mathsf{OCL} \Leftrightarrow W_G \in \mathsf{DOCL}.$
		\item $G \in \widehat{\Pi}_1 \Leftrightarrow W_G \in \mathsf{poly{-}OCL}\Leftrightarrow W_G \in \mathsf{poly{-}DOCL}.$
		\item $G \in \widehat{\Sigma}_1 \Leftrightarrow W_G \in \mathsf{CFL} \Leftrightarrow W_G \in \mathsf{DCFL}.$
		\item $G \in \widehat{\Pi}_2 \Rightarrow W_G \in \mathsf{poly{-}DCFL} \subsetneq \mathsf{poly{-}CFL}.$
		\item $G \in \mathcal{L} \Rightarrow W_G \in \mathsf{L}.$
	\end{enumerate}
\end{proposition}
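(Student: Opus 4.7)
Since this proposition collects results of many authors spanning several decades, my plan is to handle each item via two constructions: realize the algebraic hypothesis on $G$ by a specific automaton or algorithm, and conversely extract algebraic structure from any automaton recognizing $W_G$. For item (i), the forward direction is the Cayley-graph DFA: states indexed by $G$, transitioning by right-multiplication via $\phi(\sigma)$, and accepting at $1_G$; the converse follows from the Myhill-Nerode theorem, since $u \sim v$ iff $\phi(u) = \phi(v)$, so regularity of $W_G$ bounds $|G|$.

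For items (ii)--(iv), the forward directions parallel (i): pick a finite-index subgroup $H$ of $G$ of the appropriate type and combine finite-state tracking of the coset of $H$ represented by the current prefix with the appropriate auxiliary storage---a single counter when $H \cong \mathbb{Z}$, an intersection of $r$ one-counter machines when $H \cong \mathbb{Z}^r$, and a pushdown stack storing the reduced-word form of the prefix in $H$ when $H$ is free. One must verify that the coset tracking commutes with the auxiliary update (cf.\ the presentation manipulation in \Cref{thm:algorithm:overgroup}), and that $\mathsf{DOCL}$ and $\mathsf{DCFL}$ are closed under inverse homomorphism so that the choice of generating set does not matter. The converse directions---the deep theorems of Anisimov, Holt-Owens-Thomas, and Muller-Schupp---are the main obstacle. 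Each one shows that the existence of such an automaton forces the Cayley graph of $G$ to have controlled large-scale geometry (the right number of ends with the right quotient structure, bounded tree-width, etc.), whence Stallings' theorem on ends yields the algebraic classification. Item (v) is then obtained by embedding $G$ (after passing to a finite-index subgroup) into $F_{r_1} \times \cdots \times F_{r_t}$ and pulling back: composing with each projection $G \to F_{r_i}$ expresses $W_G$ as an intersection of $t$ inverse homomorphisms of $W_{F_{r_i}} \in \mathsf{DCFL}$, hence in $\mathsf{poly{-}DCFL}$, with the strict inclusion $\mathsf{poly{-}DCFL} \subsetneq \mathsf{poly{-}CFL}$ coming from known separations in the poly-slice hierarchy.

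Item (vi) is the Lipton-Zalcstein theorem, and here my plan is concrete. Fix a faithful representation $\rho: G \hookrightarrow \GL(n,K)$ into a number field $K$ of characteristic zero, scale so that $\rho(S \cup S^{-1}) \subseteq M_n(\mathcal{O}_K)$ after clearing a common denominator, and note that the entries of $\rho(\phi(w))$ are then algebraic integers whose height is polynomial in $|w|$. Hence it suffices to decide whether $\rho(\phi(w)) = I$ modulo some prime ideal $\mathfrak{p}$ of $\mathcal{O}_K$ of norm bounded by a polynomial in $|w|$, i.e., of bit-length $O(\log|w|)$; a counting argument using the height bound guarantees that if $\phi(w) \neq 1_G$ then some such $\mathfrak{p}$ witnesses $\rho(\phi(w)) \not\equiv I \pmod{\mathfrak{p}}$. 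A logspace machine iterates over candidate primes within this budget, and for each one computes the matrix product $\rho(\phi(w_1)) \cdots \rho(\phi(w_{|w|})) \bmod \mathfrak{p}$ entrywise on the fly by iterated mod-$\mathfrak{p}$ arithmetic---each entry fits in $O(\log|w|)$ bits, and matrix multiplication in fixed dimension $n$ is logspace. The hard step is arranging arithmetic over $\mathcal{O}_K/\mathfrak{p}$ uniformly in logspace and making the height/prime-avoidance bounds precise, which is the technical heart of Lipton-Zalcstein's original paper.
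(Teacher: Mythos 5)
The paper's own ``proof'' of this proposition is purely citational: it assigns each item to its source in the literature and does not reconstruct any argument, whereas you attempt to sketch the actual mathematics. Your forward directions are correct and standard: the Cayley-graph DFA and Myhill--Nerode argument for (i) is exactly right (the Nerode classes of $W_G$ are in bijection with the elements of $G$), the coset-tracking-plus-auxiliary-store constructions for (ii)--(iv) work modulo the bookkeeping you flag, the projection argument for (v) is the argument of \cite{brough2014groups}, and your mod-$\mathfrak{p}$ plan for (vi) is the genuine Lipton--Zalcstein proof (one quibble: you cannot literally rescale $\rho(S\cup S^{-1})$ into $M_n(\mathcal{O}_K)$ and remain a homomorphism; the correct move is to work in the localization $\mathcal{O}_K[\frac{1}{N}]$ and restrict to primes not dividing $N$). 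For the deep converses of (ii)--(iv) you, like the paper, ultimately defer to the literature, so your proposal is no less complete than the paper's proof on that score. The one place where the paper is genuinely more careful than you are is item (iv): you attribute the equivalence ``$W_G\in\mathsf{CFL}\Leftrightarrow G$ virtually free'' wholesale to Muller--Schupp, but their theorem as originally proved requires the additional hypothesis that $G$ is accessible; the paper correctly notes that the unconditional statement needs two further inputs, namely that context-free groups are finitely presented \cite{anisimov1972some} and that finitely presented groups are accessible \cite{dunwoody1985accessibility}. If you intend your sketch to be honest about what is being invoked, that three-step chain should be made explicit, and similarly the strictness $\mathsf{poly{-}DCFL}\subsetneq\mathsf{poly{-}CFL}$ in (v) deserves a concrete citation rather than an appeal to ``known separations.''
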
 

\begin{proof}
	Statements $(i),(ii),(iii),(v),\text{ and } (vi)$ were shown, respectively, in \cite{anisimov1971group},\cite{herbst1991subclass},\cite{holt2008groups}, \cite{brough2014groups}, and \cite{lipton1977word}. In \cite{muller1983groups}, it was shown that $G$ is free if and only if $W_G \in \mathsf{CFL}$ and $G$ is accessible, in \cite{dunwoody1985accessibility}, it was shown that all finitely presented groups are accessible, and in \cite{anisimov1972some} it was shown that all context-free groups are finitely presented, which implies the first equivalence in $(iv)$. The second equivalence in $(iv)$ was shown in \cite{muller1985theory}. 
\end{proof}

It is particularly interesting that, while there are strict inclusions $\mathsf{DCFL} \subsetneq \mathsf{CFL}$, $\mathsf{DOCL} \subsetneq \mathsf{OCL}$, and $\mathsf{poly{-}DOCL} \subsetneq \mathsf{poly{-}OCL}$, there are no groups whose word problem witnesses any of these separations. That is to say, the deterministic and non-deterministic versions of each of these models can recognize word problems for precisely the same class of groups. 

Our results have a close correspondence to the above mentioned results. By \Cref{thm:main:abelian} (resp. \Cref{thm:main:freeDirProd}), $\forall G \in \widehat{\Pi}_1 \supsetneq \widehat{\Pi}_0 \supsetneq \widehat{\{1\}}$ (resp. $\forall G \in \widehat{\Pi}_2 \supsetneq\widehat{\Pi}_1 \cup \widehat{\Sigma}_1 \supsetneq \widehat{\Pi}_0 \supsetneq \widehat{\{1\}}$), $W_G$ is recognized with one-sided bounded error, in expected polynomial (resp. exponential) time, by a 2QCFA with a single qubit and algebraic number transition amplitudes. Moreover, if allowed a quantum register of any constant size, such a 2QCFA may recognize the word problem of any group $G \in \mathcal{Q}$, where $\mathcal{Q}$ denotes the class of groups for which \Cref{thm:main:projEmbed} applies, with one-sided bounded error in expected exponential time. Of course, as our fundamental approach to solving the group word problem is to construct a DFR for a group $G$, and as any such DFR yields a faithful finite-dimensional unitary representation of $G$, any such $G \in \mathcal{L}$.

In a companion paper \cite{remscrim2019lower}, we establish a lower bound on the running time of any 2QCFA (with any size quantum register and no restrictions placed on its transition amplitudes) that recognizes a word problem $W_G$ with bounded error (even under the more generous notion of two-sided bounded error); more strongly, we establish a lower bound on the running time of \textit{any quantum Turing machine} that uses \textit{sublogarithmic} space, though we will not discuss that here. In particular, we show that, $\forall G \in \mathcal{Q}\setminus\widehat{\Pi}_1$, $W_G$ \textit{cannot} be recognized by such a 2QCFA is expected time $2^{o(n)}$. Therefore, the algorithm exhibited in this paper for recognizing the word problem of any group $G \in \mathcal{Q}\setminus\widehat{\Pi}_1$ has (essentially) optimal expected running time; moreover, we have obtained the first provable separation between the classes of languages recognizable with bounded error by 2QCFA in expected exponential time and in expected subexponential time. In that same paper, we also show that if a 2QCFA of this most general type recognizes a word problem $W_G$ in expected polynomial time, then $G \in \mathcal{G}_{vNilp}$, where $\mathcal{G}_{vNilp}$ denotes the finitely generated virtually nilpotent groups, and $\widehat{\Pi}_1 \subsetneq \mathcal{G}_{vNilp}$. This naturally raises the following question.

\begin{openProblem}
	Is there a group $G \in (\mathcal{G}_{vNilp} \setminus \widehat{\Pi}_1)$ such that $W_G$ can be recognized by a 2QCFA with bounded error in expected polynomial time? 
\end{openProblem}

We have shown that the (three-dimensional discrete) Heisenberg group $H\in (\mathcal{G}_{vNilp} \setminus \widehat{\Pi}_1)$ is ``complete'' for this question, in the sense that if $W_H$ \textit{cannot} be recognized with bounded error by a 2QCFA in expected polynomial time, then no such $G$ can \cite{remscrim2019lower}. 

Let $\mathcal{G}_{vSolvLin}$ denote the finitely generated virtually solvable linear groups over a field of characteristic zero, and note that $\mathcal{G}_{vNilp} \subsetneq \mathcal{G}_{vSolvLin}$. Furthermore, note that $W_G \in \mathsf{L}$, $\forall G \in \mathcal{G}_{vSolvLin}$ \cite{lipton1977word}. However, every $G \in \mathcal{G}_{vSolvLin} \setminus \widehat{\Pi}_1$ \textit{does not} have a faithful finite-dimensional unitary representation (see, for instance, \cite[Proposition 2.2]{thom2013convergent}) and, therefore, does not have a DFR (even an unbounded-error DFR); this prevents the techniques of this paper from producing a 2QCFA that recognizes the corresponding $W_G$.

\begin{openProblem}
	Is there a finitely generated group $G$ that does not have a faithful finite-dimensional unitary representation (for example, any $G \in \mathcal{G}_{vSolvLin} \setminus \widehat{\Pi}_1$ or any finitely generated infinite Kazhdan group) such that $W_G$ can be recognized with bounded error by a 2QCFA at all (i.e., in any time bound)? 
\end{openProblem}

Consider the group $\mathbb{Z} * \mathbb{Z}^2 \in \Sigma_2 \subsetneq \widehat{\Pi}_3$, and note that $\mathbb{Z} * \mathbb{Z}^2  \not \in \widehat{\Pi}_2$. The complexity of $W_{\mathbb{Z} * \mathbb{Z}^2}$ has been considered by many authors and it is conjectured that $W_{\mathbb{Z} * \mathbb{Z}^2} \not \in \mathsf{poly{-}CFL}$ \cite{brough2014groups}(cf. \cite{ceccherini2015multipass}) and that $W_{\mathbb{Z} * \mathbb{Z}^2} \not \in \mathsf{coCFL}$ \cite{holt2005groups}. By \Cref{thm:main:pi3}, $W_{\mathbb{Z} * \mathbb{Z}^2}$ is recognizable with one-sided \textit{unbounded} error by a 2QCFA. We ask the following questions.

\begin{openProblem}
	Can $W_{\mathbb{Z} * \mathbb{Z}^2}$ be recognized by a 2QCFA with bounded error? More generally, is $W_{\mathbb{Z} * \mathbb{Z}^r}$ recognizable by a 2QCFA with bounded error, $\forall r \in \mathbb{N}$?
\end{openProblem}

\begin{openProblem}
	Does $\mathbb{Z} * \mathbb{Z}^2$ have an algebraic DFR. More generally, does $\mathbb{Z} * \mathbb{Z}^r$ have an algebraic DFR, $\forall r \in \mathbb{N}$? Even more generally, is the class of groups which have algebraic DFRs closed under free product?
\end{openProblem}

\begin{remark} 
	Of course, such a DFR would immediately yield a 2QCFA of the desired type for the corresponding word problem. Moreover, recall that $\Sigma_2$ consists of all groups of the form $\mathbb{Z}^{r_1} * \cdots * \mathbb{Z}^{r_m}$, for some $m,r_1,\ldots,r_m \in \mathbb{N}$, and that any such groups embeds in $\mathbb{Z} * \mathbb{Z}^r$, where $r= \max_j r_j$. By \Cref{thm:dfr:subOver}(\ref{thm:dfr:subOver:sub}), if $\mathbb{Z} * \mathbb{Z}^r$ has a DFR then $\mathbb{Z}^{r_1} * \cdots * \mathbb{Z}^{r_m}$ has a DFR with essentially the same parameters. Therefore, if all such $\mathbb{Z} * \mathbb{Z}^r$ have DFRs of the desired type, then so do all groups in $\Sigma_2$, which would then imply all groups in $\widehat{\Pi}_3$ virtually have such a DFR, by an application of \Cref{thm:dfr:prod} and \Cref{thm:dfr:subOver}(\ref{thm:dfr:subOver:sub}).
\end{remark}

We next consider known results concerning those group word problems recognizable by particular QFA variants. Ambainis and Watrous, in the paper in which the 2QCFA model was first defined \cite{ambainis2002two}, considered the languages $L_{eq}=\{a^m b^m : m \in \mathbb{N}\}$ and $L_{pal}=\{w \in \{a,b\}^*:w \text{ is a palindrome}\}$. They showed that a 2QCFA, with only two quantum basis states (i.e., a single-qubit quantum register), can recognize $L_{eq}$ (resp. $L_{pal}$) with one-sided bounded error in expected polynomial (resp. exponential) time. As noted in the introduction, while neither $L_{eq}$ nor $L_{pal}$ are group word problems, they are closely related to word problems. In particular, $L_{eq}=(a^* b^*) \cap W_{\mathbb{Z}}$. Moreover, for $w=w_1 \cdots w_n \in \{a,b\}^*$, where each $w_i \in \{a,b\}$, let $\overline{w}=w_1^{-1}\cdots w_n^{-1}\in \{a^{-1},b^{-1}\}^*$; then, for any $w \in \{a,b\}^*$, $w \in L_{pal} \Leftrightarrow w \overline{w} \in W_{F_2}$. This observation allows us to reinterpret the above results of Ambainis and Watrous in terms of group word problems. 

In addition to results of the above form, which, implicitly, study the quantum computational complexity of the word problem for certain groups, some authors have explicitly considered this question. In the following we write MO-1QFA for the measure-once one-way QFA (defined in \cite{moore2000quantum}), MM-1QFA for the measure-many one-way QFA (defined in \cite{kondacs1997power}) and 1QFA$\circlearrowleft$ for the one-way QFA with restart (defined in \cite{yakaryilmaz2010succinctness}). Let $\mathsf{S}_{\mathbb{Q}}^{=}$ denote the class of languages $L$ for which there is a probabilistic finite automaton $P$, all of whose transition amplitudes are rational numbers, such that, $\forall w \in L$, the probability that $P$ accepts $w$ is exactly $\frac{1}{2}$, and, $\forall w \not \in L$, the probability that $P$ accepts $w$ differs from $\frac{1}{2}$.

The languages $W_{F_k}$, $k \in \mathbb{N}$ can be recognized, with negative one-sided \textit{unbounded} error, by a MO-1QFA \cite{brodsky2002characterizations}. Yakaryilmaz and Say \cite{yakaryilmaz2010succinctness} showed that any language $L \in \mathsf{S}_{\mathbb{Q}}^{=}$ can be recognized by a MM-1QFA, with negative one-sided \textit{unbounded} error, and by a 1QFA$\circlearrowleft$ or 2QCFA, with negative one-sided \textit{bounded} error, in expected \textit{exponential time}. As $L_{eq}, L_{pal} \in \mathsf{S}_{\mathbb{Q}}^{=}$, this result, partially, subsumes the original result of Ambainis and Watrous \cite{ambainis2002two}. However, in addition to the (exponential) difference in expected running time in the case of $L_{eq}$, we also note that there is a significant difference between the sizes of the quantum registers of the machines produced in these two results. In particular, the 1QFA$\circlearrowleft$ and 2QCFA constructed by Yakaryilmaz and Say that recognize $L_{pal}$ have $15$ quantum basis states, as opposed to the $2$ quantum basis states of the 2QCFA constructed by Ambainis and Watrous. Similarly, as $W_{F_k} \in \mathsf{S}_{\mathbb{Q}}^{=}$, $\forall k \in \mathbb{N}$, the result of Yakaryilmaz and Say shows that the word problems of these groups can be recognized by a 2QCFA of our type; however, a direct application of their construction would yield a 2QCFA with larger quantum part than that of our construction, or that of Ambainis and Watrous. Of course, our results also apply to the 1QFA$\circlearrowleft$ model (with exponential expected running time).

\subsection{Information Compression}\label{sec:discussion:compress}

The 2QCFA constructed by Ambainis and Watrous \cite{ambainis2002two} that recognize $L_{eq}$ and $L_{pal}$ do so using only a single qubit; as they noted, this demonstrates that quantum computational models can perform a particularly interesting sort of extreme information compression. We next observe that the same phenomenon occurs in our constructions of 2QCFA. Consider a group $G=\langle S|R \rangle$, with $S$ finite, and let $W_G=W_{G=\langle S|R \rangle}$. Let $B_{G,S}(n)=\{g \in G:l_S(g) \leq n\}$ denote those elements of $G$ of length at most $n$, and let $f_{G,S}(n)=\lvert B_{G,S}(n) \rvert$ denote the \textit{growth rate} of $G$. For the remainder of this section, we ignore the uninteresting case in which $G$ is a finite group (as then $W_G \in \mathsf{REG}$), and consider only finitely generated infinite groups, where $f_{G,S}$ is necessarily a growing function of $n$. 

The core idea of our 2QCFA $A$ for the word problem $W_G$ is to scan the input word $w=w_1 \cdots w_n \in \Sigma^*$ and, after the partial word $w_1 \cdots w_t$ has been read, the quantum register of $A$ stores the group element $g_t:=\phi(w_1\cdots w_t) \in G$. On inputs of string length $n$, $g_t$ may vary over the entirety of $B_{G,S}(n)$. In order to store an arbitrary element of $B_{G,S}(n)$ such that it is (information theoretically) possible to perfectly discern the identity of that element, one requires $\log(f_{G,S}(n))$ (classical) bits. Moreover, by Holevo's theorem \cite{holevo1973bounds}, this same task requires $\log f_{G,S}(n)$ qubits.  

Therefore, we must first make clear why our approach, which encodes such an element using only a single qubit, does not violate Holevo's theorem. The key observation is that, while all $\log f_{G,S}(n)$ bits of information are truly stored in the single qubit, one is extremely limited in the manner in which that information may be accessed. In particular, this information may only be accessed by performing a quantum measurement, which only (probabilistically) indicates whether or not the currently stored value $g_t$ is equal to the identity element $1_G$; moreover, performing this quantum measurement completely destroys all information stored in this qubit. This extremely severe restriction on the manner in which the information content of a qubit may be accessed prevents one from reconstructing information stored within the qubit in a manner inconsistent with Holevo's theorem. On the other hand, this restriction is perfectly consistent with the manner in which $A$ operates when solving the word problem of $G$, and so it provides no impediment to using a single qubit to store information in a radically compressed way.

We next quantify the extent to which our constructions of 2QCFA compress information. For two monotone non-decreasing functions $f_1,f_2:\mathbb{R}_{\geq 0} \rightarrow \mathbb{R}_{\geq 0}$, we write $f_1 \prec f_2$ if there are constants $C_1,C_2 \in \mathbb{R}_{>0}$ such that, $f_1(x) \leq C_1 f_2(C_1 x+C_2)+C_2$, $\forall x \in \mathbb{R}_{\geq 0}$, and we write $f_1 \sim f_2$ if both $f_1 \prec f_2$ and $f_2 \prec f_1$. Note that while the exact value of $f_{G,S}(n)$ does depend on $S$, the asymptotic behavior does not, in that $f_{G,S} \sim f_{G,S'}$, for any other finite generating set $S'$ \cite[Proposition 6.2.4]{loh2017geometric}; therefore, we will simply write $f_G$ in place of $f_{G,S}$ when only the asymptotic behavior is relevant. We say $G$ is of \textit{polynomial growth} if $f_G \sim n^C$, for some $C \in \mathbb{R}_{\geq 0}$, and of \textit{exponential growth} if $f_G \sim C^n$, for some $C \in \mathbb{R}_{>0}$. By the famous Tits' alternative \cite{tits1972free}, every $G \in \mathcal{L}$ is either of polynomial or exponential growth; in particular,  $G \in \mathcal{L}$ has polynomial growth precisely when it is virtually nilpotent. 

In particular, any finitely generated virtually abelian group $G$ has polynomial growth; therefore, one requires $\log f_G(n) \sim \log(n)$ classical bits to unambiguously store an element of $B_{G,S}(n)$. By Theorem~\ref{thm:main:abelian}, for any such $G$, there is a single-qubit 2QCFA $A$ that recognizes $W_G$, with bounded error, in expected polynomial time. In particular, $A$ stores this arbitrary element of $B_{G,S}(n)$ using only a single qubit. More dramatically, by Theorem~\ref{thm:main:freeDirProd}, for any finitely generated virtually free group $G$, there is a single-qubit 2QCFA $A$ that recognizes $W_G$, with bounded error, in expected exponential time. Any such $G$ which is not virtually cyclic (i.e., any such $G$ that is neither finite nor virtually $\mathbb{Z}$) has exponential growth, which means that one requires $\log f_G(n) \sim n$ classical bits to unambiguously store an element of $B_{G,S}(n)$. Yet, $A$ still stores an arbitrary element of $B_{G,S}(n)$ using only one qubit. 

The above examples, and more generally all of the 2QCFA that we have constructed for various word problems, demonstrate the extreme sort of information compression that a 2QCFA is capable of performing. On the other hand, this extreme compression does not come without a cost, as it directly impacts the running time of our 2QCFA. Moreover, this cost \textit{cannot} be avoided, as we have proven a corresponding lower bound \cite{remscrim2019lower}.

We note that information compression of this form is by no means a new idea in quantum computing, as techniques like quantum fingerprinting \cite{buhrman2001quantum} and dense quantum coding \cite{ambainis2002dense} explicitly involve such compression, and, moreover, many quantum algorithms, including Shor's quantum factoring algorithm \cite{shor1994algorithms}, crucially rely on this sort of compression to achieve their apparent speedup relative to their classical counterparts. Nevertheless, both the original Ambainis and Watrous 2QCFA result \cite{ambainis2002two} and our approach push this idea down to the much weaker computational model of 2QCFA, and introduce techniques that might also be useful for more powerful quantum models.

\section*{Acknowledgments}

The author would like to express his sincere gratitude to Professor Michael Sipser for many years of mentorship and support, without which this work would not have been possible, as well as to thank Professor David Vogan for a very helpful conversation.

\bibliographystyle{plainurl}
\let\OLDthebibliography\thebibliography
\renewcommand\thebibliography[1]{
	\OLDthebibliography{#1}
	\setlength{\parskip}{0pt}
	\setlength{\itemsep}{0pt plus 0.3ex}
}
\bibliography{qfaLinGroupRef} 

\end{document}